\definecolor{DarkGreen}{rgb}{0.1,0.5,0.1}
\renewcommand*{\backref}[1]{}
\renewcommand*{\backrefalt}[4]{%
    \ifcase #1 (Not cited.)%
    \or        (Cited on page~#2)%
    \else      (Cited on pages~#2)%
    \fi}
\newcommand{\cmark}{\ding{51}}%
\newcommand{\xmark}{\ding{55}}%
\Crefname{property}{Property}{Properties}
\Crefname{example}{Example}{Examples}
\Crefname{table}{Table}{Tables}
\newcommand*{\tikzmk}[1]{\tikz[remember picture,overlay,] \node (#1) {};\ignorespaces}
\newcommand{\boxit}[1]{\tikz[remember picture,overlay]{\node[yshift=3pt,xshift=4pt,fill=#1,opacity=.25,fit={(A)($(B)+(1.0\linewidth,.8\baselineskip)$)}] {};}\ignorespaces}
\colorlet{mygray}{gray!40}
\renewcommand{\paragraph}{%
  \@startsection{paragraph}{4}%
  {\z@}{1.0ex \@plus 1ex \@minus .2ex}{-1em}%
  {\normalfont\normalsize\bfseries}%
}
\let\oldnl\nl
\newcommand{\nonl}{\renewcommand{\nl}{\let\nl\oldnl}}
\theoremstyle{definition}
\newtheorem{assumption}{Assumption}
\newenvironment{example}{\pushQED{\qed}\examplex}{\popQED\endexamplex}
\theoremstyle{remark}
\newtheorem{remark}{Remark}
\newcommand{\AlgEQonePO}{\textup{\textsc{Alg-eq\oldstylenums{1}+po}}}
\newcommand{\bin}{\texttt{\textup{bin}}}
\newcommand{\C}{\mathcal{C}}
\newcommand{\e}{\mathbf{e}}
\newcommand{\E}{\mathcal{E}}
\newcommand{\EF}[1]{\ifstrempty{#1}{\textrm{\textup{EF}}}
{\textrm{\textup{EF{#1}}}}}
\newcommand{\EFx}{\textrm{\textup{EFx}}}
\newcommand{\eps}{{\varepsilon}}
\newcommand{\EQ}[1]{\ifstrempty{#1}{\textrm{\textup{EQ}}}
{\textrm{\textup{EQ{#1}}}}}
\newcommand{\EQx}{\textrm{\textup{EQx}}}
\newcommand{\fPO}{\textrm{\textup{fPO}}}
\newcommand{\id}{\texttt{\textup{id}}}
\newcommand{\indeg}{\texttt{\textup{indeg}}}
\newcommand{\I}{\mathcal{I}}
\newcommand{\level}{{\mathrm{level}}}
\newcommand{\Leximin}{\textrm{\textup{Leximin}}}
\newcommand{\M}{{\mathcal{M}}}
\newcommand{\MBB}{\textrm{\textup{MBB}}}
\newcommand{\N}{{\mathbb{N}}}
\newcommand{\NPc}{\textrm{\textup{NP-c}}}
\newcommand{\NPC}{\textrm{\textup{NP-complete}}}
\newcommand{\NPh}{\textrm{\textup{NP-h}}}
\newcommand{\NPH}{\textrm{\textup{NP-hard}}}
\newcommand{\NPhard}{\textup{NP-hard}}
\newcommand{\NW}{\textrm{\textup{NSW}}}
\newcommand{\NSW}{\textrm{\textup{NSW}}}
\renewcommand{\O}{{\mathcal{O}}}
\newcommand{\OPT}{\textrm{\textup{OPT}}}
\newcommand{\outdeg}{\texttt{\textup{outdeg}}}
\newcommand{\p}{\mathbf p}
\renewcommand{\P}{\mathcal{P}}
\newcommand{\Polytime}{\textup{Poly-time}}
\newcommand{\Partition}{\textrm{\textsc{Partition}}}
\newcommand{\poly}{\textrm{\textup{poly}}}
\newcommand{\PO}{\textrm{\textup{PO}}}
\newcommand{\pos}{\texttt{\textup{pos}}}
\newcommand{\Prop}[1]{\ifstrempty{#1}{\textrm{\textup{Prop}}}
{\textrm{\textup{Prop{#1}}}}}
\newcommand{\R}{\mathcal{R}}
\newcommand{\SantaClaus}{\textrm{\textsc{Santa Claus}}}
\newcommand{\ThreePartition}{\textrm{\textsc{3-Partition}}}
\newcommand{\V}{\mathcal{V}}
\newcommand{\W}{\mathcal{W}}
\newcommand{\x}{{\mathrm{x}}}
\newcommand{\y}{{\mathrm{y}}}
\title{Equitable Allocations of Indivisible Goods}
\author[1]{Rupert Freeman}
\author[2]{Sujoy Sikdar}
\author[3]{Rohit Vaish}
\author[4]{Lirong Xia}
\affil[1]{Microsoft Research New York City\\
	{\small\texttt{rupert.freeman@microsoft.com}}}
\affil[2]{Rensselaer Polytechnic Institute\\
	{\small\texttt{sikdas@rpi.edu}}}
\affil[3]{Rensselaer Polytechnic Institute\\ 
	{\small\texttt{vaishr2@rpi.edu}}}
\affil[4]{Rensselaer Polytechnic Institute\\
	{\small\texttt{xial@cs.rpi.edu}}}
\begin{document}

\maketitle

\begin{abstract}
In fair division, \emph{equitability} dictates that each participant receives the same level of utility. In this work, we study equitable allocations of indivisible goods among agents with additive valuations. While prior work has studied (approximate) equitability in isolation, we consider equitability in conjunction with other well-studied notions of fairness and economic efficiency. We show that the Leximin algorithm produces an allocation that satisfies equitability up to any good and Pareto optimality. We also give a novel algorithm that guarantees Pareto optimality and equitability up to one good in pseudopolynomial time. Our experiments on real-world preference data reveal that approximate envy-freeness, approximate equitability, and Pareto optimality can often be achieved simultaneously.
\end{abstract}

\section{Introduction}
\label{sec:Introduction}

We consider fair division problems that require a central planner to divide a set of \emph{goods} among a group of \emph{agents}---each with their own individual preferences over the goods---such that the resulting allocation is fair. How exactly one can certify that an allocation is ``fair'' remains a subject of debate, but the literature suggests two distinct viewpoints. In the first viewpoint, an agent should prefer her bundle of goods to some comparison bundle. The gold standard of fairness here is \emph{envy-freeness}, which says that each agent should prefer her bundle of goods to any other agents' bundle.

In this work, we consider the second viewpoint, in which agents compare their happiness levels, or \emph{utilities}. Here, an allocation is considered fair if the planner is able to make all agents equally well-off. A central fairness notion in this context is \emph{equitability}: An equitable allocation is one where agents derive equal utilities from their assigned shares. Stated differently, an equitable allocation seeks to minimize the disparity between the best-off and the worst-off agents.

Both perspectives have merit, but the practical importance of equitability as a fairness criterion has been highlighted in an experimental study conducted by \citet{HP09envy}. They asked human subjects to deliberate over an assignment of indivisible goods subject to a time limit. It was found that the chosen outcomes were equitable (and Pareto optimal) far more often than they were envy-free. They concluded that equitability is a significant predictor of the perceived fairness of an allocation, often more so than envy-freeness.

Like many other fairness notions, equitability has been traditionally studied for \emph{divisible} goods (also called \emph{cake-cutting}). In this setting, it is known that an equitable allocation always exists \citep{DS61cut,A87splitting}. On the computability side, it is known that no finite procedure can find an (exact) equitable division \citep{PW17lower}, though an $\eps$-equitable division can be computed in a finite number of steps \citep{CP12computability,CP12near}.

For \emph{indivisible} goods, an equitable (\EQ{}) allocation might fail to exist even with two agents and a single good, motivating the need for approximations. To this end, \citet{GMT14near} proposed the notion of \emph{near jealousy-freeness}, under which for any pair of agents, the disparity can be reversed by removing any good from the bundle of the agent with higher utility. We refer to this notion as \emph{equitability up to any good} (\EQx{}) in keeping with the nomenclature for a similar relaxation of envy-freeness \citep{CKM+16unreasonable}. We also study \emph{equitability up to one good} (\EQ1{}), requiring only that inequity can be eliminated by removing \emph{some} good from the higher-utility-agent's bundle. \citet{GMT14near} showed that for additive valuations, an \EQx{} (hence, \EQ1{}) allocation always exists and can be computed in polynomial time. However, they did not study Pareto optimality (\PO{}), a fundamental and often desirable notion of economic efficiency that may still be violated by an (approximately) equitable allocation.

Our work takes a deeper dive into the study of (approximately) equitable allocations of indivisible goods---in conjunction with Pareto optimality as well as other well-studied notions of fairness (envy-freeness and its relaxations)---and considers a host of existence and computational questions. \Cref{tab:Results} provides a comprehensive summary of our results. Some of the highlights are:

\begin{table}[]
\centering
\scriptsize
\begin{tabular}{|cl|c|c|c|c|}
 \hline
 \multicolumn{2}{|c|}{\multirow{2}{*}{\textbf{Guarantee(s)}}} & \multicolumn{2}{c|}{\textbf{Existence Results}} & \multicolumn{2}{c|}{\textbf{Computational Results}}\\
 \cline{3-6}
 & & \emph{general} & \emph{special case} & \emph{general} & \emph{special case} \\
 \hline
  & \EQ{} & \multicolumn{2}{c|}{\xmark{}  even for two agents and one good} & \multicolumn{2}{c|}{strongly \NPc{} even for \id{} (\Cref{prop:IdenticalValuations})} \\
   \cline{3-6}
  & \EQx{} & \multicolumn{2}{c|}{\multirow{2}{*}{\cmark{} (\Cref{prop:EQx_Existence_Computation})}} & \multicolumn{2}{c|}{\multirow{2}{*}{\Polytime{} (\Cref{prop:EQx_Existence_Computation})}} \\
  & \EQ{1} & \multicolumn{2}{c|}{} & \multicolumn{2}{c|}{} \\
  \hline
 \ldelim\{{3}{30pt}[\qquad \ \PO{} +] & \EQ{} & \multicolumn{2}{c|}{\xmark{} even for two agents and one good} &  & \Polytime{} for \bin{} (\Cref{thm:EQ+PO_Polytime_BinaryVals}) \\
    \cline{3-4}
   & \EQx{} & & 
     & \multirow{-2}{*}{strongly \NPh{} (\Cref{rem:EQ1+PO_EQx+PO_strong_NPhardness_GeneralVals})} & \Polytime{} for \bin{} (\Cref{thm:EQ1+EF1+PO_Polytime_BinaryVals}) \\
    \cline{5-5}
  & \EQ{1} & \multirow{-2}{*}{\xmark{} (\Cref{eg:EQ1_PO_Nonexistence_binary_valuations})} & \multirow{-2}{*}{\cmark{} for \pos{} (\Cref{prop:Leximin_EQx+PO_Positive_valuations})} & strongly \NPh{} (\Cref{thm:EQ1+PO_StronglyNPHard_GeneralVals}) &  
  Pseudopoly for \pos{} (\Cref{thm:EQ1+PO_pseudopolynomial}) \\
  \hline
 \ldelim\{{3}{27pt}[\EF{} + \PO{} +] & \EQ{} & \multicolumn{2}{c|}{\multirow{3}{*}{\xmark{} even for two agents and one good}} &  & \Polytime{} for \bin{} (\Cref{rem:EF_EQ_PO_Polytime_BinaryVals}) \\
 \cline{6-6}
  & \EQx{} & \multicolumn{2}{c|}{} &  & \multirow{2}{*}{\NPc{} even for \bin{} (\Cref{rem:EF_EQ1/x_PO_NPHardness_BinaryVals})} \\
  & \EQ{1} & \multicolumn{2}{c|}{} &  & \\
   \cline{6-6}
  \cline{3-4}
 \ldelim\{{3}{33pt}[\EFx{} + \PO{} +] & \EQ{} & \multicolumn{2}{c|}{\multirow{3}{*}{\xmark{} even for \bin{} (\Cref{eg:EQ1_PO_Nonexistence_binary_valuations})}} &  & \Polytime{} for \bin{} (\Cref{rem:EF_EQ_PO_Polytime_BinaryVals}) \\
   \cline{6-6}
  & \EQx{} & \multicolumn{2}{c|}{} &  & \multirow{2}{*}{\Polytime{} for \bin{} (\Cref{thm:EQ1+EF1+PO_Polytime_BinaryVals})} \\
  & \EQ{1} & \multicolumn{2}{c|}{} &  & \\
   \cline{6-6}
 \ldelim\{{3}{32pt}[\EF{1} + \PO{} +] & \EQ{} & \multicolumn{2}{c|}{\multirow{3}{*}{\xmark{} even for \pos{} (\Cref{prop:Nonexistence_EQ1+EF1+PO})}} &  & \Polytime{} for \bin{} (\Cref{rem:EF_EQ_PO_Polytime_BinaryVals}) \\
  \cline{6-6}
  & \EQx{} & \multicolumn{2}{c|}{} &  & \multirow{2}{*}{\Polytime{} for \bin{} (\Cref{thm:EQ1+EF1+PO_Polytime_BinaryVals})} \\
  & \EQ{1} & \multicolumn{2}{c|}{} & \multirow{-9}{*}{strongly \NPh{} (\Cref{cor:EF+EQ+PO_StronglyNPHard_GeneralVals})} & \\
  \hline
\end{tabular}
\caption{Summary of results. For ``Existence Results,'' a \cmark{} denotes guaranteed existence while a \xmark{} indicates that existence might fail for some instance. For ``Computational Results,'' \NPc{}/\NPh{} refers to \NPC{}/\NPH{}. The shorthands \bin{}, \id{}, and \pos{} refer to binary, identical, and strictly positive valuations, respectively.}
\label{tab:Results}
\end{table}

\begin{itemize}
	\item We strengthen the aforementioned result of \citet{GMT14near} to show that an \EQx{} \emph{and} \PO{} allocation always exists for strictly positive valuations~(\Cref{prop:Leximin_EQx+PO_Positive_valuations}). Without the positivity assumption, even an \EQ{1}+\PO{} allocation might fail to exist~(\Cref{eg:EQ1_PO_Nonexistence_binary_valuations}), and finding an \EQ{}+\PO{}/\EQx{}+\PO{}/\EQ{1}+\PO{} allocation becomes strongly \NPH{}~(\Cref{thm:EQ1+PO_StronglyNPHard_GeneralVals,rem:EQ1+PO_EQx+PO_strong_NPhardness_GeneralVals}).

	\item As a step towards making the above existence result constructive, we design a pseudopolynomial-time algorithm that always returns an \EQ{1}+\PO{} allocation for strictly positive valuations~(\Cref{thm:EQ1+PO_pseudopolynomial}).

	\item We construct an instance in which no allocation can be \EQ{1}+\EF{1}+\PO{}~(\Cref{prop:Nonexistence_EQ1+EF1+PO}).\footnote{\EF{1} stands for \emph{envy-freeness up to one good}, which is a (necessary) relaxation of envy-freeness defined for indivisible goods; see \Cref{sec:Preliminaries} for the relevant definitions.} We show that determining whether such an allocation exists is, in general, strongly \NPhard{}~(\Cref{cor:EF+EQ+PO_StronglyNPHard_GeneralVals}), but the special case of binary valuations is efficiently solvable~(\Cref{thm:EQ1+EF1+PO_Polytime_BinaryVals}).
	
	\item We validate our theoretical results via experiments on the data from the popular fair division website \emph{Spliddit}\footnote{\url{http://www.spliddit.org/}} as well as on synthetically generated instances~(\Cref{sec:Experiments}).
\end{itemize}

\paragraph{Related Work}

For \emph{divisible} goods (i.e., cake-cutting), \citet{DS61cut} showed that an equitable division always exists (without providing a bound on the number of cuts). Subsequent work has established the existence of equitable divisions where each agent gets a contiguous piece~\citep{CDP13existence,AD15efficiency,C17existence}.

Equitability has also been studied in combination with other fairness and efficiency notions. It is known that there always exists a cake division that is simultaneously equitable and envy-free~\citep{A87splitting}. However, existence might fail if, in addition, one also requires Pareto optimality~\citep{BJK13n} or contiguous pieces~\citep{BJK06better}. Connections between Pareto optimality and social welfare maximizing equitable divisions have also been studied~\citep{BFL+12maxsum}.

For \emph{indivisible} goods, in addition to the work of~\citet{GMT14near} discussed above, \citet{S19fairly} studies equitable and connected allocations of indivisible goods (i.e., when the goods constitute the vertices of a graph and a feasible allocation assigns every agent a connected subgraph).

\section{Preliminaries}
\label{sec:Preliminaries}

\paragraph{Problem instance}
An \emph{instance} $\langle [n], [m], \V \rangle$ of the fair division problem is defined by a set of $n \in \N$ \emph{agents} $[n] = \{1,2,\dots,n\}$, a set of $m \in \N$ \emph{goods} $[m] = \{1,2,\dots,m\}$, and a \emph{valuation profile} $\V = \{v_1,v_2,\dots,v_n\}$ that specifies the preferences of every agent $i \in [n]$ over each subset of the goods in $[m]$ via a \emph{valuation function} $v_i: 2^{[m]} \rightarrow \N \cup \{0\}$.\footnote{The assumption about integrality of valuations is required only for \Cref{thm:EQ1+PO_pseudopolynomial}. All other positive results (i.e., existence and algorithmic results) hold even in the absence of this assumption. Similarly, all negative results (i.e., non-existence and hardness results) hold even if the valuations are restricted to be integral.} We will assume that the valuation functions are \emph{additive}, i.e., for any agent $i \in [n]$ and any set of goods $S \subseteq [m]$, $v_i(S) \coloneqq \sum_{j \in S} v_i(\{j\})$, where $v_i(\emptyset) = 0$. For a singleton good $j \in [m]$, we will write $v_{i,j}$ instead of $v_i(\{j\})$.

\paragraph{Allocation}
An \emph{allocation} $A \coloneqq (A_1,\dots,A_n)$ is an $n$-partition of the set of goods $[m]$, where $A_i \subseteq [m]$ is the \emph{bundle} allocated to the agent $i$ ($A_i$ is allowed to be an empty set). Given an allocation $A$, the \emph{utility} of agent $i \in [n]$ for the bundle $A_i$ is $v_i(A_i) = \sum_{j \in A_i} v_{i,j}$.

\paragraph{Equitable allocations}
An allocation $A$ is said to be \emph{equitable} (\EQ) if for every pair of agents $i,k \in [n]$, we have $v_i(A_i) = v_k(A_k)$. An allocation $A$ is \emph{equitable up to one good} (\EQ1) if for every pair of agents $i,k \in [n]$ such that $A_k \neq \emptyset$, there exists some good $j \in A_k$ such that $v_i(A_i) \geq v_k(A_k \setminus \{j\})$. An allocation $A$ is \emph{equitable up to any good} (\EQx) if for every pair of agents $i,k \in [n]$ such that $A_k \neq \emptyset$ and for every good $j \in A_k$ such that $v_{k,j} > 0$, we have $v_i(A_i) \geq v_k(A_k \setminus \{j\})$.\footnote{Our results hold analogously for the following variant of \EQx{} due to \citet{GMT14near}: For every pair of agents $i,k \in [n]$ such that $A_k \neq \emptyset$, $v_i(A_i) \geq v_k(A_k \setminus \{j\})$ for every good $j \in A_k$.}

\paragraph{Envy-free allocations}
An allocation $A$ is \emph{envy-free} (\EF{}) if for every pair of agents $i,k \in [n]$, we have $v_i(A_i) \geq v_i(A_k)$. An allocation $A$ is \emph{envy-free up to one good} (\EF{1}) if for every pair of agents $i,k \in [n]$ such that $A_k \neq \emptyset$, there exists some good $j \in A_k$ such that $v_i(A_i) \geq v_i(A_k \setminus \{j\})$. An allocation $A$ is \emph{envy-free up to any good} (\EFx) if for every pair of agents $i,k \in [n]$ such that $A_k \neq \emptyset$ and for every good $j \in A_k$ such that $v_{i,j} > 0$, we have $v_i(A_i) \geq v_i(A_k \setminus \{j\})$. 
The notions of \EF{}, \EF{1}, and \EFx{} are due to \citet{F67resource}, \citet{B11combinatorial},\footnote{
\citet{LMM+04approximately} previously defined a slightly weaker notion than \EF{1}, but their algorithm can, in fact, compute an \EF{1} allocation.} and \citet{CKM+16unreasonable}, respectively.

\paragraph{Pareto optimality}
An allocation $A$ is Pareto dominated by another allocation $B$ if $v_k(B_k) \geq v_k(A_k)$ for every agent $k \in [n]$ with at least one of the inequalities being strict. A \emph{Pareto optimal} (\PO{}) allocation is one that is not Pareto dominated by any other allocation. 

\paragraph{Nash social welfare}
Given an instance $\langle [n], [m], \V \rangle$, the \emph{Nash social welfare} of an allocation $A$ is defined as $\NW(A) \coloneqq \left( \prod_{i \in [n]} v_i(A_i) \right)^{1/n}$. An allocation $A^*$ is called  \emph{Nash optimal} or \emph{MNW} (Maximum Nash Welfare) if it maximizes the Nash social welfare among all allocations.\footnote{\citet{CKM+16unreasonable} define a Nash optimal allocation as one that provides positive utility to the largest set of agents, and subject to that, maximizes the geometric mean of valuations. Our results hold even under this extended definition.}

\paragraph{\Leximin -optimal allocations}
A \Leximin -optimal allocation \citep{DS61cut} is one that maximizes the minimum utility that any agent achieves, subject to which the second-minimum utility is maximized, and so on. The utilities induced by a \Leximin -optimal allocation are unique, although there may exist more than one such allocation.

\section{Results}
\label{sec:Results}

This section presents our theoretical results, summarized in Table~\ref{tab:Results}. We first consider equitability and its relaxations, then consider them in conjunction with Pareto optimality, before finally adding envy-freeness (and its relaxations) to the mix.

\subsection{Existence and Computation of \EQ{}, \EQ{1}, \EQx{}}
We will start by observing that envy-freeness and equitability  (and their corresponding relaxations) become equivalent when the valuations are \emph{identical} (i.e., when, for every good $j \in [m]$, $v_{i,j} = v_{k,j}$ for all $i,k \in [n]$).

\begin{restatable}{prop}{IdenticalValuations}
 \label{prop:IdenticalValuations}
For identical valuations, an allocation is \EF{}/\EF{1}/\EFx{} if and only if it is \EQ{}/\EQ{1}/\EQx{}.
\end{restatable}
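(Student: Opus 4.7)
The plan is to observe that identical valuations collapse every quantity of the form $v_i(S)$ into a single agent-independent function $v(S)$, so that the three fairness inequalities on the envy-freeness side become the same inequalities that appear on the equitability side. No clever construction is needed; the proof is a direct unfolding of the six definitions in parallel.

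First I would set $v_j := v_{i,j}$ (well defined because $v_{i,j} = v_{k,j}$ for all $i,k$), and $v(S) := \sum_{j \in S} v_j$, so that for every agent $i$ and every bundle $S$, $v_i(S) = v(S)$. Then I would handle the three pairs separately. For \EF{} versus \EQ{}: under identical valuations, $v_i(A_i) \geq v_i(A_k)$ becomes $v(A_i) \geq v(A_k)$, and requiring this for all ordered pairs $(i,k)$ is the same as requiring $v(A_i) = v(A_k)$ for all pairs, which is exactly \EQ{}. Conversely, \EQ{} gives $v(A_i) = v(A_k)$, so $v_i(A_i) = v(A_i) = v(A_k) = v_i(A_k)$, hence \EF{}.

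For \EF{1} versus \EQ{1}: in both cases the inequality to verify, after substituting $v_i \mapsto v$ and $v_k \mapsto v$, becomes $v(A_i) \geq v(A_k \setminus \{j\})$ for some $j \in A_k$. The quantifier structure (``there exists $j \in A_k$'') is identical, so the two conditions are literally the same. For \EFx{} versus \EQx{}, the same substitution reduces the inequality in both definitions to $v(A_i) \geq v(A_k \setminus \{j\})$ for every $j \in A_k$ with $v_j > 0$. Here the only subtle point is to note that the side conditions ``$v_{i,j} > 0$'' (in \EFx{}) and ``$v_{k,j} > 0$'' (in \EQx{}) are both equivalent to ``$v_j > 0$'' under identical valuations, so they quantify over the same set of goods.

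I do not anticipate a genuine obstacle; the main thing to be careful about is simply keeping the pair $(i,k)$ and the identity of the good $j$ straight so that the side condition on positive value in \EFx{}/\EQx{} is matched correctly. I would write the three equivalences as three short paragraphs (or a single display with three bulleted sub-statements) and conclude.
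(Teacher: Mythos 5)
Your proof is correct and matches the paper's intent exactly: the paper states this as an immediate observation (no written proof is given), and the intended argument is precisely the definitional unfolding you carry out, including the point that the side conditions $v_{i,j}>0$ in \EFx{} and $v_{k,j}>0$ in \EQx{} coincide under identical valuations. Nothing further is needed.
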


It is known that determining whether a given instance has an envy-free (\EF{}) allocation is \NPC{} even for identical valuations (via a straightforward reduction from \Partition{}) \citep{LMM+04approximately}.\footnote{In fact, the problem is strongly \NPC{} due to a similar reduction from \ThreePartition{} \citep{GJ79computers}.} \Cref{prop:IdenticalValuations} implies that the same holds for equitable (\EQ{}) allocations. By contrast, an \EQx{} (and therefore \EQ{1}) allocation always exists and can be efficiently computed (\Cref{prop:EQx_Existence_Computation}) even for non-identical valuations. This result is due to \citet{GMT14near}, who showed the existence of \EQx{} allocations under the more general setting of matroids.

\begin{restatable}[\citealp{GMT14near}]{prop}{EQxExistenceComputation}
 \label{prop:EQx_Existence_Computation}
 An \EQx{} allocation always exists and can be computed in polynomial time.
\end{restatable}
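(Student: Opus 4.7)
The plan is to construct an explicit greedy algorithm that returns an \EQx{} allocation, and then verify both correctness and polynomial running time directly. The algorithm proceeds in $m$ rounds; in each round, it identifies the agent $k$ currently holding the minimum utility (breaking ties arbitrarily) and allocates to $k$ their \emph{most}-valued good from the pool of unallocated goods (again breaking ties arbitrarily). Each round requires $O(n+m)$ work, so the overall runtime is $O(m(n+m))$.

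For correctness, first fix an agent $k$ and enumerate the goods received by $k$ in the order they were picked as $j_1, j_2, \ldots, j_\ell$. The key monotonicity observation is that
\[
v_{k, j_1} \geq v_{k, j_2} \geq \cdots \geq v_{k, j_\ell},
\]
because at the round in which $k$ picked $j_t$, every later pick $j_{t'}$ (with $t' > t$) was still in the unallocated pool and was deemed no more valuable to $k$ than $j_t$. In particular, the last pick $j_\ell$ attains the minimum value among all goods in $A_k$. Now fix any pair of agents $i, k$ with $A_k \neq \emptyset$ and any $j \in A_k$ with $v_{k, j} > 0$. At the moment $k$ was awarded $j_\ell$, $k$'s then-current utility $v_k(A_k \setminus \{j_\ell\})$ was the minimum across all agents; hence it is at most $v_i$'s utility at that moment, which is itself at most $v_i(A_i)$ since each agent's utility is nondecreasing over time. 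Combined with $v_{k, j_\ell} \leq v_{k, j}$ from monotonicity,
\[
v_k(A_k \setminus \{j\}) \;=\; v_k(A_k \setminus \{j_\ell\}) + v_{k, j_\ell} - v_{k, j} \;\leq\; v_i(A_i),
\]
which is exactly the \EQx{} inequality.

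The main subtlety---and the reason a naive ``give an arbitrary good to the min-utility agent'' scheme can violate \EQx{}---is the insistence that the selected agent take their \emph{most}-valued remaining good. That rule is precisely what drives the monotone decrease of $v_{k, j_1}, \ldots, v_{k, j_\ell}$, and thereby lets us reduce the \EQx{} condition at an arbitrary good $j \in A_k$ to the easier-to-control condition at the final pick $j_\ell$. Establishing this monotonicity and the telescoping identity above are the two load-bearing steps of the argument.
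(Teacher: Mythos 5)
Your proposal is correct and uses essentially the same approach as the paper: the identical greedy algorithm (give the currently least-happy agent its favorite remaining good), with correctness resting on the same two observations that the last good an agent receives is its least-valued good in its bundle and that the agent was the minimum-utility agent at the moment of that assignment. The only difference is presentational---you give a direct telescoping argument at the end of the run where the paper's sketch phrases it as an induction over rounds---and both are sound.
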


Briefly, \citet{GMT14near} prove \Cref{prop:EQx_Existence_Computation} using a greedy algorithm. In each round, the algorithm assigns a least-happy agent its favorite good from among the remaining goods. Thus, at any stage, the most recent good assigned to an agent is also its least-favorite good in its own bundle. Since each new good is assigned to an agent with the least utility, an allocation that is \EQx{} prior to the assignment continues to be so after it (up to the removal of the most recently assigned good). The claim now follows by induction over the rounds.

\Cref{prop:EQx_Existence_Computation} presents an interesting contrast between the notions of \EQx{} and \EFx{}: An \EQx{} allocation is guaranteed to exist and can be efficiently computed, whereas for \EFx{}, even the question of guaranteed existence is an open problem.

\subsection{Equitability and Pareto Optimality}

We now turn our attention to computing an allocation that is both equitable up to one good and Pareto optimal (we use the shorthand \EQ{1}+\PO{} for such allocations). Unfortunately, such allocations might fail to exist when the valuations are allowed to be zero-valued (\Cref{eg:EQ1_PO_Nonexistence_binary_valuations}). This provides an interesting contrast with the analogous relaxation of envy-freeness; it is known that an allocation satisfying \EF{1} and \PO{} always exists \citep{CKM+16unreasonable,BKV18Finding}.

\begin{example}[\textbf{Non-existence of \EQ{1}+\PO{}}]
Consider an instance with three agents $a_1,a_2,a_3$ and six goods $g_1,\dots,g_6$. The goods $g_1,g_2,g_3$ are valued at $1$ by $a_1$ and at $0$ by $a_2$ and $a_3$. The goods $g_4,g_5,g_6$ are valued at $1$ by $a_2$ and $a_3$ and at $0$ by $a_1$. Any \PO{} allocation must assign $g_1,g_2,g_3$ to $a_1$ (giving it a utility of $3$) and allocate $g_4,g_5,g_6$ between $a_2$ and $a_3$. Either $a_2$ or $a_3$ receives at most one good, creating an \EQ{1} violation with $a_1$. Thus, an \EQ{1} and \PO{} allocation might fail to exist even under \emph{binary} valuations.
\label{eg:EQ1_PO_Nonexistence_binary_valuations}
\end{example}

Worse still, when the valuations can be zero-valued, determining whether there exists an \EQ{1}+\PO{} allocation is strongly \NPH{}. Similar hardness results hold for \EQx{}+\PO{} and \EQ{}+\PO{} allocations as well (\Cref{rem:EQ1+PO_EQx+PO_strong_NPhardness_GeneralVals}).

\begin{restatable}[\textbf{Hardness of \EQ{1} + \PO{}}]{theorem}{EQonePOStrongNPHardnessGeneralVals}
 \label{thm:EQ1+PO_StronglyNPHard_GeneralVals}
 Given any fair division instance with additive valuations, determining whether there exists an allocation that is equitable up to one good $(\EQ{1})$ and Pareto optimal $(\PO{})$ is strongly \NPhard{}.
\end{restatable}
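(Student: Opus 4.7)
The plan is to reduce from \ThreePartition{}, which is strongly \NPhard{}. Given a \ThreePartition{} instance with positive integers $s_1,\ldots,s_{3m}$ summing to $mT$ and satisfying $T/4 < s_j < T/2$ for every $j$, I construct a fair division instance with $m+1$ agents and $3m+2$ goods: $m$ ``partition agents'' $a_1,\ldots,a_m$, one ``anchor agent'' $a_0$, $3m$ ``partition items'' $g_1,\ldots,g_{3m}$, and two ``anchor items'' $h_1,h_2$. Each partition agent $a_i$ values $g_j$ at $s_j$ and each anchor item at $0$; the anchor $a_0$ values every partition item at $0$ and each anchor item at $T$.

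The first step is to observe that \PO{} forces a rigid block structure. Any anchor item held by some partition agent could be reassigned to $a_0$ for a strict Pareto improvement (the anchor gains $T$ while nobody loses), and symmetrically any partition item held by $a_0$ could be handed to an arbitrary partition agent for a strict improvement. Hence, in every \PO{} allocation, $A_0=\{h_1,h_2\}$, giving $v_0(A_0)=2T$ and $\sum_{i=1}^m v_i(A_i)=mT$.

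The second step combines this block structure with \EQ{1}. For each partition agent $a_i$, compare with $a_0$. If $v_i(A_i)\le 2T$, then \EQ{1} requires $v_i(A_i)\ge 2T-T=T$ (obtained by removing one anchor item from $A_0$). If $v_i(A_i)>2T$, then \EQ{1} requires some $g\in A_i$ with $v_{i,g}\ge v_i(A_i)-2T$; since every $s_j<T/2$, this forces $v_i(A_i)<5T/2$. Either way $v_i(A_i)\ge T$, and summing gives $mT=\sum_{i=1}^m v_i(A_i)\ge mT$, so $v_i(A_i)=T$ exactly for every $i$. Because each $s_j\in(T/4,T/2)$, the only way a subset of partition items sums to $T$ is with exactly three items, so $(A_1,\ldots,A_m)$ constitutes a valid 3-partition.

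The converse direction is immediate: any valid 3-partition induces an allocation that is \PO{} (by the block structure), exactly equitable among the partition agents, and \EQ{1} with $a_0$ by the computations above. The main subtlety I expect to confront is calibrating the number of anchor items: with a single anchor item the \EQ{1} lower bound collapses to $v_i(A_i)\ge 0$, which is too weak to force equality, while with three or more the required lower bound $(k-1)T$ per partition agent exceeds the available total $mT$, so $k=2$ is the unique choice that squeezes all partition-agent utilities to exactly $T$. Strong \NPhard{}ness follows since the construction is polynomial in the unary encoding of the \ThreePartition{} input.
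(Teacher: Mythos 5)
Your proposal is correct and is essentially the same reduction as the paper's: the same $m+1$ agents, the same two anchor goods valued at $T$ each by a dedicated agent, with \PO{} forcing the block structure and \EQ{1} against the anchor agent forcing every partition agent's utility to be at least $T$, hence exactly $T$ by the counting argument. The only cosmetic differences are that the paper uses the variant of \ThreePartition{} without the size-three constraint (so it need not invoke the $T/4 < s_j < T/2$ bounds) and that your extra case $v_i(A_i) > 2T$ is not needed, since the pair $(a_i, a_0)$ already yields $v_i(A_i) \geq T$ unconditionally.
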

\begin{proof}
We will show a reduction from \ThreePartition{}, which is known to be strongly \NPhard{} \citep{GJ79computers}. An instance of \ThreePartition{} consists of a set of $3r$ numbers $S = \{b_1,\dots,b_{3r}\}$  where $r \in \N$, and the goal is to find a partition of $S$ into $r$ subsets $S_1,\dots,S_r$ such that the sum of numbers in each subset is $T$, where $T \coloneqq \frac{1}{r} \sum_{a_i \in S} b_i$.\footnote{Note that we do not require $S_1,\dots,S_r$ to be of size three each; \ThreePartition{} remains strongly \NPhard{} even without this constraint.}

We will construct a fair division instance as follows: There are $r+1$ agents $a_1,\dots,a_{r+1}$ and $3r+2$ goods $g_1,\dots,g_{3r+2}$. For every $i \in [r]$ and $j \in [3r]$, agent $a_i$ values the good $g_j$ at $b_j$. The agents $a_1,\dots,a_r$ all value the goods $g_{3r+1}$ and $g_{3r+2}$ at $0$. Finally, the agent $a_{r+1}$ values $g_{3r+1}$ and $g_{3r+2}$ at $T$ each, and all other goods at $0$.

$(\Rightarrow)$ Suppose $S_1,\dots,S_r$ is a solution of \ThreePartition{}. Then, an \EQ{1} and \PO{} allocation $A = (A_1,\dots,A_{r+1})$ can be constructed as follows: For every $i \in [r]$, $A_i \coloneqq \{g_j : b_j \in S_i\}$, and $A_{r+1} \coloneqq \{g_{3r+1},g_{3r+2}\}$. Notice that $A$ is \EQ{1} because each of the agents $a_1,\dots,a_{r}$ has utility $T$, and the utility of the agent $a_{r+1}$ exceeds $T$ only by a single good $g_{3r+2}$. Furthermore, $A$ is \PO{} because each good is assigned to an agent with the highest valuation for it.

$(\Leftarrow)$ Now suppose that $A = (A_1,\dots,A_{r+1})$ is an \EQ{1} and \PO{} allocation. Since $A$ is \PO{}, it must assign $g_{3r+1}$ and $g_{3r+2}$ to $a_{r+1}$. Furthermore, since $A$ is \EQ{1}, each of the agents $a_1,\dots,a_{r}$ should have a utility of at least $T$ under $A$, i.e., for every $i \in [r]$, $v_i(A_i) \geq v_{r+1}(A_{r+1} \setminus \{g_{3r+2}\}) = T$. This induces a solution of the \ThreePartition{} instance.
\end{proof}

\begin{remark}[\textbf{Hardness of \EQx{}+\PO{}/\EQ{}+\PO{}}]
The reduction in \Cref{thm:EQ1+PO_StronglyNPHard_GeneralVals} can also be used to prove strong \NPhard{}ness of finding an \EQ{x}+\PO{} allocation (same construction works) or an \EQ{}+\PO{} allocation (if $a_{r+1}$ values $g_{3r+2}$ at $0$).
\label{rem:EQ1+PO_EQx+PO_strong_NPhardness_GeneralVals}
\end{remark}

Our next result shows that for the special case of \emph{binary} valuations (i.e., for all $i \in [n], j \in [m]$, $v_{i,j} \in \{0,1\}$), an \EQ{}+\PO{} allocation, if it exists, can be computed in polynomial time. Later, we will show similar tractability results for \EQ{1}+\PO{} and \EQx{}+\PO{} allocations (\Cref{thm:EQ1+EF1+PO_Polytime_BinaryVals}).

\begin{restatable}[\textbf{Algorithm for \EQ{}+\PO{} for binary valuations}]{theorem}{EQPOPolytimeBinaryVals}
 \label{thm:EQ+PO_Polytime_BinaryVals}
 There is a polynomial-time algorithm that given as input any fair division instance with additive and binary valuations, returns an allocation that is equitable $(\EQ{})$ and Pareto optimal $(\PO{})$ whenever such an allocation exists.
\end{restatable}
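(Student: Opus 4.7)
The plan is to exploit a clean structural characterization of Pareto optimality that is available in the binary-valuation setting. I would first show that an allocation $A$ is \PO{} if and only if it is \emph{non-wasteful}: every good $j \in A_i$ satisfies either $v_{i,j}=1$, or $v_{k,j}=0$ for every $k \in [n]$. The ``only if'' direction is immediate by contrapositive: a wasted good $j \in A_i$ (with $v_{i,j}=0$ but $v_{k,j}=1$ for some $k$) can be handed from $i$ to $k$, strictly increasing $v_k(A_k)$ without hurting anyone else, giving a Pareto improvement. For the ``if'' direction, note that in any non-wasteful allocation $v_i(A_i)$ equals the number of \emph{useful} goods in $A_i$ (a good $j$ is useful when some agent values it); hence $\sum_i v_i(A_i) = |U|$, where $U$ denotes the set of useful goods. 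Since $|U|$ is the maximum possible utilitarian welfare over \emph{all} allocations, any Pareto improvement would have to preserve this sum and hence every agent's utility, precluding a strict improvement.

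Combining this characterization with equitability pins down the target utility exactly. In any \EQ{}+\PO{} allocation, every useful good is given to an agent who values it, so each $v_i(A_i)$ equals the number of useful goods in $A_i$; setting all these equal to a common value $u$ forces $nu = |U|$, so $u = |U|/n$ must be a non-negative integer. The remaining ``useless'' goods (those valued $0$ by everyone) never contribute to utility and can be distributed arbitrarily without affecting either \EQ{} or \PO{}.

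The algorithmic question therefore collapses to a bipartite $b$-matching feasibility check. I would build a bipartite graph with agents on one side, the useful goods $U$ on the other, and an edge $(i,j)$ whenever $v_{i,j}=1$; require each agent to have degree exactly $u$ and each good to have degree exactly $1$. Such a $b$-matching can be found (or ruled out) in polynomial time by a single max-flow computation. If it exists, output the induced allocation (appending all useless goods to any bundle), which is \EQ{} by construction and \PO{} by the characterization; otherwise, either $n \nmid |U|$ or the $b$-matching is infeasible, and by the necessity of both conditions from the reduction above, no \EQ{}+\PO{} allocation can exist. The main insight, and where I anticipate the bulk of the proof's work, is establishing the non-wastefulness characterization of \PO{} for binary valuations; once this is in hand, everything else follows from a routine reduction to max-flow and a divisibility check on $|U|/n$.
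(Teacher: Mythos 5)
Your proposal is correct and follows essentially the same route as the paper: both characterize Pareto optimality under binary valuations as non-wastefulness and reduce the search for an \EQ{}+\PO{} allocation to a max-flow feasibility check on the agent--good approval graph. The only differences are minor refinements on your part: you pin down the unique candidate common utility as $|U|/n$ via a counting argument rather than iterating over all values $c \in \{1,\dots,\lfloor m/n \rfloor\}$, and you explicitly handle goods valued at zero by every agent, an edge case that the paper's one-line statement of the \PO{} characterization glosses over.
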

\begin{proof}
We will use a maximum flow algorithm. For binary valuations, an allocation is \PO{} if and only if it assigns each good to an agent that approves it. For an \EQ{} allocation $A$, we have $v_i(A_i) = v_k(A_k) = c$ (say) for every $i,k \in [n]$. Consider a bipartite graph $G = ([n] \cup [m],E)$ over the set of agents and goods with an edge $(i,j) \in E$ for every $i \in [n]$ and $j \in [m]$ such that $v_{i,j}=1$. For any fixed $c \in \N$, construct a flow network where the source node $S$ is connected to each agent node in $[n]$ with an edge of capacity $c$. Each node corresponding to a good in $[m]$ is connected to the sink node $T$ with an edge of capacity $1$. The edges between agents and goods are of capacity $1$. It is straightforward to check that there exists an \EQ{}+\PO{} allocation in the fair division instance (with common utility $c$) if and only if the above network admits a feasible flow of value $n \cdot c$. The desired algorithm simply iterates over all integral values of $c$ between $1$ and $\lfloor m/n \rfloor$.
\end{proof}

On the other hand, when all valuations are \emph{strictly positive} (i.e., $v_{i,j} > 0$ for all $i,j$), there always exists an allocation that is both equitable up to \emph{any} good and Pareto optimal.

\begin{restatable}[\textbf{Existence of \EQx{}+\PO{} for positive valuations}]{prop}{LeximinEQ1PO}
 \label{prop:Leximin_EQx+PO_Positive_valuations}
Given any fair division instance with additive and strictly positive valuations, an allocation that is equitable up to any good $(\EQx{})$ and Pareto optimal $(\PO{})$ always exists.
\end{restatable}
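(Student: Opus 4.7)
My plan is to prove the proposition by showing that every \Leximin{}-optimal allocation $A$ is simultaneously \EQx{} and \PO{}; since a \Leximin{}-optimal allocation always exists, this yields the claim. Pareto optimality is essentially immediate: if some $B$ Pareto-dominated $A$, then after sorting the two utility profiles in nondecreasing order, $B$'s sorted vector would componentwise dominate $A$'s and strictly exceed it in at least one coordinate, making $B$ \Leximin{}-better than $A$ and contradicting \Leximin{}-optimality of $A$.

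The substance lies in establishing \EQx{}. I would argue by contradiction: suppose there exist agents $i, k$ and a good $j \in A_k$ with $v_{k,j} > 0$ such that $v_i(A_i) < v_k(A_k) - v_{k,j}$, and among all such witnesses pick one that minimizes $v_i(A_i)$. The first step is to show that this minimum equals $u_{\min} \coloneqq \min_{\ell} v_{\ell}(A_{\ell})$: if $v_i(A_i) > u_{\min}$, then any $\ell$ with $v_{\ell}(A_{\ell}) = u_{\min}$ yields a strictly smaller witness $(\ell, k, j)$---note that $\ell \ne k$, since the assumed violation forces $v_k(A_k) > v_i(A_i) > u_{\min}$---contradicting the choice of witness. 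So agent $i$ must be a minimum-utility agent.

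Next, I would transfer good $j$ from $k$ to $i$ to obtain a new allocation $A'$. Strict positivity guarantees $v_{i,j} > 0$, so $v_i(A'_i) > v_i(A_i) = u_{\min}$; the assumed violation gives $v_k(A'_k) = v_k(A_k) - v_{k,j} > u_{\min}$; every other agent's utility is unchanged and already at least $u_{\min}$. Hence the set of agents attaining exactly $u_{\min}$ shrinks by exactly one (agent $i$ leaves it, nobody joins). Comparing the sorted utility vectors of $A$ and $A'$ coordinatewise, they agree on the first $|I| - 1$ coordinates (all equal to $u_{\min}$, where $I$ is the set of $u_{\min}$-attaining agents in $A$), while at coordinate $|I|$ the vector for $A'$ is strictly larger than $u_{\min}$. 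This contradicts \Leximin{}-optimality of $A$.

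The main obstacle is the minimality step that forces $i$ to sit at the global minimum: without it, a swap might leave another bottom-tier agent stuck at $u_{\min}$ and fail to improve the sorted vector. Strict positivity enters in exactly one place---guaranteeing $v_{i,j} > 0$ so that agent $i$'s post-swap utility strictly exceeds $u_{\min}$---which is precisely why the analogous existence claim fails in the general (non-positive) setting, as \Cref{eg:EQ1_PO_Nonexistence_binary_valuations} already shows.
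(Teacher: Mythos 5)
Your proof is correct and takes essentially the same route as the paper: both show that a \Leximin{}-optimal allocation is \EQx{} and \PO{} by transferring the offending good $j$ from agent $k$ to agent $i$ and exhibiting a Leximin improvement, with strict positivity entering only to guarantee $v_{i,j}>0$. Your additional minimality step (choosing the violating triple with smallest $v_i(A_i)$ so that $i$ attains the global minimum) is absent from the paper's sketch, which asserts the Leximin improvement directly from the fact that both touched agents end up strictly above $v_i(A_i)$ while all others are unchanged; your version just makes the sorted-vector comparison fully explicit.
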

\begin{proof}(Sketch.)
We will show that any \Leximin{}-optimal allocation, say $A$, satisfies \EQx{} (Pareto optimality is easy to verify). Suppose, for contradiction, that there exist agents $i,k \in [n]$ and some good $j \in A_k$ such that $v_i(A_i) < v_k(A_k \setminus \{j\})$. Let $B$ be an allocation derived from $A$ by transferring the good $j$ from agent $k$ to agent $i$. Notice that under $B$, both agents $i$ and $k$ have strictly greater utility than $v_i(A_i)$, while all other agents have exactly the same utility as under $A$. Thus, $B$ is a `Leximin improvement' over $A$, which contradicts that $A$ is \Leximin{}-optimal.
\end{proof}

Although \Cref{prop:Leximin_EQx+PO_Positive_valuations} offers a strong existence result, it does not automatically provide a constructive procedure for finding such allocations. Indeed, computing a \Leximin{}-optimal allocation is known to be intractable \citep{BD05allocating,PR18almost}. Our next result (\Cref{thm:EQ1+PO_pseudopolynomial}) addresses this gap by providing a pseudopolynomial-time algorithm for finding an \EQ{1} and \PO{} allocation when the valuations are strictly positive.

\begin{restatable}[\textbf{Algorithm for \EQ{1}+\PO{} for positive valuations}]{theorem}{EQonePOPseudopolynomial}
 \label{thm:EQ1+PO_pseudopolynomial}
 Given any fair division instance $\I = \langle [n],[m],\V \rangle$ with additive and strictly positive valuations, an allocation that is equitable up to one good $(\EQ{1})$ and Pareto optimal $(\PO{})$ always exists and can be computed in $\O(\poly(m,n,v_{\max}))$ time, where $v_{\max} = \max_{i,j} v_{i,j}$.
\end{restatable}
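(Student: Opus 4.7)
The plan is to give a constructive algorithm whose correctness mirrors the Leximin argument used in Proposition~\ref{prop:Leximin_EQx+PO_Positive_valuations}, but which sidesteps the intractability of exact Leximin by exploiting integrality of valuations. I would work in a Fisher-market framework, maintaining throughout the algorithm a price vector $\p = (p_1,\dots,p_m)$ on goods and an integral allocation $A$ in which every good is assigned to some agent that maximizes bang-per-buck $v_{i,j}/p_j$ over $j$. It is standard that such a Max-Bang-per-Buck (\MBB{}) allocation is fractionally Pareto optimal, hence \PO{}, so \PO{} is preserved as an invariant.

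Initialization would take a welfare-maximizing allocation together with a supporting price vector; this is \PO{} and easy to compute. The main loop then tries to restore \EQ{1}. While some pair $(h,\ell)$ violates \EQ{1}, i.e.\ $v_h(A_h) - \max_{j\in A_h}v_{h,j} > v_\ell(A_\ell)$, I would attempt a \emph{transfer phase}: search in the \MBB{} graph for an alternating path from $\ell$ back to $h$ along which one good can be moved toward $\ell$ without breaking the \MBB{} invariant, and perform the transfer. If no such path exists, invoke a \emph{price-rise phase}: raise the prices of goods in the \MBB{}-reachable component of $\ell$ by the smallest multiplicative factor that introduces a new \MBB{} edge crossing into that component, then retry. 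Strict positivity of valuations is what makes this work: every agent has positive MBB ratios for every good, so the transfers and price-rises cannot ``strand'' an agent with zero utility as in Example~\ref{eg:EQ1_PO_Nonexistence_binary_valuations}.

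For correctness I would argue that (i) every transfer preserves \PO{} because it stays inside the \MBB{} graph, and either reduces the utility gap or strictly improves the sorted utility vector in lex order; (ii) every price-rise preserves \PO{} by construction; and (iii) when the main loop halts it halts with \EQ{1}, essentially by the Leximin swap argument of Proposition~\ref{prop:Leximin_EQx+PO_Positive_valuations}---a remaining \EQ{1} violation would certify a strict Leximin improvement via a good-transfer, contradicting termination.

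The main obstacle is the pseudopolynomial running-time bound, which I would handle in the style of Barman--Krishnamurthy--Vaish (2018). The key quantitative facts to establish are: prices can be kept rational with denominators bounded by $v_{\max}$ and numerators bounded by $\poly(m,n,v_{\max})$, so only $\poly(m,n,v_{\max})$ distinct price vectors can ever appear; between consecutive price-rises only polynomially many transfer steps occur, since each transfer strictly improves a combinatorial potential on the \MBB{} graph; and each price-rise is by a multiplicative factor at least $1 + 1/\poly(m,n,v_{\max})$ so that the total price growth is bounded by $\poly(m,n,v_{\max})$. Together these yield the claimed $\O(\poly(m,n,v_{\max}))$ bound.
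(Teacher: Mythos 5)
Your architecture is the same as the paper's: maintain an integral \MBB{}-consistent allocation (hence \fPO{}, hence \PO{}) in a Fisher market, repeatedly swap a good away from an \EQ{1}-violator along an alternating path toward the minimum-utility agent, and raise prices on the reachable component when no such path exists. The correctness direction is also fine (indeed, the output is \EQ{1} by the loop condition alone; the Leximin-swap remark is not needed). The entire difficulty is termination within the claimed time bound, and that is where your proposal has a genuine gap. You assert that prices stay rational with denominators bounded by $v_{\max}$, that only $\poly(m,n,v_{\max})$ distinct price vectors can appear, and that each price-rise factor is at least $1+1/\poly(m,n,v_{\max})$. None of these hold for the raw instance: the price-rise factor $\Delta$ is a ratio of bang-per-buck ratios, so after $k$ price-rises a price is a product/quotient of up to $\Theta(k)$ valuations with denominator as large as $v_{\max}^{\,k}$; consequently two distinct \MBB{} ratios can differ by an exponentially small amount, $\Delta-1$ cannot be lower-bounded by $1/\poly(m,n,v_{\max})$, and no pseudopolynomial bound on the number of price-rises follows.

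The missing idea is a preprocessing step: round all valuations up to integral powers of $(1+\eps)$ with $\eps=\Theta(1/(m v_{\max}^4))$ and run the algorithm on the rounded instance. Then every initial price, every \MBB{} ratio, and every price-rise factor is an integral power of $(1+\eps)$, so the potential $\sum_i \log_{1+\eps}\beta_i$ drops by at least $1$ per price-rise and is bounded below by $-n\log_{1+\eps} w_{\max}$, giving $\O(n\,\eps^{-1}\ln v_{\max})$ price-rise steps; a companion argument (the reference agent's utility is non-decreasing and grows by a factor $(1+\eps)$ after every $mn$ changes of identity) bounds the swaps in between. One then needs the back-translation you do not address: the output is only $\eps$-\EQ{1} and \fPO{} for the rounded instance, and converting this to exact \EQ{1} and exact \PO{} for the original instance uses the integrality of the original valuations together with a bound of $w_{\max}^4$ on the final prices to show that any violation would force $\eps>1/(m v_{\max}^4)$. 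Without the rounding step your potential/quantization claims are unsubstantiated, and the $\O(\poly(m,n,v_{\max}))$ bound does not follow.
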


In particular, when the valuations are polynomially bounded (i.e., for every $i \in [n]$ and $j \in [m]$, $v_{i,j} \leq \poly(m,n)$), our algorithm runs in \emph{polynomial time}. In contrast, computing a \Leximin{}-optimal allocation remains \NPH{} even under this restriction \citep{BD05allocating}. 

One might expect to prove \Cref{thm:EQ1+PO_pseudopolynomial} via a standard relax-and-round approach: Start with a \emph{fractional maximin allocation} (i.e., a fractional allocation that maximizes the minimum utility) followed by a rounding step. However, in \Cref{eg:Relax_And_Round_Fails}, we provide an instance where \emph{every} rounding of the fractional maximin solution fails to satisfy \EQ{1}. Therefore, the relax-and-round approach might be inadequate for finding \EQ{1}+\PO{} allocations.

Our proof of \Cref{thm:EQ1+PO_pseudopolynomial} is deferred to \Cref{subsec:EQ1+PO} but a brief idea is as follows: Our algorithm (Algorithm~\ref{alg:EQ1+PO}) uses the framework of \emph{Fisher markets} \citep{BS00compute}, which are well-studied models of a set of buyers spending their budgets of virtual money on utility-maximizing bundles of goods. Standard welfare theorems in economics guarantee that equilibrium (i.e., market clearing) outcomes in these markets are economically efficient. However, such outcomes could, in general, lead to \emph{fractional} allocations and be highly \emph{inequitable}. Our algorithm addresses the first challenge by starting with (and always maintaining) an \emph{integral} equilibrium of \emph{some} Fisher market. To meet the second challenge, our algorithm uses a combination of local search and price-rise routines to gradually move towards an \emph{approximately} equitable equilibrium. The analysis for achieving the desired running time and correctness guarantees is intricate, and involves a number of structural observations and potential function arguments.

Our techniques are inspired from a similar recent algorithm of \citet{BKV18Finding} for finding allocations that are envy-free up to one good (\EF{1}) and Pareto optimal (\PO{}). A key difference between the two algorithms lies in the way a local improvement is defined: For \citet{BKV18Finding}, a local improvement is defined in terms of equalizing the agents' \emph{spendings}, whereas for us, it pertains to equalizing the agents' \emph{utilities}. We believe that the latter approach is  more direct, and leads to a simpler algorithm and analysis. This distinction is also \emph{necessary}, because as we will show in \Cref{prop:Nonexistence_EQ1+EF1+PO}, an \EQ{1}+\EF{1}+\PO{} allocation might fail to exist even with strictly positive valuations. Therefore, any algorithm that is tailored to return an \EF{1} outcome---including the algorithm of \citet{BKV18Finding}---will invariably fail to find the desired \EQ{1}+\PO{} allocation, motivating the need for an alternative approach.

Given the success of market-based algorithms in finding \EQ{1}+\PO{} allocations, it is natural to ask whether these techniques can be extended to find an \EQx{}+\PO{} allocation. Unfortunately, this is where these techniques hit a roadblock. The problem stems from the fact that the market-based algorithm always outputs a \emph{fractionally Pareto optimal} (\fPO{}) allocation (refer to \Cref{subsec:EQ1+PO} for the definition), but there exist instances where no \EQx{} allocation satisfies \fPO{} (\Cref{subsec:NonExistence_EQx+fPO}). Whether an \EQx{}+\PO{} allocation can be computed in (pseudo-)polynomial time with strictly positive valuations is an intriguing question for future research.

\subsection{Equitability, Envy-Freeness and Pareto Optimality}

We will now consider all three notions---equitability, envy-freeness, and Pareto optimality---together. Recall from \Cref{prop:Leximin_EQx+PO_Positive_valuations} that for strictly positive valuations, an \EQ{1}+\PO{} (in fact, an \EQx{}+\PO{}) allocation is guaranteed to exist. It is also known that an \EF1{}+\PO{} allocation always exists. One might therefore ask whether an \EQ{1}+\EF1{}+\PO{} allocation also always exists. Our next result (\Cref{prop:Nonexistence_EQ1+EF1+PO}) dismisses that possibility.

\begin{restatable}[\textbf{Non-existence of \EQ{1}+\EF{1}+\PO{}}]{prop}{NonexistenceEQoneEFonePO}
 \label{prop:Nonexistence_EQ1+EF1+PO}
 There exists an instance with strictly positive valuations in which no allocation is simultaneously equitable up to one good $(\EQ{1})$, envy-free up to one good $(\EF{1})$ and Pareto optimal $(\PO{})$.
\end{restatable}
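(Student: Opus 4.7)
The plan is to exhibit a concrete small instance with strictly positive valuations in which every allocation fails at least one of \EQ{1} and \EF{1}; this immediately rules out any \EQ{1}+\EF{1}+\PO{} allocation. Take $n = 3$ agents and $m = 5$ goods with $v_{1, j} = v_{2, j} = 5$ and $v_{3, j} = 1$ for every good $j \in [5]$. All valuations are strictly positive, so the hypothesis of the proposition is met; the key structural feature I will exploit is that each agent values every good uniformly.

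First I reduce \EQ{1} to linear constraints on bundle sizes. Writing $x = |A_1|$, $y = |A_2|$, $z = |A_3|$ (so $x + y + z = 5$), uniformity gives $v_i(A_i) = v_{i,1} \cdot |A_i|$ and $v_k(A_k \setminus \{j\}) = v_{k,1} \cdot (|A_k| - 1)$ for every $j \in A_k$, so the six \EQ{1} inequalities---one per ordered pair of distinct agents---become a tiny linear system in $(x, y, z)$. A short case analysis, driven by the observation that $x \geq 2$ forces $z \geq 5(x - 1) \geq 5$ via the $(3, 1)$-inequality (and symmetrically for $y$), together with the reverse $(1, 3)$- and $(2, 3)$-inequalities $z \leq 5x + 1$ and $z \leq 5y + 1$, pins down the unique feasible size pattern $(x, y, z) = (1, 1, 3)$, with utilities $(5, 5, 3)$.

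Next I check that this pattern is incompatible with \EF{1}. Agent $a_1$ has utility $5$, but values $A_3$ at $v_1(A_3) = 15$; removing any single good from $A_3$ still leaves $v_1(A_3 \setminus \{j\}) = 10 > 5$, so \EF{1} is violated from $a_1$ against $a_3$ (and symmetrically for $a_2$). Hence every \EQ{1} allocation in this instance fails \EF{1}. The main obstacle is locating an instance whose \EQ{1} constraints have a single feasible ``shape'': the uniformity-per-agent trick is what collapses \EQ{1} to a tractable system, after which a little parameter hunting on the value gap (here $5$ versus $1$) and on $m$ suffices. Notably, \PO{} plays no role in the obstruction---the incompatibility is already between \EQ{1} and \EF{1} alone.
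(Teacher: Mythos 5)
Your proof is correct. The instance (agents $a_1,a_2$ valuing every good at $5$, agent $a_3$ valuing every good at $1$, with $m=5$) is valid: the per-agent-uniform valuations reduce \EQ{1} to a linear system on the bundle sizes $(x,y,z)$, the case analysis correctly forces $x=y=1$, $z=3$ (the boundary cases $x=0$ and $z=0$ are handled by the vacuity convention for empty bundles, and $x\ge 2$ or $y\ge 2$ is killed by $z \ge 5(x-1)$), and the resulting profile plainly violates \EF{1} since $v_1(A_3\setminus\{j\})=10>5$. This is a genuinely different construction from the paper's. The paper uses a parametrized family with $n+1$ agents and $3n+1$ goods, where a pigeonhole argument over $n-1$ high-value goods forces some agent's utility below $1$, \EQ{1} then caps the last agent at one good, and completeness (implied by \PO{}) forces a three-good bundle that breaks \EF{1}. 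Your approach buys compactness and transparency---a fixed $3$-agent, $5$-good instance in which \EQ{1} and \EF{1} are already incompatible on their own (given that allocations are, by the paper's definition, partitions of all the goods, so \PO{} is indeed not needed). The paper's family buys generality: the $\eps$-parametrization and the $3n+1$ goods are what let the authors extend the impossibility, in the remark following the proposition, to \EQ{$k$} together with \Prop{$\ell$} and completeness for arbitrary constants $k,\ell$; your fixed instance does not immediately yield those stronger non-existence results.
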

\begin{proof}
Fix some $n \geq 2$ and $0 < \eps < \frac{1}{2n+2}$. Consider an instance with $n+1$ agents $a_1,\dots,a_{n+1}$ and $3n+1$ goods $g_1,\dots,g_{3n+1}$. Each of $a_1,\dots,a_n$ values each of $g_1,\dots,g_{n-1}$ at $2$ and each of $g_n,\dots,g_{3n+1}$ at $\eps$. Agent $a_{n+1}$ values every good at $1$. By the pigeonhole principle for the goods $g_1,\dots,g_{n-1}$, some agent among $a_1,\dots,a_n$ must have utility at most $(2n+2)\eps < 1$. This means that $a_{n+1}$ can be assigned at most one good (otherwise \EQ{1} is violated). Therefore, if all the goods are allocated (which is a necessary condition for a \PO{} allocation), at least $3n$ goods must be assigned among $a_1,\dots,a_n$. This means that one of these agents gets at least three goods, creating an \EF{1} violation with $a_{n+1}$.
\end{proof}

\begin{remark}
\Cref{prop:Nonexistence_EQ1+EF1+PO} has several interesting implications. First, it shows that a Nash optimal allocation---which is guaranteed to be \EF{1} and \PO{} \citep{CKM+16unreasonable}---need not satisfy \EQ{1}. Similarly, the algorithm of \citet{BKV18Finding} for computing an \EF{1} and \PO{} allocation could also fail to return an \EQ{1} allocation. By contrast, our algorithm in \Cref{thm:EQ1+PO_pseudopolynomial} is guaranteed to find an \EQ{1} and \PO{} allocation. Finally, it shows that the \Leximin{}-optimal allocation---which is guaranteed to be \EQ{x} and \PO{} for strictly positive valuations (\Cref{prop:Leximin_EQx+PO_Positive_valuations})---need not be \EF{1}.
\label{rem:EQ1+EF1+PO_NonExistence_Implications}
\end{remark}

\paragraph{Comparison with cake-cutting}
It is worth comparing \Cref{prop:Nonexistence_EQ1+EF1+PO} with the corresponding results for \emph{divisible} goods (i.e., cake-cutting). \citet{BJK13n} have shown that there might not exist a division of the cake that simultaneously satisfies \EQ{}, \EF{}, and \PO{}. Our result in \Cref{prop:Nonexistence_EQ1+EF1+PO} shows an analogous impossibility for \emph{indivisible} goods. Interestingly, the impossibility for cake-cutting goes away when \PO{} is relaxed to \emph{completeness} (i.e., only requiring that the entire cake is allocated). Under this relaxation, it is known that a \emph{perfect} allocation of the cake exists \citep{A87splitting}.\footnote{An allocation $A$ is \emph{perfect} if for every $i,k \in [n]$, $v_i(A_k) = \frac{1}{n}$.} By contrast, for indivisible goods, the impossibility remains even when \PO{} is relaxed to completeness and \EF{1} is relaxed to proportionality up to one good (\Prop1).\footnote{An allocation $A$ is \emph{proportional} if for every $i \in [n]$, we have $v_i(A_i) \geq \frac{1}{n} \sum_{k \in [n]} v_i(A_k)$. An allocation $A$ is \emph{proportional up to one good}~\citep{CFS17fair} if for every $i \in [n]$, there exists a good $g$ such that $v_i(A_i \cup \{g\}) \geq \frac{1}{n} \sum_{k \in [n]} v_i(A_k)$.} Indeed, the proof of \Cref{prop:Nonexistence_EQ1+EF1+PO} works even under these relaxations. Moreover, the proof can be easily extended to show the non-existence of \EQ{k}, \Prop{$\ell$} and complete allocations for any constants $k,\ell \in \N$.

We now turn to the computational aspects of allocations with all three properties. Note that the allocation constructed in the proof of \Cref{thm:EQ1+PO_StronglyNPHard_GeneralVals} is envy-free. Therefore, from \Cref{thm:EQ1+PO_StronglyNPHard_GeneralVals,rem:EQ1+PO_EQx+PO_strong_NPhardness_GeneralVals}, we obtain strong \NPH{}ness of all combinations of the three properties.

\begin{restatable}[\textbf{Hardness of \EF{}+\EQ{}+\PO{}}]{corollary}{EFEQPOStrongNPHardnessGeneralVals}
 \label{cor:EF+EQ+PO_StronglyNPHard_GeneralVals}
 Let $X \in \{\EF{},\EFx{},\EF{1}\}$, $Y \in \{\EQ{},\EQx{},\EQ{1}\}$, and $Z = \PO{}$. Then, determining whether a given instance admits an allocation that is simultaneously $X$, $Y$, and $Z$ is strongly \NPhard{}.
\end{restatable}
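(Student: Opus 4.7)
The plan is to reuse the \ThreePartition{} reduction from the proof of \Cref{thm:EQ1+PO_StronglyNPHard_GeneralVals} together with its modification from \Cref{rem:EQ1+PO_EQx+PO_strong_NPhardness_GeneralVals}, and to verify the single additional fact that the YES-case allocation exhibited there is in fact \emph{envy-free}, not merely \EF{1} or \EFx{}. Once that is established, the same reductions simultaneously yield strong \NPhard{}ness for every choice of $X \in \{\EF{},\EFx{},\EF{1}\}$ in combination with $Y \in \{\EQ{1},\EQx{}\}$ (via the original construction) and with $Y=\EQ{}$ (via the modification that sets $v_{r+1,3r+2}=0$).

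Concretely, the YES-case allocation $A$ has $A_i = \{g_j : b_j \in S_i\}$ for each $i \le r$ and $A_{r+1} = \{g_{3r+1},g_{3r+2}\}$. I would check envy-freeness pair by pair. Agents $a_1,\dots,a_r$ share a common valuation over $g_1,\dots,g_{3r}$, and under that common valuation every bundle $A_i$ with $i \le r$ has value exactly $T$, so no agent in $[r]$ envies another. Each $a_i$ with $i \le r$ values $A_{r+1}$ at $0$ (both $g_{3r+1}$ and $g_{3r+2}$ are worth $0$ to it) and its own bundle at $T$, so there is no envy towards $a_{r+1}$. Symmetrically, $a_{r+1}$ values each $A_i$ ($i \le r$) at $0$, so $a_{r+1}$ envies no one. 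The same verification applies to the modified reduction used for $Y=\EQ{}$: zeroing $v_{r+1,3r+2}$ lowers $a_{r+1}$'s valuation of its own bundle from $2T$ to $T$ but leaves every cross-bundle valuation unchanged, so the allocation remains \EF{} and, with both remaining nonzero utilities equal to $T$, becomes genuinely \EQ{}.

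The converse direction is inherited from the results being reused: any $X+Y+\PO{}$ allocation is in particular a $Y+\PO{}$ allocation, and \Cref{thm:EQ1+PO_StronglyNPHard_GeneralVals} together with \Cref{rem:EQ1+PO_EQx+PO_strong_NPhardness_GeneralVals} already shows that, in the constructed instance, a $Y+\PO{}$ allocation exists only if the underlying \ThreePartition{} instance is a YES-instance. Combined with the forward direction above, this yields strong \NPhard{}ness for all nine combinations.

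The only non-routine step is the pair-by-pair envy check; it is short but indispensable, because without it one would only be able to conclude that each reduction produces a $Y+\PO{}$ allocation that is at best \EF{1}, which would fail to settle the case $X=\EF{}$. Beyond carefully bookkeeping which of the two variants of the reduction is invoked for which $Y$, I do not foresee any deeper obstacle.
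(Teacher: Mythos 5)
Your proposal is correct and is exactly the paper's argument: the paper proves \Cref{cor:EF+EQ+PO_StronglyNPHard_GeneralVals} by observing that the allocation constructed in the proof of \Cref{thm:EQ1+PO_StronglyNPHard_GeneralVals} (and its modification in \Cref{rem:EQ1+PO_EQx+PO_strong_NPhardness_GeneralVals}) is envy-free, and that the converse direction is inherited since $X+Y+\PO{}$ implies $Y+\PO{}$. Your pair-by-pair envy check simply makes explicit what the paper states in one sentence.
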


The intractability in \Cref{cor:EF+EQ+PO_StronglyNPHard_GeneralVals} can, in certain cases, be alleviated when the valuations are restricted to be \emph{binary}. 
 We will start with an observation concerning \EQ{} and \PO{} allocations under this restriction.

\begin{restatable}{prop}{EQPOForBinaryIsAlsoEF}
 \label{prop:EQ+PO_BinaryVals_Is_EF}
 For binary valuations, an allocation that is equitable $(\EQ{})$ and Pareto optimal $(\PO{})$ is also envy-free $(\EF{})$.
\end{restatable}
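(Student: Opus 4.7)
The plan is to exploit the strong form of non-wastefulness that Pareto optimality imposes in the binary-valuation setting, and then use equitability to bound the envy directly. Let $N_i := \{j \in [m] : v_{i,j}=1\}$ denote the approval set of agent $i$, so that $v_i(S) = |S \cap N_i|$ for every bundle $S$. Let $A$ be any \EQ{}+\PO{} allocation, and write $c$ for the common utility under \EQ{}.

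First, I would establish a ``strong non-wastefulness'' lemma: for every good $j \in A_k$ with $v_{k,j} = 0$, we must have $v_{i,j} = 0$ for \emph{every} agent $i \in [n]$. Indeed, if some agent $i$ had $v_{i,j} = 1$, transferring $j$ from $k$ to $i$ would raise $i$'s utility by $1$ while leaving $k$'s utility unchanged (since $v_{k,j}=0$), contradicting \PO{}. Consequently, for any two agents $i,k$ the set $A_k \setminus N_k$ contains only goods that nobody values, which means
\[
A_k \cap N_i \;\subseteq\; A_k \cap N_k.
\]

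Taking cardinalities now finishes the argument: for arbitrary $i,k \in [n]$,
\[
v_i(A_k) \;=\; |A_k \cap N_i| \;\leq\; |A_k \cap N_k| \;=\; v_k(A_k) \;=\; c \;=\; v_i(A_i),
\]
where the two middle equalities use binary additivity and \EQ{} respectively. This is precisely the envy-freeness condition $v_i(A_i) \geq v_i(A_k)$ for every pair $i,k$, so $A$ is \EF{}.

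I do not anticipate a real obstacle here: the whole content of the statement is the observation that under binary valuations \PO{} already forces the inclusion $A_k \cap N_i \subseteq A_k \cap N_k$, after which \EQ{} converts the trivial cardinality inequality into the envy bound. The only subtlety worth being careful about is that the ``strong non-wastefulness'' step is stronger than the usual statement that each good goes to an approving agent whenever some agent approves it; it is precisely this strengthening that makes $v_i(A_k)$ comparable to $v_k(A_k)$ without any appeal to swaps or cycles, which is what keeps the proof a two-line consequence of \PO{} plus \EQ{}.
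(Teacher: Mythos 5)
Your proof is correct and follows essentially the same route as the paper's: Pareto optimality forces every good in $A_k$ that $i$ values to also be valued by $k$, so $v_i(A_k) \leq v_k(A_k) = v_i(A_i)$ by equitability. Your explicit handling of goods valued at zero by the owner (showing they must be valued at zero by everyone) is slightly more careful than the paper's phrasing, which simply asserts that each agent approves all goods in its bundle, but the substance is identical.
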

\begin{proof}
Suppose each agent gets a utility $k$ under the said \EQ{} allocation. For binary valuations, \PO{} implies that each agent $i$ approves all the goods in its bundle. Furthermore, any other agent $j$ gets at most $k$ goods approved by $i$ (simply because agent $j$ gets exactly $k$ goods). Hence, the allocation is \EF{}.
\end{proof}

\begin{remark}
\Cref{prop:EQ+PO_BinaryVals_Is_EF} shows that for binary valuations, an \EQ{}+\PO{} allocation (if it exists) is, in fact, \EQ{}+\PO{}+\EF{} (hence also \EQ{}+\PO{}+\EFx{}/\EQ{}+\PO{}+\EF{1}). From \Cref{thm:EQ+PO_Polytime_BinaryVals}, we know that there is a polynomial-time algorithm for determining whether an instance with binary valuations admits an \EQ{}+\PO{} allocation. A similar implication therefore also holds for \EQ{}+\PO{}+\EF{}/\EQ{}+\PO{}+\EF{1}/\EQ{}+\PO{}+\EFx{} allocations.
\label{rem:EF_EQ_PO_Polytime_BinaryVals}
\end{remark}

\Cref{thm:EQ1+EF1+PO_Polytime_BinaryVals} shows that binary valuations are also useful when one considers the combination of \EQ{1}, \EF{1}, and \PO{}.

\begin{restatable}[\textbf{Algorithm for \EQ{1}+\EF{1}+\PO{} for binary valuations}]{theorem}{EQoneEFonePOPolytimeBinaryVals}
 \label{thm:EQ1+EF1+PO_Polytime_BinaryVals}
 There is a polynomial-time algorithm that given as input any fair division instance with additive and binary valuations, returns an allocation that is equitable up to one good $(\EQ{1})$, envy-free up to one good $(\EF{1})$, and Pareto optimal $(\PO{})$, whenever such an allocation exists.
\end{restatable}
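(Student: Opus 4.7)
My plan is to reduce the problem to finding an $\EQ{1}+\PO{}$ allocation, by first showing that for binary valuations $\EQ{1}+\PO{}$ already implies $\EF{1}$, and then to find such an allocation (or certify none exists) via a single flow feasibility check.

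\textbf{Step 1 (the key reduction).} I would begin with a structural observation: in any $\PO{}$ allocation $A$ with binary valuations, every good $j \in A_k$ must satisfy either $v_{k,j}=1$ or $v_{i,j}=0$ for all agents $i$, because otherwise transferring $j$ from $k$ to any approver $i$ would raise $v_i(A_i)$ by $1$ and leave every other utility unchanged, contradicting $\PO{}$. This immediately gives
\[
v_i(A_k) \;=\; \sum_{j \in A_k,\, v_{k,j}=1} v_{i,j} \;\leq\; |\{j \in A_k : v_{k,j}=1\}| \;=\; v_k(A_k)
\]
for every pair of agents $i,k$. Noting that for binary valuations $\EQ{1}$ is equivalent to $\max_i v_i(A_i) - \min_i v_i(A_i) \leq 1$, I would then chain $v_i(A_k) \le v_k(A_k) \le v_i(A_i)+1$ (the last inequality being trivial when $v_k(A_k)=0$) to conclude $\EF{1}$.

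\textbf{Step 2 (pinning the target utility level).} Let $M$ denote the number of goods approved by at least one agent. The $\PO{}$ observation from Step~1 forces every such good to be assigned to one of its approvers in any $\PO{}$ allocation, so $\sum_i v_i(A_i)=M$. Combining with the ``$\max-\min\leq1$'' characterisation of $\EQ{1}$ pins down $c\coloneqq\lfloor M/n\rfloor$ as the unique common floor, with exactly $r\coloneqq M-nc$ agents at utility $c+1$ and the remaining $n-r$ agents at utility $c$.

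\textbf{Step 3 (flow formulation).} I would build a bipartite flow network with source $s$, sink $t$, an agent node per $i\in[n]$, and a good node for each approved good $j$; add edges $s\to i$ with lower bound $c$ and capacity $c+1$, edges $j\to t$ with lower bound and capacity both $1$, and edges $i\to j$ of capacity $1$ for each pair with $v_{i,j}=1$. Integral feasible flows are then in bijection with assignments of the approved goods that both place each good on an approver and load each agent with a number of approved goods in $\{c,c+1\}$---exactly the $\EQ{1}+\PO{}$ allocations (modulo the unapproved goods, which contribute $0$ to every utility and can be distributed arbitrarily without affecting $\PO{}$, $\EQ{1}$, or $\EF{1}$). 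Feasibility of flow with lower bounds is decidable in polynomial time by the standard reduction to max-flow, so if feasible I would extract the assignment, append an arbitrary partition of the unapproved goods, and output the result; otherwise, by Step~2, no $\EQ{1}+\PO{}$ allocation exists, hence no $\EQ{1}+\EF{1}+\PO{}$ allocation either, and I would report ``none exists.''

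\textbf{Main obstacle.} The crux of the argument is the structural collapse in Step~1: without the inequality $v_i(A_k) \leq v_k(A_k)$ forced by $\PO{}$, the $\EF{1}$ constraint---a cross-bundle pairwise comparison rather than a per-agent degree bound---would have to be encoded directly into the network, and it is not obvious how to do so while preserving polynomial-time solvability. Once Step~1 is in hand, the rest of the proof is a routine flow computation.
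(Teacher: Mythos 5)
Your proof is correct, and it takes a genuinely different route from the paper's. The paper proves the theorem via Nash welfare: it shows that any \EQ{1}+\PO{} allocation under binary valuations must be Nash optimal (\Cref{lem:NashOptimality_Of_EQ1+PO_BinaryVals}), that all Nash optimal allocations have equivalent utility profiles (\Cref{lem:NashOptimal_Binary_Utility_Profile}, proved via an edge-decomposition of the transformation graph between two allocations), and then simply computes a Nash optimal allocation---polynomial time for binary valuations by known external results---and checks whether it is \EQ{1}. You instead observe directly that \PO{} forces $v_i(A_k)\le v_k(A_k)$ for all pairs, which upgrades the paper's \Cref{prop:EQ+PO_BinaryVals_Is_EF} to ``\EQ{1}+\PO{} implies \EF{1}''; you then pin down the unique feasible utility profile $(c+1)^r c^{\,n-r}$ with $c=\lfloor M/n\rfloor$ and search for it by a flow with lower bounds, extending the paper's own flow construction from \Cref{thm:EQ+PO_Polytime_BinaryVals}. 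Each step checks out: the equivalence of \EQ{1} with ``max minus min utility at most one'' holds for binary valuations, the sum of utilities in any \PO{} allocation equals the number of approved goods so the profile is forced, and the good-to-sink lower bounds make the total flow $M$, which automatically puts exactly $r$ agents at $c+1$. What your approach buys is self-containedness---no reliance on the nontrivial algorithms for computing MNW under binary valuations---and a uniform flow-based treatment of both \Cref{thm:EQ+PO_Polytime_BinaryVals} and this theorem; what the paper's approach buys is the structural insight that \EQ{1}+\PO{} coincides with Nash optimality in the binary case, which is of independent interest.
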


The proof of \Cref{thm:EQ1+EF1+PO_Polytime_BinaryVals} is provided in \Cref{subsec:Proof_EQ1+EF1+PO_Polytime_BinaryVals}. The idea is to show that any \EQ{1}+\PO{} allocation, if it exists, is also Nash optimal. For binary valuations, all Nash optimal allocations induce identical utility profiles (up to renaming of agents). As a result, every Nash optimal allocation satisfies \EQ{1}. It is known that every Nash optimal allocation satisfies \EF{1} and \PO{} \citep{CKM+16unreasonable}. Moreover, for binary valuations, a Nash optimal allocation can be computed in polynomial time~\citep{DS15maximizing,BKV18greedy}. Therefore, determining the existence of an \EQ{1}+\EF{1}+\PO{} allocation reduces to checking whether an arbitrary Nash optimal allocation satisfies \EQ{1}, which can be done in polynomial time.
 
Notice that for binary valuations, a Pareto optimal allocation is \EF{1} if and only if it is \EFx{}, and is \EQ{1} if and only if it is \EQx{}. Therefore, when the valuations are binary, the above algorithm works for all combinations of $X$ + $Y$ + \PO{}, where $X \in \{\EFx{},\EF{1}\}$ and $Y \in \{\EQx{},\EQ{1}\}$.

We conclude this section by observing that some of the problems discussed in \Cref{cor:EF+EQ+PO_StronglyNPHard_GeneralVals} continue to be intractable even for binary valuations. This follows from a result of \citet{BL08efficiency}, who showed that finding an envy-free (\EF{}) and Pareto optimal (\PO{}) allocation under binary valuations is \NPC{} (refer to Proposition 21 in their paper).

\begin{restatable}[\citealp{BL08efficiency}]{prop}{EFPONPHardnessBinaryVals}
 \label{prop:EF_PO_NPHardness_BinaryVals}
 Given any fair division instance with additive and binary valuations, determining whether there exists an envy-free $(\EF{})$ and Pareto optimal $(\PO{})$ allocation is \NPC{}.
\end{restatable}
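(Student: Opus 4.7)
The plan is to show containment in \NP{} and then \NPH{}ness; since this is Proposition~21 of \citet{BL08efficiency}, a clean writeup can simply appeal to that reference, but I would sketch the argument independently as follows.

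Membership in \NP{} is straightforward. Given an allocation $A$ as certificate, \EF{} is verified by $O(n^2)$ pairwise comparisons of bundle utilities, and \PO{} for additive valuations is checked in polynomial time by building the agent/good transfer graph (nodes are goods, directed edges capture marginal-utility swaps) and searching for an improving cycle or path—a standard observation for additive valuations. So the decision problem lies in \NP{}.

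For \NPH{}ness I would reduce from \textsc{X3C} (Exact Cover by 3-Sets). Given a universe $U$ with $|U| = 3q$ and a family $\mathcal{F} = \{S_1, \dots, S_r\}$ of 3-subsets, I would construct a fair-division instance with one ``set agent'' $a_S$ for each $S \in \mathcal{F}$ and one ``element good'' $g_u$ for each $u \in U$, where $a_S$ values $g_u$ at $1$ iff $u \in S$, plus a carefully calibrated collection of dummy agents and goods (typically a private reserve that each set agent uniquely approves). \PO{} forces every element good to be assigned to an approving set agent, so the active sets automatically cover $U$; the dummy gadget will be designed so that \EF{} can hold only if exactly $q$ set agents each receive all three of their elements—i.e., only if an exact cover exists.

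The main obstacle will be the gadget design. The padding must simultaneously (i) preserve \PO{} in the presence of dummy goods, and (ii) translate \EF{} into a cardinality constraint that forbids any active set agent from receiving only $1$ or $2$ element goods (otherwise an ``almost cover'' would satisfy \EF{}+\PO{} and the reduction would collapse). The standard way around this is to give each set agent a private reserve of goods that only it values positively, so that envy between two set agents reduces to comparing the number of element goods each has received, and then to tune the size of the reserve so that the envy threshold sits at exactly $3$. Once the gadget is in place, checking the yes/no correspondence in both directions and verifying polynomial-time constructability finishes the reduction.
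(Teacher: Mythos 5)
The paper does not actually prove this proposition; it imports it verbatim from \citet{BL08efficiency} (their Proposition~21), so the question is whether your sketch would stand on its own as a proof. It would not, for two reasons. First, your \NP{}-membership argument rests on a false premise: for general additive valuations, verifying that a \emph{given} allocation is \PO{} is coNP-complete, and the ``improving cycle/path'' test you describe is not a sound certificate (Pareto improvements need not decompose into such local swaps). Membership in \NP{} does hold here, but for a reason specific to binary valuations: an allocation is \PO{} if and only if every good that at least one agent approves is assigned to an approving agent (the characterization the paper uses in \Cref{thm:EQ+PO_Polytime_BinaryVals}), and that condition is trivially checkable alongside \EF{}.

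Second, and more importantly, the hardness reduction is a plan rather than a proof: the entire crux---the gadget that translates \EF{} into ``every active set agent receives all three of its elements''---is explicitly left open, and the one concrete mechanism you propose (private reserves tuned so that ``the envy threshold sits at exactly $3$'') cannot work with binary valuations. For distinct $3$-sets $S \neq S'$ we have $|S \cap S'| \leq 2$, so agent $a_S$ values any other set agent's bundle at most $2$; consequently a reserve of size $2$ or more already eliminates all envy among set agents no matter how the element goods are distributed, and no choice of reserve size can place the threshold at $3$. The ``all three or nothing'' dichotomy therefore cannot be enforced by envy among set agents alone---you would need additional agents who compete for the element goods themselves---and without that gadget neither direction of the equivalence can be checked. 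As written, your argument should be read as an appeal to \citet{BL08efficiency}, which is precisely what the paper does, rather than as an independent proof.
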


\begin{remark}
It is easy to verify that the allocation constructed in the reduction of \citet{BL08efficiency} is, without loss of generality, equitable up to one good (\EQ{1}). Therefore, for binary valuations, determining whether there exists an allocation that is \EF{} + \EQ{1}+ \PO{}/\EF{} + \EQx{}+ \PO{}  is \NPC{}.
\label{rem:EF_EQ1/x_PO_NPHardness_BinaryVals}
\end{remark}

\section{Experiments}
\label{sec:Experiments}
In this section, we compare the proposed and existing algorithms (in particular, \AlgEQonePO{}, MNW, and Leximin) in terms of how frequently they satisfy various fairness and efficiency properties in the real-world and synthetic datasets. 

For real-world preferences, we used the data obtained from the popular fair division website \emph{Spliddit}~\citep{GP15spliddit}. Out of the $2212$ instances in the Spliddit data, we used the $914$ instances that had strictly positive valuations and $m \geq n$. The instances have between $3$ and $9$ agents, and between $3$ and $29$ goods.\footnote{More than $80\%$ of the instances have three agents and six goods.} Users are restricted to normalized, integral valuations. For synthetic data, we generated $1000$ instances with $n=5$, $m=20$, and (strictly positive) valuations drawn i.i.d. from Dirichlet distribution. The concentration parameter for each item is set to $10$ to generate normalized valuations.\footnote{We normalize the valuations in the synthetic data to allow for a fair comparison with the Spliddit data, which has normalized valuations by design. We remark that all algorithms studied in this paper work even in the absence of this assumption.}

We consider the following combinations of fairness and efficiency properties: \EQ{}+\PO{}, \EQ{1}+\PO{}, \EQx{}+\PO{}, \EQ{1}+\EF{1}+\PO{}, and \EQx{}+\EFx{}+\PO{}. 
For each instance of the Spliddit and synthetic datasets, we check whether the property is satisfied by the output of \AlgEQonePO{}, MNW, and \Leximin{}.
\Cref{fig:Experimental_Results_Goods_Main} presents the relevant histograms.\footnote{All codes and synthetic data generation files are available at \url{https://github.com/sujoyksikdar/fairdivision}.}
Note that each of the algorithms we consider is Pareto optimal, so the histograms would be unaltered even if we did not assess \PO{}.

\begin{figure}
    \includegraphics[width=\linewidth]{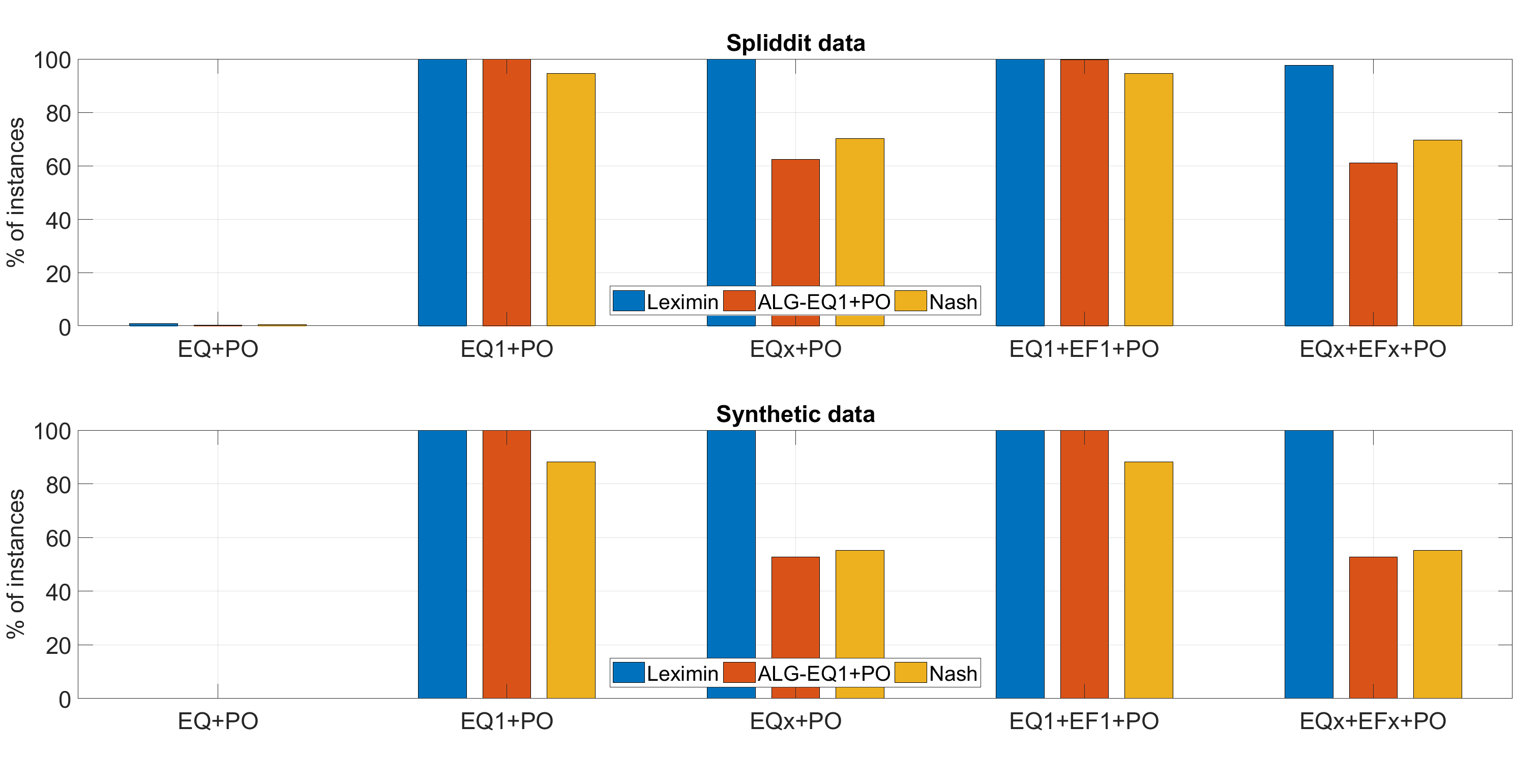}
\caption{Experimental results for Spliddit and synthetic datasets.}
\label{fig:Experimental_Results_Goods_Main}
\end{figure}

Not surprisingly, we see that very few instances permit a solution that is Pareto optimal and exactly equitable. Whenever such a solution exists, it is provably achieved by \Leximin{}, but this happens in only 1\% of Spliddit instances and none of the synthetic instances. For the \EQ{1} relaxation, we see that not only do \Leximin{} and \AlgEQonePO{} satisfy both \EQ{1} and \PO{}, but so does MNW on over $94\%$ of Spliddit instances (and over $88\%$ of synthetic instances). However, this trend changes when we consider \EQ{x}. \AlgEQonePO{}, despite being guaranteed to satisfy \EQ{1}, only satisfies \EQ{x} on $62\%$ of Spliddit instances (and $52\%$ of synthetic instances). A similar drop off is observed with MNW. Thus, for the purpose of achieving (approximately) equitable and Pareto optimal allocations, \Leximin{} is a clear winner.

We observe little change when, in addition to approximate equitability and Pareto optimality, we also require approximate envy-freeness. Indeed, in most cases, an allocation that is \EQ{1}+\PO{}/\EQ{x}+\PO{} is also \EF{1}/\EF{x}. It is interesting to note that while MNW---which is appealing from the perspective of achieving relaxed envy-freeness---quite often fails to satisfy \EQ{x}, \Leximin{} provably satisfies relaxed equitability while also achieving \EF{x} on a large fraction of instances.

\section{Discussion}
\label{sec:Discussion}
We studied equitable allocations of indivisible goods in conjunction with other well-known notions of fairness (envy-freeness) and economic efficiency (Pareto optimality), and provided a number of existential and computational results. In the appendix, we provide simulation results comparing the algorithms considered in Section~\ref{sec:Experiments} with respect to relaxations of envy-freeness (\Cref{subsec:Expt_Goods_EFx}). We also analyze \EQ{1} and \EQ{x} allocations from the perspective of approximating the optimal solutions to Max-Min Fairness, otherwise known as the \SantaClaus{} problem (\Cref{sec:maxmin}).

Our work reveals some intriguing similarities and differences between equitability and envy-freeness. In many places, our work parallels the existing literature on envy-freeness: We present Leximin as a canonical algorithm for \EQ{1}+\PO{}, just like MNW achieves \EF{1}+\PO{}. Also, our pseudopolynomial-time algorithm for \EQ{1}+\PO{} uses similar techniques to that of \citet{BKV18Finding} for \EF{1}+\PO{}. However, in other places, the differences are more pronounced. Most notably, \EQ{x} comes with a universal existence guarantee (often in conjunction with \PO{}), while the existence of \EF{x} allocations remains an open problem. Finally, exact equitability is a knife-edge property often hard to achieve in practice, unlike envy-freeness which is often satisfiable~\citep{DGK+14computational}.

Going forward, it would be very interesting to extend our results to the public decisions model of \citet{CFS17fair}. Extensions to models with additional feasibility constraints on the allocations \citep{BCE+17fair}, or settings with both goods and chores \citep{ACI18fair} will also be interesting.

\section*{Acknowledgments}
We are grateful to the anonymous IJCAI-19 reviewers for their helpful comments, and to Ariel Procaccia and Nisarg Shah for sharing with us the data from Spliddit. LX acknowledges NSF \#1453542 and \#1716333 for support.

\bibliographystyle{named}
\bibliography{References}

\normalsize
\clearpage
\newpage
\section{Appendix}
 \label{sec:Appendix}

\subsection{Proof of Theorem~\ref{thm:EQ1+PO_pseudopolynomial}}
\label{subsec:EQ1+PO}

Recall the statement of \Cref{thm:EQ1+PO_pseudopolynomial}.

\EQonePOPseudopolynomial*

The proof of \Cref{thm:EQ1+PO_pseudopolynomial} relies on the algorithm \AlgEQonePO{} (presented in Algorithm~\ref{alg:EQ1+PO}), and spans \Cref{subsec:EQ1+PO,subsec:Proof_RunningTime_ALG_EQ1+PO,subsec:Proof_RunningTime_Phase2,subsec:Proof_RunningTime_Phase3,subsec:Proof_Correctness_ALG_EQ1+PO_original_instance}.
We will start with some necessary definitions that will help us state \Cref{thm:epsEQ1+PO_pseudopolynomial}, of which \Cref{thm:EQ1+PO_pseudopolynomial} is a special case.

\paragraph{Fractional allocations}
A \emph{fractional allocation} $\x \in [0,1]^{n \times m}$ refers to a fractional assignment of the goods to the agents such that no more than one unit of any good is allocated, i.e., for every good $j \in [m]$, $\sum_{i \in [n]} x_{i,j} \leq 1$. We will use the term \emph{allocation} to refer to a discrete allocation and explicitly write \emph{fractional allocation} otherwise.

\paragraph{$\eps$-Pareto optimality}
Given any $\eps > 0$, $A$ is \emph{$\eps$-Pareto optimal} ($\eps$-\PO{}) if there does not exist an allocation $B$ such that $v_k(B_k) \geq (1+\eps) v_k(A_k)$ for every agent $k \in [n]$ with one of the inequalities being strict. 

\paragraph{Fractional Pareto optimality}
An allocation is \emph{fractionally Pareto optimal} (\fPO{}) if it not Pareto dominated by any fractional allocation. Thus, a fractionally Pareto optimal allocation is also Pareto optimal, but the converse is not necessarily true (\Cref{prop:Nonexistence_EQx+fPO}).

\paragraph{$\eps$-\EQ{1} allocation}
Given any $\eps > 0$, an allocation $A$ is $\eps$-\emph{equitable up to one good} ($\eps$-\EQ1) if for every pair of agents $i,k \in [n]$ such that $A_k \neq \emptyset$, there exists some good $j \in A_k$ such that $(1+\eps) v_i(A_i) \geq v_k(A_k \setminus \{j\})$.

\begin{restatable}{theorem}{epsEQonePOPseudopolynomial}
 \label{thm:epsEQ1+PO_pseudopolynomial}
 Given any fair division instance with additive and strictly positive valuations and any $\eps > 0$, an allocation that is $3\eps$-equitable up to one good $(3\eps\text{-}\EQ{1})$ and $\eps$-Pareto optimal $(\eps\text{-}\PO{})$ always exists and can be computed in $\O(\poly(m,n,\ln v_{\max},\nicefrac{1}{\eps}))$ time, where $v_{\max} = \max_{i,j} v_{i,j}$.
\end{restatable}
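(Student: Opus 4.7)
The plan is to prove \Cref{thm:epsEQ1+PO_pseudopolynomial} by designing a Fisher-market algorithm that maintains an integral market equilibrium at all times and iteratively reduces the gap between the highest and lowest agent utilities until the $3\eps$-\EQ{1} condition is met. The market assigns each good $j$ a positive price $p_j$, and each agent $i$ has a maximum bang-per-buck (MBB) ratio $\alpha_i := \max_j v_{i,j}/p_j$. An \emph{integral market equilibrium} is a pair $(A,p)$ in which each good is fully allocated to some agent and every good in an agent's bundle is MBB for that agent; standard first-welfare arguments imply such an allocation is fractionally Pareto optimal (\fPO{}), which together with the multiplicative nature of the price updates will yield the $\eps$-\PO{} guarantee.

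The algorithm is initialized with the welfare-maximizing integral equilibrium obtained by assigning each good $j$ to an agent maximizing $v_{i,j}$ and setting $p_j = v_{i,j}$. In each iteration, identify the least-utility agent $\ell$ and highest-utility agent $h$; if for some $g \in A_h$ one already has $(1+3\eps)\, v_\ell(A_\ell) \geq v_h(A_h \setminus \{g\})$, halt. Otherwise, search the MBB graph for an alternating path of transfers from $h$ to $\ell$ along which a good can be swapped while preserving the equilibrium; if found, execute the swap (a \emph{local improvement}). If no such path exists, $\ell$ lies in a different MBB component from $h$, so uniformly multiply the prices of all goods in $\ell$'s MBB component by the smallest factor (at least $1+\eps$) required to introduce a new MBB edge into this component (a \emph{price rise}). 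Following \citet{BKV18Finding}, prices are rounded to integer powers of $1+\eps$ throughout. The critical difference from \citet{BKV18Finding} is that local improvements target the lowest \emph{utility} agent rather than the lowest \emph{spending} agent, which makes the rule and its correctness analysis more direct.

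The main obstacle will be bounding the running time via a two-level potential argument. At the outer level, each price rise either strictly enlarges some agent's MBB component or increases a good's discretized price level; since prices lie in a bounded range, there are only $\O(\log_{1+\eps}(v_{\max}/v_{\min})) = \O(\poly(m,n,\ln v_{\max},\nicefrac{1}{\eps}))$ possible levels per good, which bounds the number of price-rise phases. At the inner level, between consecutive price rises, each local improvement must strictly decrease a carefully chosen potential---e.g., the lexicographic signature of the sorted utility vector, or the size of the MBB-ancestor set of $h$ in the alternating tree---yielding a polynomial bound on the number of transfers per phase.

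For correctness, one verifies that at termination the extremal pair $(h,\ell)$ satisfies the $3\eps$-\EQ{1} condition; the factor of three arises as $(1+\eps)^3$, coming from the termination tolerance, the price-rounding slack, and the MBB-path multiplicative slippage. Extending this guarantee from the extremal pair to every ordered pair of agents is then a short additional argument using the definitions of $h$ and $\ell$ as the utility extremes together with the structure of the terminal equilibrium. Finally, $\eps$-\PO{} follows because the returned allocation is an integral equilibrium of a market whose prices approximate the true MBB-optimal ones to within a $(1+\eps)$-factor, precluding any true $(1+\eps)$-Pareto improvement. The exact statement \Cref{thm:EQ1+PO_pseudopolynomial} is then obtained by choosing $\eps$ small enough (as a function of $v_{\max}$) that $3\eps$-\EQ{1}/$\eps$-\PO{} collapses to exact \EQ{1}/\PO{} for integer valuations.
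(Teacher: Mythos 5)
Your overall architecture is the same as the paper's: start from the welfare-maximizing integral equilibrium with $p_j = v_{i,j}$, maintain \MBB{}-consistency (hence \fPO{}) throughout, alternate local swaps along alternating paths with uniform price rises on the least-utility agent's component, discretize to powers of $(1+\eps)$, and bound the running time with a two-level potential. However, there is a genuine gap in your termination rule. You halt as soon as the single pair $(h,\ell)$ of the \emph{highest}- and lowest-utility agents satisfies $(1+3\eps)\,v_\ell(A_\ell) \ge v_h(A_h\setminus\{g\})$ for some $g$, and you claim the guarantee extends to all ordered pairs by a ``short additional argument'' from the extremality of $h$ and $\ell$. No such argument exists: the \EQ{1} condition for the pair $(\ell,k)$ depends on $v_k(A_k) - \min_{j\in A_k} v_{k,j}$, and an agent of \emph{intermediate} utility can violate it while $h$ does not. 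For instance, if $v_\ell(A_\ell)=10$, $h$ holds one good worth $100$ (so $v_h(A_h\setminus\{g\})=0$), and some third agent $k$ holds fifty goods each worth $1$ to it, then your rule halts even though $v_k(A_k\setminus\{j\})=49>10$ for every $j\in A_k$. This is exactly why the paper defines an $\eps$-violator as \emph{any} agent $k$ with $v_k(A_k\setminus\{j\})>(1+\eps)v_i(A_i)$ for all $j\in A_k$ relative to the reference (minimum-utility) agent $i$, terminates only when no such agent exists, and drives the local search toward whichever violator is reachable --- not toward the maximum-utility agent. Your swap and price-rise subroutines would need to be retargeted accordingly.

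A second, less fatal but still real, gap is in the outer running-time bound: you assert that ``prices lie in a bounded range'' so that each good has only $\O(\log_{1+\eps} v_{\max})$ price levels, but you give no reason why prices cannot keep rising. The paper spends several lemmas on precisely this point: it shows that the set of $\eps$-violators never grows, that violators never become reachable from the reference agent, and hence that some violator's goods never undergo a price rise; this pins every agent's \MBB{} ratio at or above $1/w_{\max}$ and yields the lower bound on the potential $\sum_i \log_{1+\eps}\beta_i$ that makes the count of price-rise steps finite. Also note the paper discretizes by rounding the \emph{valuations} (the $\eps$-rounded instance) rather than the prices, which keeps the maintained outcome an \emph{exact} equilibrium of the rounded instance and lets \fPO{} follow cleanly from the first welfare theorem; rounding prices instead, as you propose, forces you to work with approximate \MBB{} sets and complicates the $\eps$-\PO{} argument.
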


When $0 < \eps \leq \frac{1}{16m v_{\max}^4}$, we recover \Cref{thm:EQ1+PO_pseudopolynomial} as a special case of \Cref{thm:epsEQ1+PO_pseudopolynomial} (see \Cref{lem:Bound_On_Eps_For_Exact_PO,lem:Bound_On_eps_for_Exact_EQone}). 

The remainder of this section develops the necessary preliminaries that will enable us to present our algorithm (Algorithm~\ref{alg:EQ1+PO}) and the analysis of its running time (\Cref{lem:RunningTime_ALG_EQ1+PO}) and correctness (\Cref{lem:Correctness_ALG_EQ1+PO_original_instance}). The detailed proofs of these results are presented subsequently in \Cref{subsec:Proof_RunningTime_ALG_EQ1+PO,subsec:Proof_RunningTime_Phase2,subsec:Proof_RunningTime_Phase3,subsec:Proof_Correctness_ALG_EQ1+PO_original_instance}.

\subsubsection*{Market Preliminaries}

\paragraph{Fisher market}
 A Fisher market is an economic model that consists of a set of divisible goods and a set of agents (or buyers), each of whom is given a \emph{budget} (or \emph{endowment}) of virtual money \citep{BS00compute}. The agents can use the virtual money to purchase a utility-maximizing subset of the goods but do not derive any utility from the money itself. Formally, a Fisher market is given by a tuple $\M = \langle [n],[m],\V,e \rangle$ consisting of a set of $n$ \emph{agents} $[n] = \{1,2,\dots,n\}$, a set of $m$ divisible \emph{goods} $[m] = \{1,2,\dots,m\}$, a \emph{valuation profile} $\V = \{v_1,v_2,\dots,v_n\}$ and a vector of \emph{endowments} or \emph{budgets} $\e = (e_1,e_2,\dots,e_n)$. 

A \emph{market outcome} refers to a pair $(A,\p)$, where $A = (A_1,\dots,A_n)$ is a \emph{fractional allocation} of the $m$ goods, and $\p = (p_1,\dots,p_m)$ is a \emph{price vector} that associates a price $p_j \geq 0$ with every good $j \in [m]$. The \emph{spending} of agent $i$ under the market outcome $(A,\p)$ is given by $s_i = \sum_{j = 1}^m A_{i,j} p_j$. The \emph{utility} derived by the agent $i$ under $(A,\p)$ depends linearly on the valuations as $v_i(A_i) =  \sum_{j=1}^m A_{i,j} v_{i,j}$.

\paragraph{Induced fair division instance}
A Fisher market $\M = \langle [n],[m],\V,e \rangle$ naturally defines a fair division instance $\I = \langle [n],[m],\V \rangle$, which we will refer to as the \emph{induced fair division instance}. This correspondence between Fisher markets and the fair division problem allows us to extend the fairness and efficiency notions defined in \Cref{sec:Preliminaries} to Fisher markets. Thus, we will say that an allocation $A$ is equitable/envy-free/Pareto optimal for a \emph{market} $\M$ if it is equitable/envy-free/Pareto optimal for the induced fair division instance $\I$.

\paragraph{\MBB{} ratio and \MBB{} set}
Given a price vector $\p = (p_1,\dots,p_m)$, define the \emph{bang-per-buck} ratio of agent $i$ for good $j$ as $\alpha_{i,j} \coloneqq v_{i,j}/p_j$.\footnote{If $v_{i,j} = 0$ and $p_j = 0$, then $\alpha_{i,j} \coloneqq 0$.} The \emph{maximum bang-per-buck} \emph{ratio} (or \MBB{} ratio) of agent $i$ is $\alpha_i \coloneqq \max_j \alpha_{i,j}$. The \emph{maximum bang-per-buck} \emph{set} (or \MBB{} set) of agent $i$ is the set of all goods that maximize the bang-per-buck ratio for agent $i$ at the price vector $\p$, i.e., $\MBB_i \coloneqq \{j \in [m] : v_{i,j}/p_j = \alpha_i\}$.

A market outcome $(A,\p)$ constitutes an \emph{equilibrium} if it satisfies the following conditions:
	\begin{itemize}
	\item \emph{Market clearing}: Each good is either priced at zero or is completely allocated. That is, for every good $j \in [m]$, either $p_j = 0$ or $\sum_{i=1}^n A_{i,j} = 1$.
	\item \emph{Budget exhaustion}: Agents spend their budgets completely, i.e., $s_i = e_i$ for all $i \in [n]$.
	\item \emph{\MBB{} consistency}: Each agent's allocation is a subset of its $\MBB{}$ set. That is, for every agent $i \in [n]$ and every good $j \in [m]$, $A_{i,j} > 0 \implies j \in \MBB_i$. Note that \MBB{} consistency implies that every agent maximizes its utility at the given prices $\p$ under the budget constraints.
	\end{itemize}

\Cref{prop:FirstWelfareTheorem} presents the well-known \emph{first welfare theorem} for Fisher markets \citep[Chapter~16]{MWG+95microeconomic}.

\begin{restatable}{prop}{FirstWelfareTheorem}
 \label{prop:FirstWelfareTheorem}
 For a Fisher market with linear utilities, any equilibrium outcome is fractionally Pareto optimal $(\fPO{})$.
\end{restatable}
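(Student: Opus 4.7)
The plan is to use the standard First Welfare Theorem argument: assume for contradiction that some fractional allocation $B$ Pareto-dominates the equilibrium allocation $A$, and derive that $B$ would demand strictly more total wealth at the equilibrium prices $\p$ than the agents collectively own, contradicting market clearing plus budget exhaustion.

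First I would use \MBB{} consistency together with budget exhaustion to observe that $v_i(A_i) = \alpha_i \cdot e_i$ for every agent $i$, where $\alpha_i$ is the bang-per-buck ratio at prices $\p$: every good $j$ with $A_{i,j} > 0$ lies in $\MBB_i$, hence $v_{i,j} = \alpha_i p_j$, so $v_i(A_i) = \alpha_i \sum_j A_{i,j} p_j = \alpha_i s_i = \alpha_i e_i$. Next, by the very definition of $\alpha_i$, we have $v_{i,j} \leq \alpha_i p_j$ for every good $j$, so linearity lifts this to $v_i(B_i) \leq \alpha_i \sum_j B_{i,j} p_j$ for the dominating bundle $B_i$. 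Combining with the Pareto-domination assumption $v_i(B_i) \geq v_i(A_i) = \alpha_i e_i$ (with strict inequality for some agent $k$) and dividing by $\alpha_i$ yields $\sum_j B_{i,j} p_j \geq e_i$ for every $i$ and strictly for $k$. Summing over agents gives $\sum_i \sum_j B_{i,j} p_j > \sum_i e_i$. On the other hand, feasibility $\sum_i B_{i,j} \leq 1$ combined with market clearing (every good with $p_j > 0$ is fully allocated in $A$) and budget exhaustion gives $\sum_i \sum_j B_{i,j} p_j \leq \sum_j p_j = \sum_j p_j \sum_i A_{i,j} = \sum_i s_i = \sum_i e_i$, a contradiction.

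The one delicate point I expect to spend effort on is handling goods priced at zero, which force the bang-per-buck ratio to be either zero (under the paper's convention, when $v_{i,j} = p_j = 0$) or undefined/infinite (when $v_{i,j} > 0$ and $p_j = 0$). The standard workaround is to note that at any equilibrium an agent with $v_{i,j} > 0$ and $p_j = 0$ would have unbounded demand for $j$, so one can separate the goods into priced and free, verify that free goods are already allocated to agents who value them, and apply the accounting identity only to the priced subset; both sides of the key inequality receive zero contribution from free goods. A similar care is needed when some $\alpha_i = 0$, but then agent $i$ values none of the priced goods and contributes nothing on either side of the accounting, and the strict inequality (hence the contradiction) can be driven entirely by agents with $\alpha_i > 0$. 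Once this bookkeeping is discharged, the proof reduces to the single-line chain of inequalities above.
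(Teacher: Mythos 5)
Your argument is correct and is exactly the standard proof of the first welfare theorem for linear Fisher markets (the paper itself gives no proof, deferring to Mas-Colell et al., Chapter 16, whose argument is the same price-accounting contradiction you give: \MBB{} consistency plus budget exhaustion yields $v_i(A_i)=\alpha_i e_i$, domination forces the dominating allocation to cost strictly more than $\sum_i e_i$ at prices $\p$, and market clearing caps total spending at $\sum_i e_i$). The degenerate cases you rightly flag---zero-priced goods that some agent values, and agents with $\alpha_i=0$---are where the unqualified statement actually needs a nonsatiation-type hypothesis, but they never arise where the paper invokes the proposition (strictly positive valuations and prices initialized to positive values), so the extra bookkeeping you describe is not needed there.
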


\paragraph{\MBB{}-allocation graph and alternating paths} 
Let $\M = \langle [n],[m],\V,e \rangle$ be Fisher market, and let $A$ and $\p$ denote an integral allocation and a price vector for $\M$, respectively. An \emph{\MBB{}-allocation graph} is an undirected bipartite graph $G$ with vertex set $[n] \cup [m]$ and an edge between agent $i \in [n]$ and good $j \in [m]$ if either $j \in A_i$ (called an \emph{allocation edge}) or $j \in \MBB_i$ (called an \emph{$\MBB$ edge}). Notice that if $A$ is \MBB{}-consistent (i.e., $j \in A_i \implies j \in \MBB_i$), then the allocation edges are a subset of \MBB{} edges.

For an \MBB{}-allocation graph, define an \emph{alternating path} $P = (i,j_1,i_1,j_2,i_2,\dots,i_{\ell - 1},j_\ell,k)$ from agent $i$ to agent $k$ (and involving the agents $i_1,i_2,\dots,i_{\ell-1}$ and the goods $j_1,j_2,\dots,j_\ell$) as a series of alternating \MBB{} and allocation edges such that $j_1 \in \MBB_i \cap A_{i_1}$, $j_2 \in \MBB_{i_1} \cap A_{i_2}$,$\dots$, $j_\ell \in \MBB_{i_{\ell - 1}} \cap A_k$. If such a path exists, we say that agent $k$ is \emph{reachable} from agent $i$ via an alternating path.\footnote{Note that no agent or good can repeat in an alternating path.} In this case, the \emph{length} of path $P$ is $2\ell$ since it consists of $\ell$ \MBB{} edges and $\ell$ allocation edges.

\paragraph{Reachability set} Let $G$ denote the \MBB{}-allocation graph of a Fisher market for the outcome $(A,\p)$. Fix a \emph{source} agent $i \in [n]$ in $G$. Define the \emph{level} of an agent $k \in [n]$ as half the length of the shortest alternating path from $i$ to $k$ if one exists (i.e., if $k$ is reachable from $i$), otherwise set the level of $k$ to be $n$. The level of the source agent $i$ is defined to be $0$. The \emph{reachability set} $\R_i$ of agent $i$ is defined as a level-wise collection of all agents that are reachable from $i$, i.e., $\R_i = (\R_i^{0},\R_i^{1},\R_i^{2},\dots,)$, where $\R_i^{\ell}$ denotes the set of agents that are at level $\ell$ with respect to agent $i$. Note that given an \MBB{}-allocation graph, a reachability set can be constructed in polynomial time via breadth-first search.

Given a reachability set $\R_i$, we can redefine an \emph{alternating path} as a set of alternating $\MBB{}$ and allocation edges \emph{connecting agents at a lower level to those at a higher level}. Formally, we will call a path $P = (i,j_1,i_1,j_2,i_2,\dots,i_{\ell - 1},j_\ell,k)$ \emph{alternating} if (1) $j_1 \in \MBB_i \cap A_{i_1}$, $j_2 \in \MBB_{i_1} \cap A_{i_2}$,$\dots$, $j_\ell \in \MBB_{i_{\ell - 1}} \cap A_k$, and (2) $\level(i) < \level(i_1) < \level(i_2) < \dots < \level(i_{\ell - 1}) < \level(k)$. Thus, an alternating path cannot have edges between agents at the same level.

\paragraph{Violators and path-violators} Given a Fisher market $\M = \langle [n],[m],\V,e \rangle$ and an allocation $A$, an agent $i \in [n]$ with the least utility among all the agents is called the \emph{reference agent}, i.e., $i \in \arg\min_{k \in [n]} v_k(A_k)$.\footnote{Ties are broken lexicographically.} An agent $k \in [n]$ is said to be a \emph{violator} if $A_k \neq \emptyset$ and for every good $j \in A_k$, we have that $v_k(A_k \setminus \{j\}) > v_i(A_i)$, where $i$ is the reference agent. Notice that the allocation $A$ is \EQ{1} if and only if there is no violator.

Given any $\eps > 0$, an agent $k \in [n]$ is an $\eps$-\emph{violator} if $A_k \neq \emptyset$ and for every good $j \in A_k$, we have $v_k(A_k \setminus \{j\}) > (1+\eps) v_i(A_i)$. Thus, an agent can be a violator without being an $\eps$-violator. An allocation $A$ is $\eps$-\EQ{1} if and only if there is no $\eps$-violator.

A closely related notion is that of a \emph{path-violator}. Let $i$ and $\R_i$ denote the reference agent and its reachability set, respectively. An agent $k \in \R_i$ is a \emph{path-violator} with respect to the alternating path $P = (i,j_1,i_1,j_2,i_2,\dots,i_{\ell - 1},j_\ell,k)$ if $v_k(A_k \setminus \{j_\ell\}) > v_i(A_i)$. Note that a path-violator (along a path $P$) need not be a violator as there might exist some good $j \in A_k$ not on the path $P$ such that $v_k(A_k \setminus \{j\}) \leq v_i(A_i)$. Finally, given any $\eps > 0$, an agent $k \in \R_i$ is an \emph{$\eps$-path-violator} with respect to the alternating path $P = (i,j_1,i_1,\dots,j_\ell,k)$ if $v_k(A_k \setminus \{j_\ell\}) > (1+\eps) v_i(A_i)$.

\paragraph{$\eps$-rounded instance}
Given any $\eps > 0$, an \emph{$\eps$-rounded instance} refers to a fair division instance $\langle [n], [m], \V \rangle$ in which the valuations are either zero or a non-negative integral power of $(1+\eps)$. That is, for every agent $i \in [n]$ and every good $j \in [m]$, we have $v_{i,j} \in \{0,(1+\eps)^t\}$ for some $t \in \N \cup \{0\}$. 

Given any instance $\I = \langle [n], [m], \V \rangle$, the \emph{$\eps$-rounded version} of $\I$ is an instance $\I' = \langle [n], [m], \W \rangle$ obtained by \emph{rounding up} the valuations in $\I$ to the nearest integral power of $(1+\eps)$. That is, the \emph{$\eps$-rounded version} of instance $\I = \langle [n], [m], \V \rangle$ is an $\eps$-rounded instance $\I' = \langle [n], [m], \W \rangle$ constructed as follows: For every agent $i \in [n]$ and every good $j \in [m]$, $ w_{i,j}:= (1+\eps)^{\lceil \log_{1+\eps}v_{i,j} \rceil}$ if $v_{i,j}> 0$, and $0$ otherwise. Notice that $v_{i,j} \leq w_{i,j} \leq (1+\eps)v_{i,j}$ for every agent $i$ and every good $j$. We will assume that the rounded valuations are also \emph{additive}, i.e., for any set of goods $S \subseteq [m]$, $w_i(S) \coloneqq  \sum_{j \in S} w_{i,j}$. 

\subsubsection*{Description of the Algorithm}

Given an input instance $\I = \langle [n],[m],\V \rangle$, we first construct its $\eps$-rounded version $\I' = \langle [n],[m],\W \rangle$, which is then provided as an input to \AlgEQonePO{} (Algorithm~\ref{alg:EQ1+PO}).

The algorithm consists of three phases. In Phase 1, each good is assigned to an agent with the highest valuation for it (Line~\ref{algline:Assignment_Phase1}). This ensures that the initial allocation is \emph{integral} as well as \emph{fractionally Pareto optimal} (\fPO{}).\footnote{Indeed, the said allocation is \MBB{}-consistent with respect to the prices in Line~\ref{algline:Prices_Phase1}, and is therefore an equilibrium outcome of a Fisher market in which each agent is provided a budget equal to its spending under the allocation. From \Cref{prop:FirstWelfareTheorem}, the allocation is \fPO{}.} (These two properties are always maintained by the algorithm.) If the allocation at the end of Phase 1 is $\eps$-\EQ{1} with respect to the rounded instance $\I'$, then the algorithm terminates and returns this allocation as the output (Line~\ref{algline:TerminatePhase1}). Otherwise, it proceeds to Phase 2.

The allocation at the start of Phase 2 is not $\eps$-\EQ{1}, so there must exist an $\eps$-violator. Starting from the level $\ell = 1$ (Line~\ref{algline:InitializeLevel_Phase2}), the algorithm now performs a level-by-level search for an $\eps$-violator in the reachability set of the reference agent (Line~\ref{algline:IfCondition_Phase2}). As soon as an $\eps$-violator, say $h$, is found (along some alternating path $P$), the algorithm performs a pairwise swap between $h$ and the agent that precedes it along $P$ (Line~\ref{algline:Swap_Phase2}). Since the swapped good is in the \MBB{} sets of both agents, the allocation continues to be \MBB{}-consistent after the swap. If, at any stage, the reference agent ceases to be the least-utility agent, Phase 2 restarts with the new reference agent (Line~\ref{algline:IdentityChange_Phase2}).

The above process continues until either the current allocation becomes $\eps$-\EQ{1} for the rounded instance $\I'$ (in which case the algorithm terminates and returns the current allocation as the output in Line~\ref{algline:TerminatePhase2}), or if no $\eps$-violator is reachable from the reference agent (Line~\ref{algline:WhileLoop_Phase2}). In the latter case, the algorithm proceeds to Phase 3.

Phase 3 involves uniformly raising the prices of all the \emph{reachable goods}, i.e., the set of all goods that are collectively owned by all agents that are reachable from the reference agent (Line~\ref{algline:PriceRise_Phase3}). The prices are raised until a previously non-reachable agent becomes reachable due to the appearance of a new \MBB{} edge (Line~\ref{algline:Start_Of_Phase_3}). The algorithm now switches back to Phase 2 to start a fresh search for an $\eps$-violator in the updated reachability set (Line~\ref{algline:GoBackToPhase2_Phase3}).

\paragraph{Comparison with the algorithm of \citet{BKV18Finding}}
As mentioned previously in \Cref{sec:Results}, our algorithm is inspired from the algorithm of \citet{BKV18Finding} for achieving envy-freeness up to one good (\EF{1}) together with Pareto optimality (\PO{}). At a high-level, both algorithms involve searching for a reachable violator (along an alternating path). If such an agent exists, then it loses a good through a pairwise swap. Otherwise, both algorithms use price-rise in order to discover a new \MBB{} edge to a previously unreachable agent. The main difference between the two algorithms is that \citet{BKV18Finding} define a violator in terms of excess \emph{spending}, whereas we define a violator in terms of excess \emph{utility}.\footnote{Formally, in the framework of \citet{BKV18Finding}, an agent $k \in [n]$ is an $\eps$-\emph{violator} if for every good $j \in A_k$, we have that $\p(A_k \setminus \{j\}) > (1+\eps) \p(A_i)$; here $\p(G) \coloneqq \sum_{g \in G} p_g$ is the sum of prices of all the goods in the bundle $G$, and $i$ is the least spender.} In other words, their algorithm performs local search in the space of spendings, whereas our algorithm does so in the space of utilities. As a result of this small but subtle difference, \citet{BKV18Finding} achieve an approximate equitability condition in terms of spendings (which they call price envy-freeness up to one good), whereas we are able to guarantee a similar property in terms of the utilities, which is precisely the desired \EQ{1} condition.

\paragraph{Analysis of the algorithm}
The running time and correctness of our algorithm are established by \Cref{lem:RunningTime_ALG_EQ1+PO} and \Cref{lem:Correctness_ALG_EQ1+PO_original_instance}, respectively, as stated below.

\begin{restatable}[\textbf{Running time}]{lemma}{RunningTimeALGEQonePO}
 \label{lem:RunningTime_ALG_EQ1+PO}
 Given as input any $\eps$-rounded instance with strictly positive valuations, \AlgEQonePO{} terminates in $\O(\poly(m,n,\ln v_{\max},\nicefrac{1}{\eps}))$ time steps, where $v_{\max} = \max_{i,j} v_{i,j}$.
\end{restatable}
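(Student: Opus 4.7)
I will bound the running time via a three-scale analysis: (i) each elementary step (swap or price rise) costs $\poly(m,n)$ time; (ii) the number of Phase~2 swaps and Phase~3 price rises within a single ``epoch'' --- a maximal interval during which the minimum utility $u^* \coloneqq \min_i v_i(A_i)$ is constant --- is polynomial; (iii) the number of epochs is polynomial. The product yields the claimed bound.

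\textbf{Bounding the number of epochs.} A Phase~2 swap transfers a good $j_\ell$ from an $\eps$-path-violator $k$ to the predecessor $i_{\ell-1}$ in the alternating path. The violator inequality $v_k(A_k \setminus \{j_\ell\}) > (1+\eps) u^*$ guarantees that $v_k$ remains above $u^*$ after the swap, and \MBB{} consistency gives $v_{i_{\ell-1}}(A_{i_{\ell-1}}) \geq u^*$ before (hence strictly above $u^*$ after) the transfer. No agent's utility ever drops below $u^*$, and Phase~3 does not touch utilities, so $u^*$ is monotonically non-decreasing throughout the execution, and strictly increases only when the reference agent itself absorbs a good (an $\ell = 1$ swap). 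In an $\eps$-rounded instance every bundle utility lies in $\{0\} \cup \{(1+\eps)^t : t \in \N \cup \{0\}\}$ and is bounded by $m v_{\max}(1+\eps)$, so $u^*$ takes at most $\lceil \log_{1+\eps}(m v_{\max}) \rceil = O((\ln m + \ln v_{\max})/\eps)$ distinct positive values. This caps the number of epochs.

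\textbf{Bounding iterations per epoch, and the main obstacle.} Within a single epoch I will define a potential $\Psi$ on the multiset of agent utilities sorted in decreasing order (lexicographic on the sorted vector of $(v_i(A_i))_{i \in [n]}$), and argue that each Phase~2 swap strictly decreases $\Psi$ by at least one step in a discrete grid of size $\poly(n, (\ln m + \ln v_{\max})/\eps)$, because within an epoch all utilities live on a bounded logarithmic range above the fixed value $u^*$. For Phase~3, the price-raise factor $\beta > 1$ is chosen minimally so that the \MBB{}-allocation graph changes materially, typically through the appearance of a new \MBB{} edge from a reachable to a non-reachable agent. The subtle point, and the main obstacle, is that a Phase~2 swap can shrink the reachability set $\R_i$ by destroying an allocation edge, so $\R_i$ is not monotone between Phase~3 invocations, and the naive ``at most $n-1$ price rises before everyone is reachable'' bound does not directly apply. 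To overcome this I would pair $\Psi$ with a price-based potential (for example $\sum_j \log p_j$), together with a proof that prices stay within a $\poly(m,n,v_{\max}, 1/\eps)$-multiplicative range over the entire run; this caps how many distinct \MBB{} edges can ever appear and thus bounds Phase~3 invocations globally. Finally, each elementary operation --- identifying the reference agent, recomputing reachability by BFS on the \MBB{}-allocation graph, locating an $\eps$-(path-)violator, and computing $\beta$ --- runs in $\poly(m,n)$ time. Combining the three bounds yields the claimed total running time $\O(\poly(m,n,\ln v_{\max}, \nicefrac{1}{\eps}))$.
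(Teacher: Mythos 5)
Your high-level decomposition (cost per elementary step, times Phase~2 iterations, times Phase~3 invocations) matches the paper's, but two of your three bounds rest on claims that do not hold, and the third leaves the hardest step unproven. First, in an $\eps$-rounded instance only the \emph{individual} valuations $w_{i,j}$ are powers of $(1+\eps)$; a \emph{bundle} utility is a sum of such powers and need not lie on any grid of size $\log_{1+\eps}(m v_{\max})$ --- the set of achievable bundle utilities can be exponentially large. This breaks both your epoch count (the minimum utility $u^*$ is indeed non-decreasing, but each strict increase can be arbitrarily small, so monotonicity alone does not cap the number of epochs polynomially) and your per-epoch potential $\Psi$ (there is no polynomial-size ``discrete grid'' for the sorted utility vector to walk down). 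Moreover, $\Psi$ as defined need not decrease under a swap at all: the receiving agent $h_{\ell-1}$ may value the transferred good enormously, so the largest entry of the sorted utility vector can jump upward. The paper avoids both problems by arguing per agent rather than per utility level: each time a fixed agent returns to being the reference, either its bundle has strictly grown (at most $m$ times) or --- because it must have been an $\eps$-path-violator when it last lost a good --- its utility has grown by a multiplicative $(1+\eps)$ factor (\Cref{lem:Lower_Bound_On_Increase_In_Utility}); combined with a pigeonhole over the $n$ agents this bounds the identity changes, and a separate lemma (borrowed from \citealp{BKV18Finding}) bounds consecutive swaps between identity changes.

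Second, for Phase~3 you correctly identify that the reachability set is not monotone and that the fix is a global price (equivalently, \MBB{}-ratio) potential anchored by a bound showing prices stay in a controlled range --- but you only assert that such a bound ``would'' be proven. That bound is the crux of the entire running-time analysis, and it is not routine: the paper establishes it by showing that the set of $\eps$-violators can only shrink over Phase~3 steps (\Cref{lem:Set_Of_Violators_Cannot_Grow}), that their bundles can only shrink (\Cref{lem:Allocation_Of_Violators_Cannot_Grow}), and that violators are never reachable at a price-rise step (\Cref{lem:Violators_Are_Not_Reachable}); hence some violator's good never has its price raised, which pins every agent's \MBB{} ratio above $1/w_{\max}$ (\Cref{lem:MBB_Lower_Bound_Violator_Good,cor:MBB_Lower_Bound_Implication}) and lets the potential $\sum_i \log_{1+\eps}\beta_i$ (which drops by an integer at every price rise, since all valuations and hence all prices are powers of $(1+\eps)$) terminate after $\O(n\log_{1+\eps} w_{\max})$ price rises. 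Without this argument, or a substitute for it, your Phase~3 bound --- and therefore the lemma --- is not established.
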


The proof of \Cref{lem:RunningTime_ALG_EQ1+PO} appears in \Cref{subsec:Proof_RunningTime_ALG_EQ1+PO}.

\begin{restatable}[\textbf{Correctness}]{lemma}{CorrectnessALGEQonePOOriginal}
 \label{lem:Correctness_ALG_EQ1+PO_original_instance}
 Let $\I$ be any fair division instance with strictly positive valuations and $\I'$ be its $\eps$-rounded version for any given $\eps > 0$. Then, the allocation $A$ returned by \AlgEQonePO{} for the input $\I'$ is $3\eps$-\EQ{1} and $\eps$-\PO{} for $\I$. In addition, if $\eps \leq \frac{1}{16m v_{\max}^4}$, then $A$ is \EQ{1} and \PO{} for $\I$.
\end{restatable}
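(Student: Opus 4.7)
The plan is to decompose the proof into two stages: first show that the output $A$ is simultaneously $\eps$-\EQ{1} and fractionally Pareto optimal (\fPO{}) for the \emph{rounded} instance $\I'$, then translate both guarantees back to the original instance $\I$ using the sandwich $v_{i,j} \leq w_{i,j} \leq (1+\eps)\,v_{i,j}$, and finally sharpen them under the small-$\eps$ regime by invoking integrality of $\V$.

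For the first stage, I would argue by invariant that throughout \AlgEQonePO{} the pair $(A,\p)$ is an equilibrium of the Fisher market on $\I'$ in which each agent's endowment equals its current spending; \Cref{prop:FirstWelfareTheorem} then yields that $A$ is \fPO{} for $\I'$. The key invariant to maintain is \MBB{}-consistency of the integral allocation at prices $\p$: it holds at initialization (each good goes to a highest-valuer at the initial prices); it is preserved by Phase~2 swaps because the swapped good lies in both endpoints' \MBB{} sets by the construction of the alternating path; and it is preserved by Phase~3 because scaling the prices of all reachable goods by a common factor does not disturb the \MBB{} and allocation edges among reachable agents. Market clearing and budget exhaustion follow from the integral assignment and the definition of the endowments. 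The $\eps$-\EQ{1} guarantee for $\I'$ is just the explicit termination condition of the algorithm, and termination itself is ensured by \Cref{lem:RunningTime_ALG_EQ1+PO}.

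The second stage is a direct calculation. If some $B$ were to $\eps$-Pareto-dominate $A$ in $\I$, then for every agent $k$,
\[
w_k(B_k) \;\geq\; v_k(B_k) \;\geq\; (1+\eps)\,v_k(A_k) \;\geq\; w_k(A_k),
\]
with strictness preserved, contradicting \fPO{} of $A$ in $\I'$; hence $A$ is $\eps$-\PO{} for $\I$. Given any $\eps$-\EQ{1} witness $j$ in $\I'$ for a pair $(i,k)$,
\[
v_k(A_k\setminus\{j\}) \;\leq\; w_k(A_k\setminus\{j\}) \;\leq\; (1+\eps)\,w_i(A_i) \;\leq\; (1+\eps)^2\,v_i(A_i) \;\leq\; (1+3\eps)\,v_i(A_i),
\]
using $(1+\eps)^2 \leq 1+3\eps$ for $\eps \leq 1$, so $A$ is $3\eps$-\EQ{1} for $\I$.

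For the sharpening under $\eps \leq 1/(16\,m\,v_{\max}^4)$, integrality of $\V$ is exploited. Exact \EQ{1} is the easy direction: the same witness $j$ obeys $v_k(A_k\setminus\{j\}) \leq (1+3\eps)\,v_i(A_i) < v_i(A_i)+1$ whenever $3\eps\cdot m\,v_{\max} < 1$, and integrality then collapses this to $v_k(A_k\setminus\{j\}) \leq v_i(A_i)$. Exact \PO{} is the main obstacle, because an $\eps$-\PO{} allocation is not automatically \PO{} in general: a discrete Pareto-improving $B$ can leave some agents indifferent, so the $(1+\eps)$-slack need not kick in on every coordinate. The way around this is to use the stronger property that $A$ is \fPO{} for $\I'$, not merely $\eps$-\PO{} for $\I$: any (possibly fractional) Pareto improvement over $A$ in $\I$ must, once the multiplicative $(1+\eps)$-gap between $v$ and $w$ is absorbed, still be an improvement in $\I'$ as long as $\eps$ is small enough relative to the integer gaps among attainable utility profiles on $[0,m\,v_{\max}]$. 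The polynomial factor $1/(16\,m\,v_{\max}^4)$ is exactly what is required to make this absorption work in the worst case, and it is where essentially all the quantitative care in the proof is concentrated.
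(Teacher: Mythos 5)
Your first two stages are correct and follow the paper's route exactly: the invariant argument for \MBB{}-consistency plus the first welfare theorem gives $\eps$-\EQ{1} and \fPO{} on $\I'$ (the paper's \Cref{lem:Correctness_ALG_EQ1+fPO_rounded_instance}), and your two sandwich calculations are precisely \Cref{lem:fPO_Rounded_eps_PO_Original,lem:epsEQ1_Rounded_3epsEQ1_Original}. The sharpening to exact \EQ{1} via integrality also matches \Cref{lem:Bound_On_eps_for_Exact_EQone}. You correctly identify exact \PO{} as the crux and correctly diagnose why naive coordinate-wise absorption fails when a Pareto improvement leaves some agents indifferent.

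However, your proposed resolution of that crux is not a valid argument, and it is not how the paper closes the gap. You suggest that a (possibly fractional) Pareto improvement $B$ over $A$ in $\I$ ``must still be an improvement in $\I'$'' once $\eps$ is small relative to integer gaps among attainable utilities. This fails for exactly the reason you flagged one sentence earlier: for an agent $k$ with $v_k(B_k) = v_k(A_k)$ there is \emph{zero} slack, and the independent roundings of the individual $w_{k,j}$'s can make $w_k(B_k) < w_k(A_k)$ even when $v_k(B_k) = v_k(A_k)$ (two bundles with equal $v$-value need not have equal $w$-value, since rounding is per-good). No choice of $\eps > 0$ repairs this, so \fPO{} on $\I'$ cannot be invoked directly against a dominator in $\I$, and ``integer gaps among attainable utility profiles'' is not where the threshold $\frac{1}{16 m v_{\max}^4}$ comes from. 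The paper's \Cref{lem:Bound_On_Eps_For_Exact_PO} instead argues through the market: it takes the final price vector $\p$, uses \MBB{}-consistency of $A$ to write $\p(A_k) \leq (1+\eps)\, v_k(A_k)/\alpha_k$, uses the definition of the \MBB{} ratio to lower-bound $\p(B_k)$ by $v_k(B_k)/\alpha_k$ (with an extra additive $1/\alpha_i$ for the strictly improving agent, coming from integrality of $\V$), and then sums the spending identity $\sum_k \p(A_k) = \sum_k \p(B_k) = \p([m])$ to force $1 < \eps\, \p([m])$. This is only a contradiction once one also proves the separate structural bound $p_j \leq w_{\max}^4$ on the final prices (\Cref{lem:Bound_On_Price}), which requires analyzing the price-rise dynamics of Phase 3; that bound is the actual source of the $v_{\max}^4$ in the threshold. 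Without this spending-conservation argument and the price bound, your sketch does not establish exact \PO{}.
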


The proof of \Cref{lem:Correctness_ALG_EQ1+PO_original_instance} appears in \Cref{subsec:Proof_Correctness_ALG_EQ1+PO_original_instance}.

Notice that the running time guarantee in \Cref{lem:RunningTime_ALG_EQ1+PO} is stated in terms of \emph{time steps}. A time step refers to a single iteration of Phase 1, Phase 2, or Phase 3. Since each individual iteration requires polynomial time, it suffices to analyze the running time of the algorithm in terms of the \emph{number} of iterations of the three phases.\footnote{Indeed, an iteration of Phase 1 involves assigning each good to the agent with the highest valuation and setting its price. An iteration of Phase 2 involves the construction of the reachability set (say via breadth-first or depth-first search), followed by performing a level-wise search for an $\eps$-path-violator, followed by performing a swap operation. An iteration of Phase 3 involves scanning the set of reachable goods and setting an appropriate value of the price-rise factor $\Delta$. All of these operations can be carried out in $\O(\poly(m,n))$ time.} We will use the terms \emph{step}, \emph{time step}, and \emph{iteration} interchangeably.

\renewcommand{\floatpagefraction}{.95}
\begin{algorithm}
 \DontPrintSemicolon
 \linespread{1.2}\selectfont
 \KwIn{An $\eps$-rounded instance $\I' = \langle [n],[m],\W \rangle$.}
 \KwOut{An integral allocation $A$.}
 \BlankLine
 \Comment{Phase 1: Initialization}
 \BlankLine
 \tikzmk{A}
 $A \leftarrow $ a utilitarian welfare maximizing allocation (assign good $j \in [m]$ to agent $i$ if $i \in \arg\max_{k \in [n]} w_{k,j}$)\label{algline:Assignment_Phase1}\;
 $\p \leftarrow $ For every good $j \in [m]$, set $p_j = w_{i,j}$ if $j \in A_i$\label{algline:Prices_Phase1}\;
 \lIf{$A$ is $\eps$-\EQ{1} for $\I'$\label{algline:TerminatePhase1}}{\KwRet{$A$}}
 \nonl \tikzmk{B}
 \boxit{mygray}
 \BlankLine
 \Comment{Phase 2: Remove \EQ{1} violations among the reachable agents}
 \BlankLine
 \oldnl \tikzmk{A}
 $i \leftarrow $ reference agent in $A$ \Comment*[r]{tiebreak lexicographically}\label{algline:Refresh_ReferenceAgent}
 $\R_i \leftarrow $ Reachability set of $i$ under $(A,\p)$\;
 $\ell = 1$ \Comment*[r]{initialize the level}\label{algline:InitializeLevel_Phase2}
 \While{$\R_i^{\ell}$ is non-empty and $A$ is not $\eps$-\EQ{1}\label{algline:WhileLoop_Phase2}}{
 		\uIf{$h \in \R_i^{\ell}$ is an $\eps$-path-violator along the alternating path $P = (i,j_1,h_1,\dots,j_{\ell - 1},h_{\ell - 1},j,h)$\label{algline:IfCondition_Phase2}}{$A_h \leftarrow A_h \setminus \{j\}$ and $A_{h_{\ell - 1}} \leftarrow A_{h_{\ell - 1}} \cup \{j\}$ \Comment*[r]{swap $j$}\label{algline:Swap_Phase2}
 			Repeat Phase 2 starting from Line~\ref{algline:Refresh_ReferenceAgent}\label{algline:IdentityChange_Phase2}
 		}\Else{$\ell \leftarrow \ell + 1$ \Comment*[r]{Proceed to the next level}}
 }\label{algline:MoveToNextLevel_Phase2}
  \lIf{$A$ is $\eps$-\EQ{1} for $\I'$\label{algline:TerminatePhase2}}{\KwRet{$A$}}
 \nonl \tikzmk{B}
 \boxit{mygray}
 \BlankLine
 \Comment{Phase 3: Price-rise}
 \BlankLine
 \oldnl \tikzmk{A} 
 $\Delta \leftarrow \min\limits_{h \in \R_i, \, j \in [m] \setminus A_{\R_i}} \frac{\beta_h}{w_{h,j}/p_j}$, where $\beta_h$ is the \MBB{} ratio of $h$ (in $\I'$) and $A_{\R_i} \coloneqq \cup_{h \in \R_i} A_h$ is the set of reachable goods 
 \label{algline:Start_Of_Phase_3}
 \BlankLine
 \Comment{$\Delta $ is the smallest price-rise factor that makes a new agent reachable}
 \BlankLine
 \ForEach{good $j \in A_{\R_i}$}{
 	$p_j \leftarrow \Delta \cdot p_j$\Comment*[r]{uniformly raise the prices of reachable goods}
 \label{algline:PriceRise_Phase3}}
 Repeat Phase 2 starting from Line~\ref{algline:Refresh_ReferenceAgent}\label{algline:GoBackToPhase2_Phase3}\;
 \nonl \tikzmk{B}
 \boxit{mygray}
 \caption{\AlgEQonePO{}}
 \label{alg:EQ1+PO}
\end{algorithm}

We are now ready to prove \Cref{thm:EQ1+PO_pseudopolynomial}.

\EQonePOPseudopolynomial*

\begin{proof}
Fix $\eps = \frac{1}{16 m v_{\max}^4}$. The $\eps$-rounded version $\I'$ can be constructed in $\O( \poly(m,n,\ln v_{\max}) )$ time. We run the algorithm \AlgEQonePO{} on the input $\I'$. From \Cref{lem:RunningTime_ALG_EQ1+PO}, we know that the algorithm terminates in $\O(\poly(m,n,\ln v_{\max},\nicefrac{1}{\eps}))$ time. \Cref{lem:Correctness_ALG_EQ1+PO_original_instance} implies that $A$ is \EQ{1} and \PO{} for $\I$.
\end{proof}

\subsection{Proof of Lemma~\ref{lem:RunningTime_ALG_EQ1+PO}}
\label{subsec:Proof_RunningTime_ALG_EQ1+PO}

Recall the statement of \Cref{lem:RunningTime_ALG_EQ1+PO}.
\RunningTimeALGEQonePO*
\begin{proof}
The proof of \Cref{lem:RunningTime_ALG_EQ1+PO} follows immediately from \Cref{lem:RunningTime_Phase2,lem:RunningTime_Phase3}, which are stated below.
\end{proof}

\begin{restatable}{lemma}{RunningTimePhaseTwo}
 \label{lem:RunningTime_Phase2}
 There can be at most $\O( \poly(m,n,\nicefrac{1}{\eps}) \ln m v_{\max} )$ consecutive iterations of Phase 2 before a Phase 3 step occurs.
\end{restatable}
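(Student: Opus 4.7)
The plan is to bound consecutive Phase~2 iterations via a nested potential function argument, exploiting the fact that prices (and therefore MBB ratios and MBB sets) remain frozen throughout Phase~2. The first step is to verify that every swap in Line~\ref{algline:Swap_Phase2} preserves MBB-consistency, because the transferred good $j$ lies in $\MBB_{h_{\ell-1}}$ by the definition of an alternating path and in $\MBB_h$ by the MBB-consistency of $A$ before the swap. The reachability structure can therefore be rebuilt meaningfully after each iteration.

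I would then decompose the Phase~2 iterations into \emph{epochs} separated by changes of the reference agent's utility $u^{*} = \min_k w_k(A_k)$, classifying each swap by the level $\ell$ at which the $\eps$-path-violator $h$ is discovered. If $\ell \geq 2$ then $h_{\ell - 1} \neq i$, the reference agent is untouched, and $h$'s post-swap utility still exceeds $(1+\eps)u^{*}$ (so no agent drops to the minimum), whence $u^{*}$ is unchanged. If $\ell = 1$ then $h_{\ell - 1} = i$ itself receives good $j$, and $u^{*}$ strictly increases, either because $i$ remains the reference agent with larger utility or because a new reference agent takes over whose utility exceeds the old $u^{*}$. Hence $u^{*}$ is monotone non-decreasing across epochs, permitting a separate outer count.

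The technical heart is bounding the length of a single epoch, during which only $\ell \geq 2$ swaps occur and each such swap strictly lowers $w_h(A_h)$ by $w_{h,j} \geq 1$ (since positive rounded valuations are at least $1$). I would use an inner combinatorial potential on the utility vector of the reachable set---for instance, a lexicographically sorted vector of reachable utilities indexed by BFS level, or a weighted squared-utility sum $\sum_{k \in \R_i \setminus \{i\}} w_k(A_k)^2$ tied to the BFS structure---that strictly improves per swap because the swap moves utility from a higher level $\ell$ to a lower level $\ell-1$, which is the ``right direction'' for the chosen potential. Combined with the fact that rounded valuations lie in $\{(1+\eps)^t\}$ with $t = \O((1/\eps)\ln v_{\max})$ and $u^{*} \leq (1+\eps) m v_{\max}$, the potential takes at most $\O(\poly(m,n,1/\eps) \ln(m v_{\max}))$ distinct values across all Phase~2 iterations between two Phase~3 calls, yielding the claimed bound.

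The main obstacle is designing the inner potential: because $h$ and $h_{\ell-1}$ can have different MBB ratios (the swapped good is common to both MBB sets but valued differently), the utility lost by $h$ need not equal the utility gained by $h_{\ell-1}$, so neither the sum of reachable utilities nor a naive leximin ordering is monotone under swaps. Producing a monotone potential requires carefully exploiting the level-first BFS order of the search in Line~\ref{algline:IfCondition_Phase2} together with the invariance of MBB sets within Phase~2, mirroring the structural arguments of \citet{BKV18Finding} but operating in utility-space rather than spending-space.
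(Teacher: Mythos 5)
Your proposal has two genuine gaps, one in the outer count and one in the inner count. For the outer count: the claim that an $\ell=1$ swap strictly increases $u^{*}=\min_k w_k(A_k)$ fails when several agents tie at the minimum --- the reference agent $i$ gains a good, but another minimizer keeps utility exactly $u^{*}$, so $u^{*}$ is unchanged (the paper's \Cref{lem:Reference_Utility_Nondecreasing} accordingly claims only non-decrease). More importantly, your bound on the number of epochs rests on counting ``distinct values'' of the potential using the fact that individual valuations are powers of $(1+\eps)$; but utilities are \emph{sums} of up to $m$ such powers and can take exponentially many distinct values, so this count does not deliver $\O(\poly(m,n,\nicefrac{1}{\eps})\ln m v_{\max})$. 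Even granting that each $\ell=1$ swap raises the reference utility by at least $1$, that only gives a bound linear in $m w_{\max}$, i.e.\ pseudopolynomial in $v_{\max}$ rather than logarithmic. The paper's key missing ingredient is \Cref{lem:Lower_Bound_On_Increase_In_Utility}: whenever the \emph{same} agent returns to being the reference agent, either its bundle has strictly grown (at most $m$ times per agent) or its utility has grown by a multiplicative factor of $(1+\eps)$ --- the latter because at the last moment it lost a good it was an $\eps$-path-violator, so its remaining utility exceeded $(1+\eps)$ times the then-current (hence, by monotonicity, the earlier) reference utility. Pigeonhole over the $n$ agents then bounds the number of reference-identity changes by $mn\log_{1+\eps}(m w_{\max})$ (\Cref{lem:Bound_On_Identity_Changes}), which is where the $\ln(m v_{\max})/\eps$ factor actually comes from. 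Your argument contains no analogue of this multiplicative gain.

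For the inner count --- the number of consecutive swaps between two reference-identity changes or before a Phase 3 step --- your write-up explicitly stops short: you list candidate potentials and then observe that none is monotone, precisely because the utility lost by $h$ need not equal the utility gained by $h_{\ell-1}$. This is the step the paper delegates to \citet{BKV18Finding} (their Lemma 13, cited verbatim as \Cref{lem:Bound_On_Consecutive_Swaps}), whose argument is combinatorial --- tracking how goods move only toward strictly lower BFS levels --- rather than utility-based, and it is a genuine ingredient you would still have to supply. As written, neither the inner nor the outer bound in the proposal is actually proved.
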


\begin{restatable}{lemma}{RunningTimePhaseThree}
 \label{lem:RunningTime_Phase3}
 There can be at most $\O( \poly(n,\nicefrac{1}{\eps})  \ln v_{\max} )$ Phase 3 steps during any execution of \AlgEQonePO{}.
\end{restatable}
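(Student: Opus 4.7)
The plan is to exhibit an integer-valued potential function $\Phi$ that is monotone non-decreasing across the entire execution of \AlgEQonePO{}, increases by at least $1$ at every Phase 3 step, and satisfies $\Phi = \O(\poly(n, \nicefrac{1}{\eps}) \ln v_{\max})$ at all times. Any such potential immediately yields the claimed bound on the number of Phase 3 steps. My candidate is
\[ \Phi \;:=\; \sum_{h \in [n]} t_h, \qquad \text{where } t_h \;:=\; \log_{1+\eps}(1/\alpha_h), \]
so $t_h \in \N \cup \{0\}$ records how many multiplicative $(1+\eps)$-factors agent $h$'s current \MBB{} ratio has lost relative to its initial value of $1$.

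The first preparatory step would be to verify that $t_h$ is indeed a non-negative integer throughout. In the $\eps$-rounded instance every positive valuation is a power of $(1+\eps)$, Phase 1 sets each price equal to some agent's valuation, and each Phase 3 factor $\Delta = \beta_h/(v_{h,j}/p_j)$ is the ratio of two powers of $(1+\eps)$. Hence every price, and therefore every \MBB{} ratio, remains a power of $(1+\eps)$ throughout. Next I would show that each $t_h$ is monotone non-decreasing: Phase 2 swaps touch no prices, while in Phase 3 one checks that the \MBB{} set of every reachable agent lies entirely within the reachable goods (an \MBB{} edge from a reachable agent to a non-reachable good, combined with that good's allocation edge, would make the good's owner reachable, a contradiction), so $\alpha_h$ is multiplied by exactly $1/\Delta$ for reachable $h$ and is unchanged otherwise.

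For the per-step increment, observe that the minimizing pair $(h,j)$ in the definition of $\Delta$ has $j \notin \MBB_h$ (otherwise $j$'s owner would already lie in $\R_i$), so $\beta_h > v_{h,j}/p_j$. Because both sides are powers of $(1+\eps)$, their ratio is at least $1+\eps$, giving $\log_{1+\eps} \Delta \geq 1$. Combined with $|\R_i| \geq 1$ (the reference agent always lies in its own reachability set), each Phase 3 step contributes at least $1$ to $\Phi$.

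The main obstacle is bounding $\Phi$ from above by $\O(\poly(n, \nicefrac{1}{\eps}) \ln v_{\max})$. The strategy is to bound $t_h$ agent-by-agent and sum. Via the equilibrium identity $s_h = v_h(A_h) \cdot (1+\eps)^{t_h}$, together with $v_h(A_h) \geq 1$ whenever $A_h \neq \emptyset$ in the rounded instance, bounding $t_h$ reduces to bounding the spending $s_h$. Here I would exploit the fact that a Phase 3 step is triggered only after Phase 2 fails to find an $\eps$-path-violator in $\R_i$, which pins the utility of every reachable agent to within an $\O(\eps U + v_{\max})$ additive slack of the reference agent's utility $U$. Combining this with a separate monotonicity argument showing that $U$ is itself non-decreasing across the entire execution and trivially bounded by $m v_{\max}$, and then propagating the resulting bounds along alternating paths of length at most $n$ to convert utility bounds into spending bounds, yields the required per-agent bound on $t_h$. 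This propagation---which mirrors, with utilities in place of spendings, the analogous argument of \citet{BKV18Finding}---is where the bulk of the technical work lies.
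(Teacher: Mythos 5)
Your potential $\Phi = \sum_h t_h$ is, up to a sign flip, exactly the potential the paper uses ($\Phi^t = \sum_{i} \log_{1+\eps}\beta_i^t$), and your arguments for integrality, monotonicity, and the per-step increment of at least $1$ (via $\Delta$ being a ratio of two distinct powers of $(1+\eps)$, hence $\Delta \geq 1+\eps$) all match the paper's. The gap is in the one place you yourself flag as ``the bulk of the technical work'': the upper bound $\Phi = \O(\poly(n,\nicefrac{1}{\eps})\ln v_{\max})$. Your proposed route---bound $t_h$ via the identity $s_h = w_h(A_h)\cdot(1+\eps)^{t_h}$ and then bound the spending $s_h$ using utility bounds on reachable agents---is circular: an agent's spending is its utility divided by its \MBB{} ratio, so converting a utility bound into a spending bound requires precisely the lower bound on $\beta_h$ (equivalently, the upper bound on $t_h$) that you are trying to establish. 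Propagating along alternating paths only relates the \MBB{} ratios of agents within the same reachable component to one another; it cannot fix the absolute scale, because every agent in that component---including the reference agent itself---has had the prices of its goods raised an a priori unbounded number of times.

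The missing idea is the anchor the paper constructs in \Cref{lem:Set_Of_Violators_Cannot_Grow,lem:Allocation_Of_Violators_Cannot_Grow,lem:Violators_Are_Not_Reachable,lem:MBB_Lower_Bound_Violator_Good}: across Phase 3 steps the set of $\eps$-violators never grows, the bundle of a persistent $\eps$-violator never grows, and an $\eps$-violator is never reachable from the reference agent (otherwise the algorithm would still be in Phase 2). Consequently there exist an agent $k$ and a good $j \in A_k$ that never experience a price rise, so $p_j$ stays frozen at its initial value $w_{k,j}$ and every agent $i$ satisfies $\beta_i \geq w_{i,j}/p_j = w_{i,j}/w_{k,j} \geq 1/w_{\max}$ at every Phase 3 step. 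This single frozen good is what bounds the potential by $n\log_{1+\eps} w_{\max}$; without it, or some substitute absolute anchor, your sketch does not close.
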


The proofs of \Cref{lem:RunningTime_Phase2,lem:RunningTime_Phase3} are provided in \Cref{subsec:Proof_RunningTime_Phase2,subsec:Proof_RunningTime_Phase3}, respectively.

\subsection{Proof of Lemma~\ref{lem:RunningTime_Phase2}}
\label{subsec:Proof_RunningTime_Phase2}

The proof of \Cref{lem:RunningTime_Phase2} relies on several intermediate results (\Cref{lem:Bound_On_Consecutive_Swaps,lem:Reference_Utility_Nondecreasing,lem:Lower_Bound_On_Increase_In_Utility,lem:Bound_On_Identity_Changes}) that are stated below.

\begin{restatable}{lemma}{BoundOnConsecutiveSwaps}
 \label{lem:Bound_On_Consecutive_Swaps}
 There can be at most $\O( \poly(m,n) )$ consecutive swap operations in Phase 2 before either the identity of the reference agent changes or a Phase 3 step occurs.
\end{restatable}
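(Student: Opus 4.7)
The plan is to fix a maximal window of consecutive Phase~2 iterations during which the reference agent $i$ does not change and no Phase~3 step occurs, then exhibit a polynomially-bounded potential that strictly improves with every swap. Throughout such a window the price vector $\p$ is frozen, so the \MBB{} ratios and the \MBB{} edges of the $\MBB$-allocation graph are invariant; only the allocation edges evolve, and each swap from Line~\ref{algline:Swap_Phase2} transfers a single good $j$ from the level-$\ell$ agent $h$ to the level-$(\ell-1)$ agent $h_{\ell-1}$ along the chosen alternating path.

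The natural candidate is the ``center-of-mass'' potential
\[
\Phi(A) \;\coloneqq\; \sum_{k \in \R_i} \bigl(2n - \level(k)\bigr)\,|A_k|,
\]
which measures how close the goods sitting in the reachability set are to the reference agent. If the level of every agent other than $h_{\ell-1}$ were preserved by the swap, then $\Phi$ would grow by exactly $(2n - (\ell - 1)) - (2n - \ell) = 1$. Since $\Phi \le 2n\cdot m$ at all times, this alone would bound the number of swaps in the window by $\O(mn) = \O(\poly(m,n))$, which is the desired conclusion.

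The main obstacle is that a single swap can rewire the reachability structure: deleting the allocation edge $(h,j)$ may sever shortest alternating paths through $h$ and lift the levels of $h$ and some of its descendants in $\R_i$, while inserting the allocation edge $(h_{\ell-1}, j)$ may shorten alternating paths for agents whose \MBB{} set contains $j$. I would therefore prove a structural lemma asserting that every Phase~2 swap leaves $\Phi$ strictly greater than before. The argument splits naturally into two parts: (i)~show that $h_{\ell-1}$ stays at level exactly $\ell - 1$, because any strictly shorter alternating path to $h_{\ell-1}$ in the new graph can be translated into an equally short path in the old graph by replacing the only newly-created allocation edge with its predecessor in the old graph; and (ii)~show that the aggregate effect of the induced level changes on the remaining agents is non-negative, by pairing each level drop (which contributes positively to $\Phi$) with at most one level rise of a ``matched'' agent, and charging any residual rises to the single deleted edge $(h,j)$ via a careful accounting on the alternating-path DAG.

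The hardest part will be item~(ii), where cascading re-routings around the deleted edge can simultaneously shift the levels of many agents at once. I expect to handle this by analyzing the relative displacement of goods in the BFS layering and exploiting the fact that the old and new $\MBB$-allocation graphs differ in only two allocation edges; this local difference imposes structural constraints on how levels may change and should suffice to guarantee net non-negativity of $\Phi$. Combining the structural lemma with the bound $\Phi \le 2nm$ yields the claimed $\O(\poly(m,n))$ ceiling on the number of consecutive swaps in any window before either the reference agent must change or a Phase~3 step must fire, establishing the lemma.
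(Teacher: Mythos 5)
First, note that the paper does not actually prove \Cref{lem:Bound_On_Consecutive_Swaps}: it states that the proof is identical to Lemma~13 of \citet{BKV18Finding} and omits it. So the comparison here is against that deferred argument, whose structure is: (a) show that during consecutive Phase~2 steps (fixed prices, fixed reference agent) the level of every agent is \emph{non-decreasing}, and (b) amortize, using the fact that each of the $n$ levels can increase at most $n$ times, so the total ``damage'' to a level-weighted potential over the whole window is at most $\poly(m,n)$, while each swap moves one good down exactly one level.

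Your proposal contains a genuine gap, and the invariant it rests on is in fact false. You claim that $\Phi(A) = \sum_{k \in \R_i}(2n-\level(k))|A_k|$ strictly increases at every swap, and you plan to establish this in item~(ii) by ``pairing each level drop with at most one level rise.'' But there are no level drops to pair with: one can show (essentially your item~(i), pushed through for all agents) that no agent's level can decrease under a Phase~2 swap, since the new allocation edge $(h_{\ell-1},j)$ cannot create a shorter entry into $j$ without contradicting the pre-swap level of $j$'s old owner $h$. Meanwhile, deleting the allocation edge $(h,j)$ can sharply \emph{raise} levels: if $j$ was the only good through which $h$ was reached, then $h$ and every agent reachable only through $h$ may jump to level $n$ (drop out of $\R_i$ entirely), and these agents may collectively own up to $m-1$ goods. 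The resulting loss in $\Phi$ is of order $nm$, which the $+1$ gained from moving a single good down one level cannot offset. So the per-swap monotonicity of $\Phi$ fails, and no local pairing argument can rescue it. The repair is exactly the amortization in part~(b) above: once level monotonicity is established, the total increase in $\sum_k \level(k)|A_k|$ attributable to level changes over the entire window is at most $m n^2$ (each agent's level rises at most $n$ times, each rise weighted by at most $m$ goods), each swap otherwise contributes a fixed $-1$, and the potential lives in $[0,nm]$; this bounds the number of consecutive swaps by $\O(mn^2)$, which is the $\O(\poly(m,n))$ claimed. Your write-up never carries out the accounting for item~(ii) --- it is flagged as ``the hardest part'' and left as an expectation --- and, as argued, the specific accounting you propose cannot succeed.
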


The proof of \Cref{lem:Bound_On_Consecutive_Swaps} is identical to \citep[Lemma 13]{BKV18Finding} and is therefore omitted.

Throughout, we will use the phrase \emph{at time step $t$} to refer to the state of the algorithm at the beginning of the time step $t$. In addition, we will use $i_t$ and $A^t \coloneqq (A^t_1,\dots,A^t_n)$ to denote the reference agent and the allocation maintained by the algorithm at the beginning of time step $t$, respectively. Thus, for instance, the utility of the reference agent at time step $t$ is $w_{i_t}(A^t_{i_t})$.

\begin{restatable}{lemma}{ReferenceUtilityNondecreasing}
 \label{lem:Reference_Utility_Nondecreasing}
 The utility of the reference agent cannot decrease with time. That is, for any time step $t$, 
\begin{align*}
w_{i_t}(A^t_{i_t}) \leq w_{i_{t+1}}(A^{t+1}_{i_{t+1}}).
\end{align*}
\end{restatable}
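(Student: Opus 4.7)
The plan is to reduce the claim to the statement that
$u_{\min}(A) \coloneqq \min_{k \in [n]} w_k(A_k)$ is nondecreasing along the trajectory of \AlgEQonePO{}. This reduction is immediate: the algorithm always defines the reference agent $i_t$ as an argmin of $w_k(A^t_k)$ (with lexicographic tie-breaking), so $w_{i_t}(A^t_{i_t}) = u_{\min}(A^t)$, and the lemma is then equivalent to $u_{\min}(A^{t+1}) \geq u_{\min}(A^t)$.

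First I would dispose of the easy cases. A Phase~3 price-rise step modifies only the prices $\p$, leaving the allocation $A$ (and hence every $w_k(A_k)$) fixed, so $u_{\min}(A^{t+1}) = u_{\min}(A^t)$. Similarly, a Phase~2 iteration that merely increments $\ell$ without discovering an $\eps$-path-violator leaves $A$ untouched. Thus the entire content of the lemma lies in the analysis of a single Phase~2 swap.

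For such a swap, an $\eps$-path-violator $h$ is identified along an alternating path $P = (i, j_1, h_1, \ldots, h_{\ell-1}, j, h)$ rooted at the current reference agent $i = i_t$ (with the convention $h_0 \coloneqq i$ when $\ell = 1$), and the good $j$ is transferred from $h$ to $h_{\ell-1}$, leaving every other bundle intact. I would then check each agent's new utility against $u_{\min}(A^t)$. Any agent $k \notin \{h, h_{\ell-1}\}$ retains its bundle, so $w_k(A^{t+1}_k) = w_k(A^t_k) \geq u_{\min}(A^t)$. The predecessor $h_{\ell-1}$ only gains the good $j$, whose rounded value is strictly positive, hence $w_{h_{\ell-1}}(A^{t+1}_{h_{\ell-1}}) > w_{h_{\ell-1}}(A^t_{h_{\ell-1}}) \geq u_{\min}(A^t)$. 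For $h$ itself, the $\eps$-path-violator condition gives
\[
w_h(A^{t+1}_h) \;=\; w_h(A^t_h \setminus \{j\}) \;>\; (1+\eps)\, w_i(A^t_i) \;\geq\; u_{\min}(A^t).
\]
Taking the minimum over all agents yields $u_{\min}(A^{t+1}) \geq u_{\min}(A^t)$, as required.

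I do not anticipate any real obstacle: the entire argument is driven by the definition of an $\eps$-path-violator, which is tailored so that removing the swapped good from $h$'s bundle still leaves $h$ strictly above the current reference utility. The minor case $\ell = 1$ (in which the predecessor coincides with the reference agent $i$) only means that $i$'s utility strictly rises, and does not alter the analysis.
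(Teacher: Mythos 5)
Your proof is correct, and it is in fact slightly more careful than the paper's own argument. The paper's proof is a two-line observation: the only step that changes utilities is a Phase~2 swap, and the reference agent never loses a good in a swap, so its utility cannot decrease. Strictly speaking, that only bounds $w_{i_t}(A^{t+1}_{i_t})$ from below; since the identity of the reference agent may change at $t+1$ (indeed, the agent $h$ that loses the good could a priori become the new minimizer), one still needs to check that \emph{no} agent's utility drops below the old minimum. Your reformulation as monotonicity of $u_{\min}(A) = \min_k w_k(A_k)$, together with the explicit case analysis --- unchanged agents, the gainer $h_{\ell-1}$, and the loser $h$, whose post-swap utility exceeds $(1+\eps)\,w_{i_t}(A^t_{i_t})$ by the very definition of an $\eps$-path-violator --- supplies exactly the missing step. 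So the two proofs share the same core idea (only swaps matter, and the path-violator condition protects the loser), but yours makes explicit the part the paper leaves implicit.
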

\begin{proof}
The only way in which the utility of a reference agent can change is via a swap operation in Phase 2. By construction, a reference agent can never lose a good during a swap operation (though it can possibly receive a good). Therefore, the utility of a reference agent cannot decrease.
\end{proof}

\begin{restatable}{lemma}{LowerBoundOnIncreaseInUtility}
 \label{lem:Lower_Bound_On_Increase_In_Utility}
 Let $i$ be a fixed agent. Consider any set of consecutive Phase 2 steps during the execution of \AlgEQonePO{}. Suppose that $i$ turns from a reference to a non-reference agent during time step $t$. Let $t' > t$ be the first time step after $t$ at which $i$ once again becomes a reference agent. Then, either $A_i^t$ is a strict subset of $A_i^{t'}$ or $w_i(A_i^{t'}) > (1+\eps) w_i(A_i^t)$.
\end{restatable}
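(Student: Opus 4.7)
The plan is to establish two structural facts about agent $i$'s trajectory between steps $t$ and $t'$, and then conclude via a simple case split on whether $i$ ever loses a good during this interval.

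First, I would argue that $i$ ceasing to be the reference at step $t$ forces the triggering swap to occur along an alternating path of length one, so that $i$ itself receives a good and consequently $A_i^{t+1} \supsetneq A_i^t$. The only allocation-changing operation inside Phase 2 is a path-violator swap along some $P = (i, j_1, h_1, \dots, j_\ell, h)$. If $\ell \geq 2$, then $h_{\ell-1} \neq i$ so $i$'s bundle is untouched, while $h$'s post-swap utility $w_h(A_h^t \setminus \{j_\ell\})$ still strictly exceeds $(1+\eps)\, w_i(A_i^t) > w_i(A_i^t)$ by the $\eps$-path-violator condition, and $h_{\ell-1}$'s utility only rises; hence $i$ retains minimum utility and stays reference (with lex tiebreaking), contradicting the hypothesis. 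Thus $\ell = 1$ and $h_{\ell-1} = i$ receives $j_1$, giving $A_i^{t+1} \supsetneq A_i^t$.

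Next, I would prove a persistence claim: once $i$ loses any good at some step $s \in [t+1, t'-1]$, the strict inequality $w_i(A_i^{s'}) > (1+\eps)\, w_i(A_i^t)$ holds for every $s' \geq s+1$. When $i$ loses $g$ at step $s$, agent $i$ is itself the $\eps$-path-violator, so by the defining inequality together with the Reference Utility Non-decreasing lemma applied iteratively from $t$ to $s$,
\[ w_i(A_i^{s+1}) = w_i(A_i^s \setminus \{g\}) > (1+\eps)\, w_{i_s}(A_{i_s}^s) \geq (1+\eps)\, w_i(A_i^t). \]
Between successive losses, $i$'s utility can only weakly increase (since all other changes would be gains), and every subsequent loss re-establishes the same lower bound by the identical argument.

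The conclusion then follows by a two-case split. If $i$ never loses a good during $[t+1, t'-1]$, then $A_i^{t'} \supseteq A_i^{t+1} \supsetneq A_i^t$, giving the strict subset alternative. Otherwise, $i$ loses at least one good, and the persistence claim yields $w_i(A_i^{t'}) > (1+\eps)\, w_i(A_i^t)$. The main obstacle I expect is the structural claim in the first step---carefully ruling out every way the reference identity could change at step $t$ without $i$ gaining a good. The crucial point is that the $\eps$-path-violator condition is strict and carries a multiplicative slack of $(1+\eps)$, so even after shedding the swapped good, no agent's utility can fall to $w_i(A_i^t)$ or below; combined with the fact that gains only push utilities up, no new minimum-utility agent can emerge from a swap at level $\ell \geq 2$.
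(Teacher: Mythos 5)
Your proof is correct and follows essentially the same route as the paper's: the reference-to-non-reference transition forces $i$ to gain a good, and any subsequent loss triggers the $\eps$-path-violator inequality combined with the monotonicity of the reference agent's utility to yield the $(1+\eps)$ multiplicative jump. The only differences are cosmetic — you spell out why the triggering swap must occur at level one (which the paper asserts) and track the bound forward from each loss rather than anchoring at the last loss before $t'$.
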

\begin{proof}
In order for a reference agent to turn into a non-reference agent, it must receive a good during a swap operation. That is, agent $i$ must receive a good at time $t$ and hence $A_i^t$ is a strict subset of $A_i^{t+1}$. If agent $i$ does not lose any good between $t+1$ and $t'$, then the claim follows. Therefore, for the rest of the proof, we will assume that agent $i$ loses at least one good between $t+1$ and $t'$.

Among all the time steps between $t+1$ and $t'$ at which agent $i$ loses a good, let $\tau$ be the last one. Let $i_\tau$ be the reference agent at time step $\tau$. Since the utility of the reference agent is non-decreasing with time (\Cref{lem:Reference_Utility_Nondecreasing}), we have that
\begin{align}
w_{i_\tau}(A_{i_\tau}^\tau) \geq w_i(A_i^t).
\label{eqn:LowerBoundOnIncreaseInUtility_1}
\end{align}
Let $g$ denote the good lost by agent $i$ at time step $\tau$. An agent that loses a good must be an $\eps$-path violator (with respect to an alternating path involving that good). Therefore,
\begin{align}
w_i(A_i^\tau \setminus \{g\}) > (1+\eps) w_{i_\tau}(A_{i_\tau}^\tau).
\label{eqn:LowerBoundOnIncreaseInUtility_2}
\end{align}
Since $i$ does not lose any good between $\tau$ and $t'$, we have
\begin{align}
w_i(A_i^{t'}) \geq w_i(A_i^{\tau+1}) = w_i(A_i^\tau \setminus \{g\}).
\label{eqn:LowerBoundOnIncreaseInUtility_3}
\end{align}
Combining \Cref{eqn:LowerBoundOnIncreaseInUtility_1,eqn:LowerBoundOnIncreaseInUtility_2,eqn:LowerBoundOnIncreaseInUtility_3} gives 
\begin{align*}
w_i(A_i^{t'}) > (1+\eps) w_i(A_i^t),
\end{align*}
 as desired.
\end{proof}

\begin{restatable}{lemma}{BoundOnIdentityChanges}
 \label{lem:Bound_On_Identity_Changes}
 There can be at most $\O( \poly(m,n,\nicefrac{1}{\eps}) \ln m v_{\max} )$ changes in the identity of the reference agent before a Phase 3 step occurs.
\end{restatable}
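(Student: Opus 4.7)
The plan is, for each agent $i \in [n]$, to bound the number $K_i$ of \emph{reference stretches} of $i$---maximal time intervals during which $i$ is the reference agent---within a single run of consecutive Phase 2 steps, and then sum $K_i$ over $i$ to bound the total number of identity changes. Denote by $\sigma_i^k$ the time step at which $i$ becomes the reference agent for the $k$-th time, and by $\tau_i^k$ the step at which $i$ ceases to be reference at the end of the $k$-th stretch. Applying \Cref{lem:Lower_Bound_On_Increase_In_Utility} with $t = \tau_i^k$ and $t' = \sigma_i^{k+1}$ yields a dichotomy: either (A) $w_i(A_i^{\sigma_i^{k+1}}) > (1+\eps)\, w_i(A_i^{\tau_i^k})$, or (B) $A_i^{\tau_i^k} \subsetneq A_i^{\sigma_i^{k+1}}$. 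During a reference stretch, the reference agent's bundle can only grow (as in the proof of \Cref{lem:Reference_Utility_Nondecreasing}), so $A_i^{\sigma_i^k} \subseteq A_i^{\tau_i^k}$; chaining this inclusion with the dichotomy gives, in Case~A, $w_i(A_i^{\sigma_i^{k+1}}) > (1+\eps)\, w_i(A_i^{\sigma_i^k})$, and, in Case~B, $|A_i^{\sigma_i^{k+1}}| \geq |A_i^{\sigma_i^k}| + 1$.

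Next I would bound the two types of transitions separately. Since the rounded valuations lie in $\{0\} \cup \{(1+\eps)^t : t \in \N \cup \{0\}\}$, any non-empty bundle has utility at least $1$. Thus, once $w_i(A_i^{\sigma_i^k})$ first becomes positive it stays so, and each subsequent Case~A transition multiplies it by a factor of at least $1+\eps$; since $w_i(A_i) \leq m\, v_{\max}$, the total number of Case~A transitions is $O(\nicefrac{1}{\eps} \cdot \ln(m v_{\max}))$. For Case~B, the cardinality $|A_i^{\sigma_i^k}|$ strictly grows, but a subsequent Case~A transition may reset it; since $|A_i| \leq m$, between any two consecutive Case~A transitions there can be at most $m$ Case~B transitions, giving $O(\nicefrac{m}{\eps} \cdot \ln(m v_{\max}))$ Case~B transitions in total. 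Consequently $K_i = O(\nicefrac{m}{\eps} \cdot \ln(m v_{\max}))$, and summing over the $n$ agents yields the claimed bound $O(\poly(m,n,\nicefrac{1}{\eps}) \cdot \ln(m v_{\max}))$ on the number of identity changes.

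The main obstacle I expect is the asymmetry of the dichotomy of \Cref{lem:Lower_Bound_On_Increase_In_Utility}: Case~A controls utility multiplicatively while Case~B controls only cardinality additively, so no single clean potential function captures both. The resolution sketched above is to use Case~A transitions themselves as dividers of the timeline, so that between them the cardinality counter is monotone and bounded by $m$. A minor point of care is the edge case in which $w_i(A_i^{\sigma_i^1})$ starts at $0$ (empty bundle), but this only adds a single additional transition and is absorbed into the Case~A count.
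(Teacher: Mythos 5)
Your proof is correct and follows essentially the same route as the paper: both rest on the dichotomy of \Cref{lem:Lower_Bound_On_Increase_In_Utility}, charging each return of an agent to the reference role either to a multiplicative $(1+\eps)$ utility gain (at most $\log_{1+\eps}(m w_{\max})$ of these) or to a strict bundle growth (at most $m$ of these between consecutive utility gains). The only difference is bookkeeping: you count reference stretches per agent and sum over $i$, whereas the paper applies a pigeonhole over the $n$ agents combined with the global monotonicity of the reference utility (\Cref{lem:Reference_Utility_Nondecreasing}); both yield the same $\O(mn \cdot \nicefrac{1}{\eps} \cdot \ln (m v_{\max}))$ bound.
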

\begin{proof}
From \Cref{lem:Lower_Bound_On_Increase_In_Utility}, we know that each time the algorithm cycles back to a some agent $i$ as the reference agent, either the allocation of agent $i$ grows strictly by at least one good, or its utility increases by at least a multiplicative factor of $(1+\eps)$. By pigeonhole principle, after every $n$ consecutive changes in the identity of the reference agent, the algorithm must cycle back to some agent as the reference. Along with the fact that the utility of the reference agent is non-decreasing with time (\Cref{lem:Reference_Utility_Nondecreasing}), we get that after every $mn$ consecutive identity changes, the utility of the reference agent must grow multiplicatively by a factor of $(1+\eps)$. Since the utility of any agent can be at most $m w_{\max}$ (where $w_{\max} = \max_{i,j} w_{i,j}$), there can be at most $mn \log_{1+\eps} m w_{\max}$ changes in the identity of the reference agent during the execution of the algorithm. Furthermore, for $\eps$-rounded valuations, we have that $w_{\max} \leq (1+\eps) v_{\max}$. The stated bound now follows by observing that $\frac{1}{\ln (1+\eps)} \leq \frac{2}{\eps}$ for every $\eps \in (0,1)$.
\end{proof}

We are now ready to prove \Cref{lem:RunningTime_Phase2}.
\RunningTimePhaseTwo*

\begin{proof}
From \Cref{lem:Bound_On_Identity_Changes}, we know that there can be at most $\O( \poly(m,n,\nicefrac{1}{\eps}) \ln m v_{\max} )$ changes in the identity of the reference agent (in Phase 2) before a Phase 3 step occurs. Furthermore, \Cref{lem:Bound_On_Consecutive_Swaps} implies that there can be at most $\O( \poly(m,n) )$ swap operations between two consecutive identity changes or an identity change and a Phase 3 step. Combining these implications gives the desired bound. 
\end{proof}

\subsection{Proof of Lemma~\ref{lem:RunningTime_Phase3}}
\label{subsec:Proof_RunningTime_Phase3}

The proof of \Cref{lem:RunningTime_Phase3} relies on several intermediate results (\Cref{lem:Set_Of_Violators_Cannot_Grow,lem:Allocation_Of_Violators_Cannot_Grow,lem:Violators_Are_Not_Reachable,lem:MBB_Lower_Bound_Violator_Good,cor:MBB_Lower_Bound_Implication}) that are stated and proved below. It will be useful to define the set $E_t$ of all $\eps$-violators at time step $t$. That is,
$$E_t \coloneqq \{k \in [n] : w_k(A^t_k \setminus \{j\}) > (1+\eps) w_{i_t}(A^t_{i_t}) \, \forall j \in A^t_k \},$$
where $i_t$ is the reference agent at time step $t$.

Some of our proofs will require the following assumption:
\begin{assumption}
	At the end of Phase 1 of \AlgEQonePO{}, every agent is assigned at least one good.
\label{assumption:Each_agent_gets_a_good}
\end{assumption}
This assumption can be ensured via efficient preprocessing techniques similar to those used by \citet{BKV18Finding}. We refer the reader to Section B.1 of their paper for details.

\begin{restatable}{lemma}{SetOfViolatorsCannotGrow}
 \label{lem:Set_Of_Violators_Cannot_Grow}
 Let $t$ and $t'$ be two Phase 3 time steps such that $t < t'$. Then, $E_{t'} \subseteq E_t$.
\end{restatable}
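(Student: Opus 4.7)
The plan is to induct on the number of Phase~3 steps, so it suffices to prove $E_{t+1}\subseteq E_t$ for two consecutive Phase~3 time steps $t$ and $t+1$. Two preliminary observations are central. First, at every Phase~3 time step $\tau$ one has $E_\tau\cap \R_{i_\tau} = \emptyset$: if $k\in\R_{i_\tau}$ were an $\eps$-violator, then $k$ would in particular be an $\eps$-path-violator along the BFS alternating path to $k$ (removing the last good on that path from $A_k$ already leaves utility exceeding $(1+\eps)\,w_{i_\tau}(A_{i_\tau}^\tau)$), so Phase~2 would have acted on $k$ rather than exiting its \textbf{while}-loop. In particular, every $k\in E_{t+1}$ is unreachable from $i_{t+1}$. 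Second, the reference agent's utility is non-decreasing by \Cref{lem:Reference_Utility_Nondecreasing}.

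The heart of the argument is a bundle-invariance claim: for every $k\in E_{t+1}$, $A_k^t = A_k^{t+1}$. Granting this, the conclusion is immediate, since for each $j\in A_k^t$ we have $w_k(A_k^t\setminus\{j\}) = w_k(A_k^{t+1}\setminus\{j\}) > (1+\eps)\,w_{i_{t+1}}(A_{i_{t+1}}^{t+1}) \geq (1+\eps)\,w_{i_t}(A_{i_t}^t)$, so $k\in E_t$. To prove the claim, suppose for contradiction that $A_k^t\neq A_k^{t+1}$, and let $\tau^*\in(t,t+1)$ be the last Phase~2 iteration at which $k$'s bundle changed. Either $k$ gained a good $g$ at $\tau^*$ as the predecessor of an $\eps$-path-violator, or $k$ lost a good $j^\#$ at $\tau^*$ as the $\eps$-path-violator itself. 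In the gain sub-case with $k = i_{\tau^*}$, applying the violator condition at $t+1$ with $j=g$ gives $w_k(A_k^{\tau^*}) > (1+\eps)\,w_{i_{t+1}}(A_{i_{t+1}}^{t+1})$, but $w_k(A_k^{\tau^*}) = w_{i_{\tau^*}}(A_{i_{\tau^*}}^{\tau^*}) \leq w_{i_{t+1}}(A_{i_{t+1}}^{t+1})$, contradicting $\eps>0$ and the strict positivity of valuations. In the loss sub-case, combine the $\eps$-path-violator condition $w_k(A_k^{\tau^*}\setminus\{j^\#\}) > (1+\eps)\,w_{i_{\tau^*}}(A_{i_{\tau^*}}^{\tau^*})$ with the violator condition at $t+1$ applied to an arbitrary $j\in A_k^{\tau^*}\setminus\{j^\#\}$ to obtain $w_k(A_k^{\tau^*}\setminus\{j\}) > (1+\eps)\,w_{i_{t+1}}(A_{i_{t+1}}^{t+1}) + w_{k,j^\#}$; together with the existence (if $k\notin E_{\tau^*}$) of a good witnessing $w_k(A_k^{\tau^*}\setminus\{j^{\star}\}) \leq (1+\eps)\,w_{i_{\tau^*}}(A_{i_{\tau^*}}^{\tau^*})$ this forces $0 > w_{k,j^\#}$, a contradiction; when $k\in E_{\tau^*}$, recurse backward on the bundle-change history.

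The main obstacle is the non-reference predecessor sub-case of the gain: the local bound from $k$ not being an $\eps$-path-violator at $\tau^*$ leaves a gap proportional to the value of the last good on $k$'s BFS path at level $\ell-1$, which the non-decreasing reference utility alone does not close. I expect the resolution to exploit that $k\notin\R_{i_{t+1}}$: immediately after the gain at $\tau^*$, the same BFS alternating path still reaches $k$ from $i_{\tau^*}$, so the only way for $k$ to become unreachable by time $t+1$ is through a change in the reference agent. Tracing that identity change, while using the stability of $A_k$ after $\tau^*$ and the MBB/allocation-edge structure along the path, should either force further Phase~2 activity on $k$ (contradicting the choice of $\tau^*$) or break $k$'s status as an $\eps$-violator at $t+1$. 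This mirrors, in the utility-based setting, the spending-based argument of \citet{BKV18Finding}.
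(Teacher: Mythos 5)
Your reduction to consecutive Phase 3 steps, your use of \Cref{lem:Reference_Utility_Nondecreasing}, and the two sub-cases you do work out (the reference-agent gain and the loss with $k\notin E_{\tau^*}$) are all sound. But the proof has a genuine gap, and it sits exactly at the crux of the lemma: the sub-case where $k$ at level $\ell\geq 1$ gains a good $g$ from an $\eps$-path-violator at level $\ell+1$ and thereby becomes an $\eps$-violator. You acknowledge you cannot close this case and speculate that the resolution goes through $k\notin\R_{i_{t'}}$ and a trace of reference-agent identity changes. That is not the mechanism, and it is unlikely to work: unreachability of $k$ at the later Phase 3 step tells you nothing about whether its violator status, once acquired, persists. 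The paper's actual argument is local and algorithmic: the swap happened at level $\ell+1$ only because no agent at level $\leq\ell$ was an $\eps$-path-violator, so there is a good $g'$ on the alternating path to $k$ with $w_k(A_k^\tau\setminus\{g'\})\leq(1+\eps)w_{i_\tau}(A_{i_\tau}^\tau)$; the swap of $g$ does not disturb that path, so once $k$ becomes an $\eps$-violator it is the \emph{unique} $\eps$-path-violator at level $\leq\ell$, and the very next Phase 2 iteration removes $g'$ from $k$. After that, removing $g$ from $A_k^\tau\cup\{g\}\setminus\{g'\}$ leaves exactly $A_k^\tau\setminus\{g'\}$, so $k$ reverts to being a non-$\eps$-violator. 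In other words, violator status created by a gain is destroyed one step later and can never survive to the next Phase 3 step; no global bookkeeping of bundles or reference identities is needed.

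Two further structural problems. First, your central ``bundle-invariance'' claim $A_k^t=A_k^{t'}$ for $k\in E_{t'}$ is stronger than what is true or needed: an agent that is already an $\eps$-violator can become reachable during Phase 2, lose a good as an $\eps$-path-violator, and remain an $\eps$-violator, so equality can fail; the paper's companion result (\Cref{lem:Allocation_Of_Violators_Cannot_Grow}) claims only $A_k^{t'}\subseteq A_k^t$, and even that is proved \emph{using} the present lemma's gain argument rather than the other way around. (The subset version would still suffice for your final deduction thanks to strictly positive valuations, but you would still need the gain case to prove it.) Second, the ``recurse backward on the bundle-change history'' step when $k\in E_{\tau^*}$ is not a proof: the recursion either terminates at $t$ (which does not by itself yield $A_k^t=A_k^{t'}$) or lands back in the gain sub-case you have not handled. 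As written, the argument does not establish the lemma.
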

\begin{proof}
It suffices to consider consecutive Phase 3 steps $t$ and $t'$ such that all intermediate time steps $t+1, t+2, \dots, t'-1$ occur in Phase 2. Suppose, for contradiction, that there exists some agent $k \in E_{t'} \setminus E_t$. Observe that a non-$\eps$-violator cannot turn into an $\eps$-violator in Phase 3 as the allocation of the goods remains fixed during price-rise. Therefore, the only way in which $k$ can turn into an $\eps$-violator is via a swap operation in Phase 2. In the rest of the proof, we will argue that if there is a swap operation at time step $\tau$ (where $t < \tau < t'$) that turns $k$ into an $\eps$-violator, then there is a subsequent swap operation at time step $\tau + 1$ that turns it back into a non-$\eps$-violator. This will provide the desired contradiction.

Suppose that agent $k$ is at level $\ell$ in the reachability set when it receives a good $g$ that turns it into an $\eps$-violator. Recall that a swap operation involves transferring a good from an agent at a higher level $\ell + 1$ to one at a lower level $\ell$. Furthermore, a swap involving an agent at level $\ell + 1$ happens only when no agent in the levels $1,2,\dots,\ell$ is an $\eps$-path violator. Therefore, agent $k$ cannot be an $\eps$-path violator just before the time step $\tau$. In other words, there must exist a good $g'$ on an alternating path from the reference agent $i_\tau$ to agent $k$ such that
\begin{align}
(1+\eps) w_{i_\tau}(A^\tau_{i_\tau}) \geq w_k(A^\tau_k \setminus \{g'\}).
\label{eqn:Set_Of_Violators_Cannot_Grow_1}
\end{align}
Since agent $k$ becomes an $\eps$-violator (and hence an $\eps$-path violator) after receiving the good $g$, we have
\begin{align*}
w_k(A^\tau_k \cup \{g\} \setminus \{g'\}) & > (1+\eps) w_{i_{\tau+1}}(A^{\tau+1}_{i_{\tau+1}}) = (1+\eps) w_{i_\tau}(A^\tau_{i_\tau}),
\end{align*}
where the equality follows from the observation that neither the identity nor the allocation of the reference agent changes during the above swap. Note that the swap involving $g$ does not affect the alternating path to agent $k$ that includes the good $g'$. This means that agent $k$ now becomes the \emph{only} $\eps$-path-violator at level $\ell$ or below. Therefore, in a subsequent swap operation at time step $\tau + 1$, the algorithm will take $g'$ away from agent $k$, resulting in a new bundle $A_k^{\tau+1} = A^\tau_k \cup \{g\} \setminus \{g'\}$. From \Cref{eqn:Set_Of_Violators_Cannot_Grow_1}, we get that agent $k$ is a non-$\eps$-violator up to the removal of the good $g$, as desired.
\end{proof}

\begin{restatable}{lemma}{AllocationOfViolatorsCannotGrow}
 \label{lem:Allocation_Of_Violators_Cannot_Grow}
 Let $t$ and $t'$ be two Phase 3 time steps such that $t < t'$. Then, for any $k \in E_{t'}$, $A^{t'}_k \subseteq A^t_k$.
\end{restatable}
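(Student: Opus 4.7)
The plan is to mimic the proof of Lemma~\ref{lem:Set_Of_Violators_Cannot_Grow}, but track which specific goods sit in $k$'s bundle rather than merely whether $k$ is an $\eps$-violator. By chaining the claim along the Phase~3 steps between $t$ and $t'$ (using Lemma~\ref{lem:Set_Of_Violators_Cannot_Grow} to propagate membership in $E_{t'}$ back to every intermediate Phase~3 step), it suffices to treat the case where $t$ and $t'$ are \emph{consecutive} Phase~3 steps, so only a single intervening Phase~2 run needs analysis.

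Suppose, for contradiction, that $k \in E_{t'}$ yet $g \in A^{t'}_k \setminus A^{t}_k$ for some good $g$. Since Phase~3 does not move goods and $g \in A^{t'}_k$, agent $k$ must have received $g$ in at least one Phase~2 swap in $(t,t')$; let $\tau$ denote the \emph{latest} such step. At $\tau$, $k$ plays the role of the level-$(\ell-1)$ recipient in a swap triggered by an $\eps$-path-violator at level $\ell$. Because the while-loop of Phase~2 processes $\eps$-path-violators in order of increasing level, $k$ itself cannot be an $\eps$-path-violator along the alternating path being used; writing $g'$ for the good immediately preceding $k$ on that path, this yields the inequality $w_k(A^{\tau}_k \setminus \{g'\}) \le (1+\eps)\, w_{i_{\tau}}(A^{\tau}_{i_{\tau}})$, which I will call $(\star)$.

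After the swap, $A^{\tau+1}_k = A^{\tau}_k \cup \{g\}$. I would now run exactly the case split from the proof of Lemma~\ref{lem:Set_Of_Violators_Cannot_Grow}: either $k$ is already not an $\eps$-violator at $\tau+1$ (so a witness good exists), or $k$ becomes an $\eps$-violator at $\tau+1$, in which case $k$ is the \emph{unique} $\eps$-path-violator at level $\ell-1$ or below and the very next swap strips $g'$ from $k$, producing $A^{\tau+2}_k = A^{\tau}_k \cup \{g\} \setminus \{g'\}$. Combining $(\star)$ with Lemma~\ref{lem:Reference_Utility_Nondecreasing} then gives $w_k(A^{\tau+2}_k \setminus \{g\}) = w_k(A^{\tau}_k \setminus \{g'\}) \le (1+\eps)\, w_{i_{\tau+2}}(A^{\tau+2}_{i_{\tau+2}})$, so $g$ itself certifies that $k$ is not an $\eps$-violator at $\tau+2$.

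The main obstacle, and the technical heart of the argument, will be propagating this ``not an $\eps$-violator'' conclusion from $\tau+2$ all the way to $t'$. Because $\tau$ is the \emph{latest} acquisition of $g$, the good $g$ stays in $A^{\sigma}_k$ for every $\sigma \in [\tau+2, t']$. I would inductively apply the same cycle reasoning to every further good $k$ receives in this interval: each such acquisition again forces $k$ to be a recipient (hence not an $\eps$-path-violator at that instant), and any induced $\eps$-violator status is immediately undone by the next swap. Losses only decrease $k$'s utility. Combined with the monotonicity of the reference utility (Lemma~\ref{lem:Reference_Utility_Nondecreasing}), this sustains, as an invariant across every Phase~2 step in $[\tau+2, t']$, the existence of some good in $A^{\sigma}_k$ whose removal keeps $k$'s utility at most $(1+\eps)\, w_{i_\sigma}(A^{\sigma}_{i_\sigma})$. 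Taking $\sigma = t'$ yields $k \notin E_{t'}$, contradicting the hypothesis, and hence $A^{t'}_k \subseteq A^{t}_k$.
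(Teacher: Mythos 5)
Your proposal is correct and follows essentially the same route as the paper, whose own proof is only a sketch that locates the swap at which $k$ acquires $g$, notes $k$ cannot be an $\eps$-path-violator (hence not an $\eps$-violator) at that moment, and defers the rest to ``an argument similar to that in the proof of Lemma~\ref{lem:Set_Of_Violators_Cannot_Grow}.'' You fill in that deferred step more explicitly (reduction to consecutive Phase~3 steps, the choice of the latest acquisition of $g$, and the invariant propagated to $t'$), which is a faithful elaboration rather than a different argument.
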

\begin{proof}(Sketch.)
Suppose, for contradiction, that there exists a good $g \in A^{t'}_k \setminus A^t_k$. The only way in which agent $k$ could have acquired the good $g$ is via a swap operation at time step $\tau$ for some $t < \tau < t'$. Thus, agent $k$ cannot be an $\eps$-path-violator just before the time step $\tau$, and therefore also cannot be an $\eps$-violator. By an argument similar to that in the proof of \Cref{lem:Set_Of_Violators_Cannot_Grow}, it follows that agent $k$ cannot be an $\eps$-violator at time step $t'$, giving us the desired contradiction.
\end{proof}

\begin{restatable}{lemma}{ViolatorsAreNotReachable}
 \label{lem:Violators_Are_Not_Reachable}
 For any Phase 3 time step $t$, $E_t \cap \R_{i_t} = \emptyset$.
\end{restatable}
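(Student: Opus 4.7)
The plan is a short proof by contradiction that exploits the key asymmetry between the definitions of $\eps$-violator (a universal condition over all goods in the agent's bundle) and $\eps$-path-violator (an existential condition tied to a specific good on a specific path). The intuition is that any reachable $\eps$-violator is automatically an $\eps$-path-violator along \emph{any} path reaching it, so if one existed at the start of Phase 3, the Phase 2 loop would not have exited empty-handed.

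First, I would pin down the precise condition that triggers Phase 3. Looking at the \textbf{while} loop in Line~\ref{algline:WhileLoop_Phase2}, there are only two ways to exit: either $A$ becomes $\eps$-\EQ{1} (in which case the algorithm returns in Line~\ref{algline:TerminatePhase2} and never reaches Phase 3), or $\R_{i_t}^\ell$ becomes empty after the level counter $\ell$ has been incremented past the highest occupied level. Hence, at the Phase 3 time step $t$, the algorithm must have scanned every level of $\R_{i_t}$ and found no $\eps$-path-violator along any alternating path from $i_t$.

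Next, I would suppose for contradiction that there exists an agent $k \in E_t \cap \R_{i_t}$. Reachability gives an alternating path
\[
P = (i_t, j_1, h_1, j_2, h_2, \dots, j_\ell, k),
\]
where $j_\ell \in A^t_k$ is the good on the final allocation edge. The crucial step is to apply the definition of $\eps$-violator to the specific good $j_\ell$: since $k \in E_t$, the inequality $w_k(A^t_k \setminus \{j\}) > (1+\eps)\, w_{i_t}(A^t_{i_t})$ holds for \emph{every} $j \in A^t_k$, and in particular for $j = j_\ell$. By the definition of $\eps$-path-violator, this means $k$ is an $\eps$-path-violator along $P$, contradicting the characterization of Phase 3 entry established in the previous step.

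I do not expect any real obstacle here. The only mild subtlety is making sure that $i_t$ refers to the \emph{final} reference agent before Phase 3 begins (the identity may have changed many times during Phase 2 via Line~\ref{algline:IdentityChange_Phase2}), but this is handled automatically because the exit condition of the \textbf{while} loop is evaluated with respect to whichever reference agent is current, and Line~\ref{algline:Start_Of_Phase_3} uses that same agent's reachability set. No induction, potential function, or market structure is needed; the proof is a one-line consequence of unpacking definitions.
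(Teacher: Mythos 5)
Your proof is correct and is essentially the same argument as the paper's: a reachable $\eps$-violator is automatically an $\eps$-path-violator along the alternating path witnessing its reachability (since the violator condition is universal over goods in its bundle), so the algorithm would have remained in Phase 2. Your version just spells out the definitional unpacking that the paper leaves implicit.
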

\begin{proof} 
Suppose, for contradiction, that there exists some $k \in E_t \cap \R_{i_t}$ at time step $t$, i.e., $k$ is an $\eps$-violator that is reachable (via some alternating path). Then, agent $k$ must also be an $\eps$-path violator, implying that the algorithm continues to be in Phase 2 at time step $t$ and therefore cannot enter Phase 3.
\end{proof}

\begin{restatable}{lemma}{MBBLowerBound}
 \label{lem:MBB_Lower_Bound_Violator_Good}
 Let $t$ be a Phase 3 time step. Then, there exists an $\eps$-violator $k \in E_t$ and a good $j \in A_k^t$ such that for every agent $i \in [n]$, $\beta_i^t \geq \nicefrac{w_{i,j}}{w_{k,j}}$, where $\beta_i^t$ is the \MBB{} ratio of agent $i$ at time step $t$.
\end{restatable}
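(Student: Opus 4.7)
The plan is to pick any $\eps$-violator $k \in E_t$ and any good $j \in A_k^t$, and show that $p_j^t = w_{k,j}$ (i.e.\ the price of $j$ at time $t$ equals $k$'s valuation for $j$). Granting this, the claim follows immediately: by definition of the $\MBB$ ratio, $\beta_i^t \geq w_{i,j}/p_j^t = w_{i,j}/w_{k,j}$ for every agent $i \in [n]$. Note that $E_t$ is non-empty because Phase~3 is triggered only when $A^t$ fails $\eps$-\EQ{1}, and strict positivity of valuations ensures $w_{k,j}>0$ so that the ratio is well-defined.

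To obtain $p_j^t=w_{k,j}$, I would first argue that $j$ lies in $k$'s \emph{initial} bundle $A_k^0$ produced in Phase~1. Extending \Cref{lem:Allocation_Of_Violators_Cannot_Grow} to cover the endpoint at the end of Phase~1 (the proof goes through verbatim, since between time $0$ and the first Phase~3 step only Phase~2 swaps occur) gives $A_k^t \subseteq A_k^0$. Because Phase~1 assigns each good to an agent with maximum valuation and sets its price equal to that valuation, we have $w_{k,j} = \max_i w_{i,j}$ and $p_j^0 = w_{k,j}$ for every $j \in A_k^0$.

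Second, I would show that $p_j$ has never been raised. Prices only change in Phase~3, so it suffices to consider every past Phase~3 step $t_0 < t$. By an extended version of \Cref{lem:Set_Of_Violators_Cannot_Grow}, $k \in E_t \subseteq E_{t_0}$, so $k$ is an $\eps$-violator at $t_0$ and hence non-reachable by \Cref{lem:Violators_Are_Not_Reachable}. A second application of \Cref{lem:Allocation_Of_Violators_Cannot_Grow} between $t_0$ and $t$ yields $A_k^t \subseteq A_k^{t_0}$, so $j \in A_k^{t_0}$. Thus at time $t_0$ the good $j$ is owned by the non-reachable agent $k$, giving $j \notin A_{\R_{i_{t_0}}}^{t_0}$, so the Phase~3 price rise at $t_0$ leaves $p_j$ unchanged. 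Combining across all $t_0$ yields $p_j^t = p_j^0 = w_{k,j}$, completing the argument.

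\textbf{Main obstacle.} The substantive step is justifying the extensions of \Cref{lem:Set_Of_Violators_Cannot_Grow} and \Cref{lem:Allocation_Of_Violators_Cannot_Grow} so that the earlier endpoint can be taken to be the end of Phase~1 (rather than another Phase~3 step). Both proofs rely only on the fact that between the two endpoints the allocation can change only via Phase~2 swaps and that any swap creating a new $\eps$-violator is undone by the next Phase~2 swap; this reasoning is unaffected when the initial endpoint is the post–Phase~1 allocation. Once this mild bookkeeping is in place, the rest of the argument is essentially a bookkeeping chase combining the three prior lemmas to certify that the particular good $j$ has been continuously held by the same $\eps$-violator $k$ and never price-raised.
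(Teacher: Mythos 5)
Your proposal is correct and follows essentially the same route as the paper: both fix an arbitrary $\eps$-violator $k$ and good $j \in A_k^t$, invoke \Cref{lem:Set_Of_Violators_Cannot_Grow,lem:Allocation_Of_Violators_Cannot_Grow,lem:Violators_Are_Not_Reachable} to certify that $j$ stays with the non-reachable violator $k$ through every earlier price-rise, and conclude $p_j^t = w_{k,j}$ so that $\beta_i^t \geq w_{i,j}/p_j^t$. The only (immaterial) difference is that the paper derives $p_j = w_{k,j}$ from $\beta_k = 1$ and \MBB{}-consistency at the first Phase~3 step, whereas you trace $j$ back to $A_k^0$ and read the price off the Phase~1 initialization; your extension of the two lemmas to the post--Phase~1 endpoint is the same argument the paper implicitly uses when it asserts that prices and \MBB{} ratios are unchanged throughout Phase~2.
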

\begin{proof}
Note that the algorithm enters Phase 3 at time step $t$ only if the current allocation $A^t$ is not $\eps$-\EQ{1}. Thus, there must exist an $\eps$-violator agent $k \in E_t$. Fix any good $j \in A^t_k$ (this is well-defined since $A^t_k \neq \emptyset$). From \Cref{lem:Set_Of_Violators_Cannot_Grow,lem:Allocation_Of_Violators_Cannot_Grow}, we know that $k \in E_{\tau}$ and $j \in A^{\tau}_k$ for all Phase 3 time steps $\tau < t$. Additionally, for every Phase 3 time step $\tau$ preceding the time step $t$, we know from \Cref{lem:Violators_Are_Not_Reachable} that $k \notin R_{i_{\tau}}$. In other words, the agent $k$ never experiences a price-rise between the start of the algorithm and the time step $t$. As a result, the \MBB{} ratio of agent $k$ at time step $t$ is the same as that at the moment of the first price-rise, i.e., $\beta^t_k = \beta^{t_1}_k$, where $t_1$ denotes the earliest Phase 3 time step. Furthermore, since the \MBB{} ratios of all agents remain unchanged during Phase 2, we must have that $\beta^{t_1}_k = 1$ (this follows from the way we set the initial prices in Phase 1), and thus also $\beta^t_k = 1$. By a similar argument, the good $j$ does not experience a price-rise between the start of the algorithm and the time step $t$. Therefore, $p^t_j = p^{t_1}_j$. Since the allocation maintained by the algorithm is always \MBB{}-consistent, we get that $p^{t_1}_j = w_{k,j}$. The claim now follows by noticing that each agent's \MBB{} ratio is at least its bang-per-buck ratio for the good $j$.
\end{proof}

\begin{restatable}{corollary}{MBBLowerBoundImplication}
 \label{cor:MBB_Lower_Bound_Implication}
 Let $t$ be a Phase 3 time step. Then, for every agent $i \in [n]$, we have $\beta_i^t \geq \frac{1}{w_{\max}}$, where $\beta_i^t$ is the \MBB{} ratio of agent $i$ at time step $t$, where $w_{\max} = \max_{i,j} w_{i,j}$.
\end{restatable}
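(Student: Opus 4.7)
The plan is to derive this corollary as an immediate consequence of \Cref{lem:MBB_Lower_Bound_Violator_Good}. That lemma supplies, for any Phase 3 time step $t$, a specific $\eps$-violator $k \in E_t$ and a good $j \in A_k^t$ such that $\beta_i^t \geq w_{i,j}/w_{k,j}$ for every agent $i \in [n]$. So the only thing left to do is to lower-bound the ratio $w_{i,j}/w_{k,j}$ uniformly in $i$.

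For the denominator, the bound $w_{k,j} \leq w_{\max}$ is immediate from the definition $w_{\max} \coloneqq \max_{i',j'} w_{i',j'}$. For the numerator, I would invoke two facts together: (a) the input instance to \AlgEQonePO{} is the $\eps$-rounded version $\I'$ of an instance $\I$ with strictly positive integer valuations, so in $\I$ we have $v_{i,j} \geq 1$ for all $i,j$; and (b) the rounding operation only moves values up to the nearest integral power of $(1+\eps)$, so $w_{i,j} \geq v_{i,j} \geq 1$. Combining, $w_{i,j}/w_{k,j} \geq 1/w_{\max}$, which yields the claim.

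There is essentially no obstacle here, since \Cref{lem:MBB_Lower_Bound_Violator_Good} does all of the structural work (tracing that agent $k$ and good $j$ never experience a price-rise, so $\beta_k^t = 1$ and $p_j^t = w_{k,j}$, whence $\beta_i^t$ is at least agent $i$'s bang-per-buck for $j$). The corollary is simply the uniform version obtained by noting that the valuations in the rounded instance are bounded below by $1$ wherever they are positive, so the agent-good-specific lower bound collapses to $1/w_{\max}$. The one subtlety to flag is the implicit reliance on strict positivity of valuations in $\I$ (so that $w_{i,j} \geq 1$ rather than possibly $0$), which is part of the hypothesis of \Cref{thm:EQ1+PO_pseudopolynomial} and hence inherited here.
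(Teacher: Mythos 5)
Your proposal is correct and follows exactly the paper's own argument: invoke \Cref{lem:MBB_Lower_Bound_Violator_Good} to get $\beta_i^t \geq w_{i,j}/w_{k,j}$, then use strict positivity and integrality of the valuations in $\I$ (so $w_{i,j} \geq v_{i,j} \geq 1$) together with $w_{k,j} \leq w_{\max}$ to conclude. Your flag about the reliance on strict positivity is exactly the point the paper's proof also makes explicit.
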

\begin{proof}
By assumption, all valuations in the original instance $\I$ are strictly positive and integral. This means that in the $\eps$-rounded version $\I'$, for every $i \in [n]$ and $j \in [m]$, we have $1 \leq v_{i,j} \leq w_{i,j} \leq w_{\max}$. Using these inequalities in the bound from \Cref{lem:MBB_Lower_Bound_Violator_Good} gives the desired claim.
\end{proof}

We are now ready to prove \Cref{lem:RunningTime_Phase3}.
\RunningTimePhaseThree*

\begin{proof}
The proof uses a potential function argument. For any Phase 3 time step $t$, we define a potential 
\begin{align*}
\Phi^t \coloneqq \sum_{i \in [n]} \log_{1+\eps} \beta_i^t,
\end{align*}
where $\beta_i^t \coloneqq \max_{j \in [m]} \nicefrac{w_{i,j}}{p^t_j}$ is the \MBB{} ratio of agent $i$ and $p^t_j$ is the price of good $j$ at time step $t$. 

In Phase 1, the price of every good is set to be the highest valuation for that good. Along with Assumption~\ref{assumption:Each_agent_gets_a_good}, this implies that at the end of Phase 1, the \MBB{} ratio of every agent equals $1$. Since Phase 2 does not affect the prices, the \MBB{} ratio of every agent at the time of the earliest price-rise also equals $1$. Thus, the initial value of the potential $\Phi^1$ is $0$.

We will now argue that each time the algorithm performs a price-rise, the potential must decrease by at least $1$ (i.e., for any two Phase 3 steps $t$ and $t'$ such that $t < t'$, $\Phi^t - \Phi^{t'} \geq 1$). Recall that the algorithm never decreases the price of any good. Therefore, all bang-per-buck ratios (and hence all \MBB{} ratios) are non-increasing with time. Thus, $\Phi^t \leq 0$ for all time steps $t \in \{1,2,\dots\}$. In addition, each time the algorithm performs a price-rise, the \MBB{} ratio of some agent \emph{strictly} decreases (because a new good gets added to the \MBB{} set of some agent).

We will argue that the (multiplicative) drop in \MBB{} ratio is always by a positive integral power of $(1+\eps)$. Indeed, by assumption, all valuations are integral powers of $(1+\eps)$. We already observed earlier that all \MBB{} ratios at the end of Phase 1 are equal to $1$, which means that all initial prices must also be integral powers of $(1+\eps)$. Furthermore, the price-rise factor $\Delta$ is a ratio of bang-per-buck ratios, and is therefore also an integral power of $(1+\eps)$. So, whenever the \MBB{} ratio of some agent strictly decreases, it must be by an integral power of $(1+\eps)$. This means that after every price-rise in Phase 3, the potential must decrease by at least $1$.

All that remains to be shown is a lower bound on the potential $\Phi^t$. From \Cref{cor:MBB_Lower_Bound_Implication}, we know that for every Phase 3 time step $t$, we have $\beta^t_i \geq \frac{1}{w_{\max}}$, and consequently, $\Phi^t \geq - n  \log_{1+\eps} {w_{\max}}$. Since the potential decreases by at least $1$ between any consecutive price-rises, the overall number of Phase 3 time steps can be at most $n \log_{1+\eps} w_{\max}$. For $\eps$-rounded valuations, we have $w_{\max} \leq (1+\eps) v_{\max}$, and therefore $n \log_{1+\eps} w_{\max} = n + n \log_{1+\eps} v_{\max}$. The stated bound now follows by observing that $\frac{1}{\ln (1+\eps)} \leq \frac{2}{\eps}$ for every $\eps \in (0,1)$.
\end{proof}

\subsection{Proof of Lemma~\ref{lem:Correctness_ALG_EQ1+PO_original_instance}}
\label{subsec:Proof_Correctness_ALG_EQ1+PO_original_instance}

The proof of \Cref{lem:Correctness_ALG_EQ1+PO_original_instance} relies on several intermediate results (\Cref{lem:Correctness_ALG_EQ1+fPO_rounded_instance,lem:fPO_Rounded_eps_PO_Original,lem:Bound_On_Eps_For_Exact_PO,lem:epsEQ1_Rounded_3epsEQ1_Original,lem:Bound_On_eps_for_Exact_EQone}) as stated below.

\begin{restatable}{lemma}{CorrectnessALGEQonefPORounded}
 \label{lem:Correctness_ALG_EQ1+fPO_rounded_instance}
 Given as input any $\eps$-rounded instance $\I'$ with strictly positive valuations, the allocation $A$ returned by \AlgEQonePO{} is $\eps$-\EQ{1} and \fPO{} for $\I'$.
\end{restatable}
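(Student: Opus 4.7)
The lemma has two assertions to verify: that the returned allocation $A$ is $\eps$-\EQ{1} for $\I'$, and that it is \fPO{} for $\I'$. The first assertion is essentially built into the control flow of \AlgEQonePO{}: the algorithm only exits at Line~\ref{algline:TerminatePhase1} or Line~\ref{algline:TerminatePhase2}, and both exit conditions explicitly require $\eps$-\EQ{1} with respect to $\I'$. The proof of this part thus reduces to observing that once the outer loop terminates, the current $A$ has been certified to satisfy the $\eps$-\EQ{1} condition.

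The main work is the \fPO{} claim. The plan is to show that, throughout the execution, the pair $(A, \p)$ maintained by the algorithm is an equilibrium of a Fisher market whose budgets are set equal to the current spendings, and then invoke the first welfare theorem (\Cref{prop:FirstWelfareTheorem}). Since each agent's budget is defined as its spending, budget exhaustion holds trivially; and since $A$ is always an integral (in fact complete) allocation, each good is fully allocated, so market clearing holds as well. The substantive invariant to maintain is \MBB{}-consistency: $A_{i,j} > 0 \implies j \in \MBB_i$ for every $i$ and $j$.

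I would verify this invariant by induction on the time step, breaking into the three phases. For \textbf{Phase 1}, each good $j$ is assigned in Line~\ref{algline:Assignment_Phase1} to an agent $i$ with $w_{i,j} = \max_k w_{k,j}$, and its price is set to $w_{i,j}$ in Line~\ref{algline:Prices_Phase1}. Hence every bang-per-buck ratio $w_{k,j}/p_j$ lies in $[0,1]$, with the allocated good achieving $1$, so every agent's \MBB{} ratio equals $1$ and its allocated goods lie in its \MBB{} set. For \textbf{Phase 2}, a swap of good $j$ from agent $h$ to the previous agent $h_{\ell-1}$ on the alternating path $P$ satisfies $j \in A_h \cap \MBB_{h_{\ell-1}}$ by the definition of an alternating path. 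Prices are not modified, so no other agent's \MBB{} set changes; $h_{\ell-1}$ gains a good that is already in $\MBB_{h_{\ell-1}}$; and $h$ only loses goods, so \MBB{}-consistency is preserved. For \textbf{Phase 3}, the prices of all reachable goods are scaled by the factor $\Delta$ defined in Line~\ref{algline:Start_Of_Phase_3}. A non-reachable agent $k$ owns only non-reachable goods (by definition of $A_{\R_i}$), so the bang-per-buck ratios on $A_k$ are unchanged; meanwhile, any reachable good's bang-per-buck for $k$ weakly decreases, so $k$'s \MBB{} set still contains $A_k$. A reachable agent $h$ has \MBB{} ratio $\beta_h$, and after price-rise its bang-per-buck on reachable goods is $\beta_h/\Delta$, while its bang-per-buck on non-reachable goods $j \notin A_{\R_i}$ is $w_{h,j}/p_j \leq \beta_h/\Delta$ by the minimality in the definition of $\Delta$; thus $A_h$, consisting of reachable goods attaining the new \MBB{} ratio $\beta_h/\Delta$, remains inside $\MBB_h$.

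The main obstacle in carrying this out cleanly is the Phase 3 bookkeeping: one has to argue correctly that reachable agents own only reachable goods (immediate from the definition $A_{\R_i} = \cup_{h \in \R_i} A_h$), that the choice of $\Delta$ is exactly what is needed to preserve \MBB{}-consistency for reachable agents while adding a new \MBB{} edge to grow the reachable set, and that non-reachable agents are entirely unaffected by the uniform scaling. Once the invariant is established across all three phases, \Cref{prop:FirstWelfareTheorem} yields that the terminating $A$ is \fPO{} for $\I'$, completing the proof.
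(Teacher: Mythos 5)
Your proposal is correct and takes essentially the same route as the paper: termination at Lines~\ref{algline:TerminatePhase1}/\ref{algline:TerminatePhase2} gives $\eps$-\EQ{1} directly, and \fPO{} follows by exhibiting $(A,\p)$ as a Fisher-market equilibrium (budgets equal to spendings) and invoking \Cref{prop:FirstWelfareTheorem}. The paper simply asserts that \MBB{}-consistency is maintained throughout, whereas you verify it phase by phase; your verification is accurate.
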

\begin{proof}
From \Cref{lem:RunningTime_ALG_EQ1+PO}, we know that \AlgEQonePO{} is guaranteed to terminate. Furthermore, the algorithm can only terminate in Lines~\ref{algline:TerminatePhase1} or \ref{algline:TerminatePhase2}. In both cases, the allocation $A$ returned by the algorithm is guaranteed to be $\eps$-\EQ{1} with respect to the input instance $\I'$.

To see why $A$ is \fPO{}, note that \AlgEQonePO{} always maintains an \MBB{}-consistent allocation (with respect to the current prices). Define a Fisher market where each agent is assigned a budget equal to its spending under $A$. Then, the outcome $(A,\p)$ satisfies the equilibrium conditions for this market. Therefore, from \Cref{prop:FirstWelfareTheorem}, $A$ is \fPO{}.
\end{proof}

\begin{restatable}{lemma}{fPOForRoundedIsEpsPOForOriginal}
 \label{lem:fPO_Rounded_eps_PO_Original}
 Let $\I$ be any fair division instance and $\I'$ be its $\eps$-rounded version for any given $\eps > 0$. Then, an allocation $A$ that is \fPO{} for $\I'$ is $\eps$-\PO{} for $\I$. 
\end{restatable}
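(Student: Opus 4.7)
The plan is to prove the contrapositive: assume $A$ is not $\eps$-\PO{} for $\I$, and exhibit a fractional allocation that Pareto-dominates $A$ in $\I'$, contradicting \fPO{}. By hypothesis, there is an (integral) allocation $B$ such that $v_k(B_k) \geq (1+\eps)\, v_k(A_k)$ for every agent $k \in [n]$, with the inequality strict for at least one agent $k^\star$. Any such integral $B$ is also a fractional allocation, so it is an admissible candidate for witnessing Pareto-domination in $\I'$.

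The key step is to translate the $(1+\eps)$-factor comparison between $v$-utilities into a plain Pareto comparison between $w$-utilities, using the rounding relation $v_{i,j} \leq w_{i,j} \leq (1+\eps)\, v_{i,j}$. By additivity, for every agent $k$ we have $w_k(B_k) \geq v_k(B_k)$ and $w_k(A_k) \leq (1+\eps)\, v_k(A_k)$. Chaining these with the assumed domination gives
\begin{equation*}
w_k(B_k) \;\geq\; v_k(B_k) \;\geq\; (1+\eps)\, v_k(A_k) \;\geq\; w_k(A_k).
\end{equation*}
For the witness agent $k^\star$, the middle inequality is strict, so the composition is strict, yielding $w_{k^\star}(B_{k^\star}) > w_{k^\star}(A_{k^\star})$. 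Thus $B$ Pareto-dominates $A$ in $\I'$, contradicting the assumption that $A$ is \fPO{} for $\I'$.

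There is no real obstacle here; the only mild care needed is to keep the strict inequality alive through the chain, which is automatic because the loss-free bounds $w_k(B_k) \geq v_k(B_k)$ and $w_k(A_k) \leq (1+\eps)\, v_k(A_k)$ are non-strict in a direction that cannot absorb the strict middle inequality. Note that the argument uses only (i) monotonicity of the rounding in each coordinate together with additivity, and (ii) the fact that integral allocations are fractional, so no further structural properties of $A$ or of the market underlying \AlgEQonePO{} are needed.
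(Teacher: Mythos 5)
Your proof is correct and follows essentially the same route as the paper: both take a dominating allocation $B$ witnessing the failure of $\eps$-\PO{} in $\I$ and chain $w_k(B_k) \geq v_k(B_k) \geq (1+\eps)\,v_k(A_k) \geq w_k(A_k)$ (strict for the witness agent) to conclude that $B$ Pareto-dominates $A$ in $\I'$, contradicting \fPO{}. The paper phrases it as a direct contradiction rather than a contrapositive, but the content is identical.
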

\begin{proof}
Suppose, for contradiction, that $A$ is $\eps$-Pareto dominated in $\I$ by an allocation $B$, i.e., $v_k(B_k) \geq (1+\eps) v_k(A_k)$ for every agent $k \in [n]$ and $v_i(B_i) > (1+\eps)  v_i(A_i)$ for some agent $i \in [n]$. Since $\I'$ is an $\eps$-rounded version of $\I$, we have that $v_{i,j} \leq w_{i,j} \leq (1+\eps)v_{i,j}$ for every agent $i$ and every good $j$. Using this bound and the additivity of valuations, we get that $w_k(B_k) \geq w_k(A_k)$ for every agent $k \in [n]$ and $w_i(B_i) > w_i(A_i)$ for some agent $i \in [n]$. Thus, $B$ Pareto dominates $A$ in the instance $\I'$, which is a contradiction since $A$ is \fPO{} (hence \PO{}) for $\I'$.
\end{proof}

\begin{restatable}{lemma}{BoundOnPrice}
 \label{lem:Bound_On_Price}
 Let $\p$ denote the price-vector right after the termination of \AlgEQonePO{}. Then, for every good $j \in [m]$, we have $p_j \leq w_{\max}^4$.
\end{restatable}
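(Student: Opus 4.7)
The plan is to combine the \MBB{}-consistency maintained by \AlgEQonePO{} with the lower bound on \MBB{} ratios from \Cref{cor:MBB_Lower_Bound_Implication} to get a quadratic bound on prices at the start of any Phase 3 step, and then to separately bound the final price-rise factor $\Delta$ to pick up the remaining two factors of $w_{\max}$. The essential observation is that prices change only in Phase 3 (Line~\ref{algline:PriceRise_Phase3}) and never decrease, and that Phase 2 leaves prices untouched. Hence it suffices to bound $p_j$ immediately after the \emph{last} price-rise in the run. If the algorithm never enters Phase 3, then each $p_j$ equals its Phase 1 value $w_{i,j} \leq w_{\max}$, and we are done; so from now on I would assume that there is a last Phase 3 time step $t^*$.

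I would first bound the prices at the start of $t^*$. Because \AlgEQonePO{} always keeps an integral allocation, every good $j$ is assigned to some agent $i$ at time $t^*$, and \MBB{}-consistency gives $p_j^{t^*} = w_{i,j}/\beta_i^{t^*}$. Invoking \Cref{cor:MBB_Lower_Bound_Implication} at the Phase 3 time step $t^*$ yields $\beta_i^{t^*} \geq 1/w_{\max}$, and together with $w_{i,j} \leq w_{\max}$ this gives $p_j^{t^*} \leq w_{\max}^2$ for every good $j$.

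Next I would bound the price-rise factor $\Delta$ used at $t^*$. By its definition in Line~\ref{algline:Start_Of_Phase_3}, $\Delta$ is a minimum of quantities $\beta_h \cdot p_j^{t^*}/w_{h,j}$ over reachable agents $h$ and goods $j \notin A_{\R}$, so it is upper bounded by any single such term (note that, since the algorithm does reach Phase 3 at $t^*$, the minimum is taken over a nonempty set). I would then use three uniform bounds: (i) $\beta_h \leq 1$ throughout the run, since at the end of Phase 1 every agent has $\beta_h = 1$ (Line~\ref{algline:Prices_Phase1} sets each $p_j$ to the maximum valuation for $j$), and prices only increase thereafter; (ii) the just-established $p_j^{t^*} \leq w_{\max}^2$; and (iii) $w_{h,j} \geq 1$ whenever $w_{h,j} > 0$, because in the $\eps$-rounded instance with positive integer valuations the rounded values are non-negative integer powers of $(1+\eps)$ bounded below by $v_{h,j} \geq 1$. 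Plugging these in gives $\Delta \leq w_{\max}^2$, so after the price rise at $t^*$ every good has $p_j \leq \Delta \cdot w_{\max}^2 \leq w_{\max}^4$. Since no later step modifies prices, this bound persists at termination.

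The main technical subtlety I expect is ensuring that each of the three supporting bounds in the last step ($\beta_h \leq 1$, $p_j^{t^*} \leq w_{\max}^2$, and $w_{h,j} \geq 1$) is stated at the correct moment of the run, in particular that \Cref{cor:MBB_Lower_Bound_Implication} is applied at the \emph{beginning} of the Phase 3 step $t^*$ (before the price rise) while the conclusion concerns the price vector at the \emph{end} of $t^*$; everything else is essentially bookkeeping on top of the algorithm's invariants.
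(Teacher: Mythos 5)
Your proposal is correct and follows essentially the same route as the paper's proof: bound all prices at the start of the final Phase 3 step by $w_{\max}^2$ via \MBB{}-consistency together with \Cref{cor:MBB_Lower_Bound_Implication}, then bound the last price-rise factor $\Delta$ by $w_{\max}^2$ using $\beta_h \leq 1$ and $w_{h,j} \geq 1$, and multiply. The only (harmless) differences are bookkeeping: you treat the no-Phase-3 case explicitly and bound $\Delta$ by $w_{\max}^2$ directly rather than by $p_j^{t_N}$ as the paper does.
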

\begin{proof}
Recall that the prices can change only during Phase 3. From \Cref{cor:MBB_Lower_Bound_Implication}, we know that \emph{before} the start of any Phase 3 time step $t$, we have $\beta_i^t \geq \frac{1}{w_{\max}}$ for every $i \in [n]$. Furthermore, at any time step $t$, each good $j$ is assigned to an agent $i$ such that $j \in \MBB_i$. Therefore, for every good $j \in A_i$, we have $p_j^t = \frac{w_{i,j}}{\beta_i^t} \leq w_{\max}^2$, where $p_j^t$ denotes the price of good $j$ at time step $t$. In particular, if $t_1,\dots,t_N$ denote the Phase 3 time steps during the execution of the algorithm, then just before the final price-rise at $t_N$, we have that $p_j^{t_N} \leq w_{\max}^2$ for every good $j \in [m]$.

Let $\Delta_N$ denote the (multiplicative) price-rise factor at time step $t_N$. By definition, $\Delta_N \leq \frac{\beta_h^{t_N}}{w_{h,j}/p_j^{t_N}}$ for every agent $h \in \R_i$ and every good $j \in [m] \setminus A_{\R_i}$; here, $\beta_h^{t_N}$ is the \MBB{} ratio of agent $h$ at time step $t_N$ and $A_{\R_i} \coloneqq \cup_{h \in \R_i} A_h$ is the set of reachable goods. Since the \MBB{} ratios are non-increasing with time, we have that $\beta_h^{t_N} \leq 1$ for all agents $h \in [n]$. Additionally, since all valuations in the original instance $\I$ are strictly positive integers and $\I'$ is its $\eps$-rounded version, we have that $w_{i,j} \geq v_{i,j} \geq 1$. The above observations give us that $\Delta_N \leq p_j^{t_N}$. The final price of every good is, therefore, at most $\Delta_N \cdot p_j^{t_N} \leq ( p_j^{t_N} )^2 \leq w_{\max}^4$.
\end{proof}

\begin{restatable}{lemma}{BoundOnEpsForExactPO}
 \label{lem:Bound_On_Eps_For_Exact_PO}
 Let $\I$ be any fair division instance and $\I'$ be its $\eps$-rounded version for any $0 < \eps \leq \frac{1}{16 m v_{\max}^4}$. Let $A$ be the allocation returned by \AlgEQonePO{} for the input instance $\I'$. Then, $A$ is \PO{} for $\I$.
\end{restatable}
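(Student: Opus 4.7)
The plan is to argue by contradiction, leveraging the first welfare theorem applied to the market equilibrium $(A,\p)$ supporting the output $A$, combined with the additive integrality gap available when some $B$ strictly dominates $A$ in $\I$.

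Suppose $A$ is not \PO{} for $\I$. Then there exists an allocation $B$ with $v_k(B_k) \geq v_k(A_k)$ for every $k$ and strict inequality for some agent $i$; since the original valuations are positive integers, in fact $v_i(B_i) \geq v_i(A_i) + 1$. Let $\p$ denote the final price vector produced by \AlgEQonePO{}. By \Cref{lem:Correctness_ALG_EQ1+fPO_rounded_instance}, $A$ is \fPO{} for $\I'$; in particular $A$ is \MBB{}-consistent with $\p$ under the rounded valuations $\W$, so writing $\beta_k$ for agent $k$'s \MBB{} ratio we have $w_k(A_k) = \beta_k \p(A_k)$ and $w_k(S) \leq \beta_k \p(S)$ for every bundle $S$. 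Observe that $\beta_k > 0$, since $w_{k,j} \geq v_{k,j} \geq 1$ and each $p_j$ is finite.

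Now the rounding bound $v_{k,j} \leq w_{k,j} \leq (1+\eps) v_{k,j}$ combined with the dominance hypothesis gives $w_k(B_k) \geq v_k(B_k) \geq v_k(A_k) \geq \tfrac{1}{1+\eps}\, w_k(A_k)$ for every $k$, and for the strict agent this sharpens to $w_i(B_i) \geq v_i(A_i) + 1 \geq \tfrac{1}{1+\eps}\, w_i(A_i) + 1$. Substituting $w_k(A_k) = \beta_k \p(A_k)$ and $w_k(B_k) \leq \beta_k \p(B_k)$, and dividing by $\beta_k > 0$, we obtain $\p(B_k) \geq \tfrac{1}{1+\eps}\, \p(A_k)$ for every $k \neq i$, together with $\p(B_i) \geq \tfrac{1}{1+\eps}\, \p(A_i) + \tfrac{1}{\beta_i}$. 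Since $A$ and $B$ are partitions of $[m]$, $\sum_k \p(B_k) = \sum_k \p(A_k) = \sum_{j \in [m]} p_j$, and summing the previous inequalities yields $\tfrac{\eps}{1+\eps} \sum_j p_j \geq \tfrac{1}{\beta_i}$, i.e.\ $\eps > \tfrac{1}{\beta_i\, \sum_j p_j}$.

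To finish, I bound $\beta_i \sum_j p_j$ from above. Phase~1 sets every agent's \MBB{} ratio to exactly $1$, and the algorithm never decreases any price thereafter, so $\beta_i \leq 1$ throughout. Combining \Cref{lem:Bound_On_Price} with $w_{\max} \leq (1+\eps) v_{\max} \leq 2 v_{\max}$ (valid since the hypothesis gives $\eps \leq 1$) yields $\sum_j p_j \leq m\, w_{\max}^4 \leq 16\, m\, v_{\max}^4$. Hence $\eps > \tfrac{1}{16\, m\, v_{\max}^4}$, contradicting the standing assumption $\eps \leq \tfrac{1}{16\, m\, v_{\max}^4}$.

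The main obstacle is the middle step: directly from $v$-domination one obtains only a multiplicative slack $w_k(B_k) \geq \tfrac{1}{1+\eps}\, w_k(A_k)$, whose summed consequence is vacuous because both total spendings collapse to $\sum_j p_j$. What makes the argument go through is the integer gap of at least $1$ for the strict agent, which buys an \emph{additive} $1/\beta_i$ slack in the spending inequality; the constant $16\, m\, v_{\max}^4$ in the hypothesis is calibrated to just dominate the worst-case upper bound on $\beta_i\sum_j p_j$ produced by \Cref{lem:Bound_On_Price}.
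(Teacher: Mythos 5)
Your proof is correct and follows essentially the same route as the paper's: contradiction via the integer gap $v_i(B_i)\ge v_i(A_i)+1$, conversion of utility dominance into spending inequalities through \MBB{}-consistency, summation over the fixed total spending $\p([m])$, and the price bound from \Cref{lem:Bound_On_Price} together with $\beta_i\le 1$. The only cosmetic differences are that you work throughout with the rounded-instance \MBB{} ratios $\beta_k$ where the paper introduces the original-instance ratios $\alpha_k\le\beta_k$, and your final numerical step ($w_{\max}^4\le 16\,v_{\max}^4$) is a slightly more direct way of landing on the constant $16\,m\,v_{\max}^4$.
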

\begin{proof}
Suppose, for contradiction, that the allocation $A$ is Pareto dominated by an allocation $B$ in the instance $\I$. That is, $v_k(B_k) \geq v_k(A_k)$ for every agent $k \in [n]$ and $v_i(B_i) > v_i(A_i)$ for some agent $i \in [n]$. Since the valuations in $\I$ are integral, we have $v_i(B_i) \geq v_i(A_i) + 1$. 

Let $\p$ denote the price-vector right after the termination of \AlgEQonePO{}. Let $\alpha_k$ and $\beta_k$ denote the \MBB{} ratios (with respect to $\p$) of agent $k \in [n]$ in $\I$ and $\I'$, respectively. Since $\I'$ is an $\eps$-rounded version of $\I$, we have $v_{k,j} \leq w_{k,j} \leq (1+\eps)v_{k,j}$ for every agent $k \in [n]$ and every good $j \in [m]$. Thus,
$\alpha_k = \max_{j \in [m]} v_{k,j}/p_j \leq \max_{j \in [m]} w_{k,j}/p_j = \beta_k.$
Therefore,
\begin{alignat}{2}
 \frac{v_k(A_k)}{\alpha_k} & \geq \frac{w_k(A_k)}{(1+\eps) \alpha_k} & \qquad \text{(since $\I'$ is $\eps$-rounded)} & \nonumber \\
 & \geq \frac{w_k(A_k)}{(1+\eps) \beta_k} & \qquad \text{(since $\alpha_k \leq \beta_k$)} & \nonumber \\
 & = \frac{ \p(A_k) } {1+\eps} & \text{(A is $\MBB$-consistent in $\I'$)}. & \label{eqn:Small_delta_PO_temp1}
\end{alignat}
Now consider the allocation $B$. By definition of \MBB{} ratio, we have that $\alpha_k \p(B_k) \geq v_k(B_k)$ for every $k \in [n]$. Since $B$ Pareto dominates $A$ in $\I$, we have $\alpha_k \p(B_k) \geq v_k(A_k)$. Along with \Cref{eqn:Small_delta_PO_temp1}, this gives
\begin{equation}
\p(B_k) \geq \frac{\p(A_k)}{1+\eps}.
\label{eqn:Small_delta_PO_temp2}
\end{equation}
By a similar reasoning for agent $i$, we get
\begin{equation}
\p(B_i) \geq \frac{\p(A_i)}{1+\eps} + \frac{1}{\alpha_i}.
\label{eqn:Small_delta_PO_temp3}
\end{equation}
The combined spending over all the goods is given by
\begin{align*}
 \p([m]) & = \sum_{k \in [n]} \p(B_k) & \text{(since all the goods are allocated under $B$)}\\
  & = \p(B_i) + \sum_{k \in [n] \setminus \{i\}} \p(B_k) \\
  & \geq \frac{\p(A_i)}{1+\eps} + \frac{1}{\alpha_i} + \sum_{k \in [n] \setminus \{i\}} \frac{\p(A_k)}{1+\eps} & (\text{from \Cref{eqn:Small_delta_PO_temp2,eqn:Small_delta_PO_temp3}})\\
  & = \frac{\p([m])}{1+\eps} + \frac{1}{\alpha_i}  & \text{(since all the goods are allocated under $A$)}.
\end{align*}
Further simplification gives
\begin{equation*}
1 \leq \eps \left( \p([m]) \alpha_i - 1 \right) < \eps \p([m]) \alpha_i \leq \eps \p([m]),
\end{equation*}
where the first inequality follows from simplifying the preceding relation, the second inequality follows from $\eps > 0$, and the third inequality uses the fact that $\alpha_i \leq 1$ for every agent $i$.\footnote{The reason we have $\alpha_i \leq 1$ is because all the \MBB{} ratios (for $\I'$) are initially equal to $1$ (this is discussed in the proof of \Cref{lem:RunningTime_Phase3}). Since the algorithm never decreases the price of any good, the final \MBB{} ratios are all at most $1$, i.e., $\beta_i \leq 1$ for all $i \in [n]$. Since $\alpha_i \leq \beta_i$, the relation follows.} From \Cref{lem:Bound_On_Price}, we know that $\p([m[) \leq m \cdot \max_{j \in [m]} p_j \leq m w_{\max}^4$. This requires that $\eps > \frac{1}{m w_{\max}^4}$. For $\eps$-rounded valuations, this means that $\eps (1+\eps)^4 > \frac{1}{m v_{\max}^4}$. 

Recall that $\eps \leq \frac{1}{16 m v_{\max}^4}$. Using this bound in the above expression for $\eps$ while keeping the $(1+\eps)$ term intact, we get that $(1+\eps)^4 > 16$, or, equivalently, $\eps > 1$, which is a contradiction. Hence, $A$ must be \PO{} for $\I$.
\end{proof}

\begin{restatable}{lemma}{EpsEQ1RoundedThreeEpsEQ1Original}
 \label{lem:epsEQ1_Rounded_3epsEQ1_Original}
 Let $\I$ be any fair division instance and $\I'$ be its $\eps$-rounded version for any given $\eps > 0$. Then, an allocation $A$ that is $\eps$-\EQ{1} for $\I'$ is $3\eps$-\EQ{1} for $\I$.
\end{restatable}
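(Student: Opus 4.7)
The plan is to apply the $\eps$-\EQ{1} guarantee for the rounded instance $\I'$ and then translate the inequality back to the original instance $\I$ using the two-sided sandwich bound $v_{l,g} \leq w_{l,g} \leq (1+\eps) v_{l,g}$ that defines the $\eps$-rounded version. By additivity of valuations, this pointwise inequality lifts immediately to arbitrary bundles, so $v_l(S) \leq w_l(S) \leq (1+\eps) v_l(S)$ for every agent $l$ and subset $S \subseteq [m]$.

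First I would fix an arbitrary pair of agents $i, k \in [n]$ with $A_k \neq \emptyset$. Since $A$ is $\eps$-\EQ{1} for $\I'$, there is some good $j \in A_k$ such that $(1+\eps) w_i(A_i) \geq w_k(A_k \setminus \{j\})$. The natural move is to use this \emph{same} $j$ as the witness good for the $3\eps$-\EQ{1} condition in $\I$; the pair $(i,k)$ is then treated without loss of generality.

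Next I would chain the sandwich bounds in two places: use $v_k(A_k \setminus \{j\}) \leq w_k(A_k \setminus \{j\})$ on the left side, and use $w_i(A_i) \leq (1+\eps) v_i(A_i)$ on the right side. Combining these with the $\eps$-\EQ{1} inequality yields
\begin{equation*}
v_k(A_k \setminus \{j\}) \;\leq\; w_k(A_k \setminus \{j\}) \;\leq\; (1+\eps)\, w_i(A_i) \;\leq\; (1+\eps)^2\, v_i(A_i).
\end{equation*}
The only remaining step is the elementary bound $(1+\eps)^2 = 1 + 2\eps + \eps^2 \leq 1 + 3\eps$, valid whenever $\eps \leq 1$. (For the intended applications in \Cref{thm:epsEQ1+PO_pseudopolynomial} and \Cref{thm:EQ1+PO_pseudopolynomial}, $\eps$ is taken to be very small, so this regime is unrestrictive.) Hence $v_k(A_k \setminus \{j\}) \leq (1+3\eps)\, v_i(A_i)$, which is precisely the $3\eps$-\EQ{1} condition for the pair $(i,k)$ in $\I$.

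There is no real obstacle here: the argument is a short chain of inequalities, and the only subtlety is the quadratic-to-linear slack comparison $(1+\eps)^2 \leq 1+3\eps$. The slack factor of $3$ (as opposed to $2$) is precisely what absorbs the $\eps^2$ term; using a tighter constant would require assuming $\eps$ smaller still, so $3$ is the natural choice. No separate argument is needed for the case $A_k = \emptyset$, since the $\EQ{1}$ conditions only quantify over non-empty bundles.
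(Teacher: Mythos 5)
Your proof is correct and matches the paper's argument essentially verbatim: both apply the $\eps$-\EQ{1} guarantee in $\I'$, sandwich the valuations via $v_{l,g} \leq w_{l,g} \leq (1+\eps)v_{l,g}$ on each side to get the factor $(1+\eps)^2$, and then absorb the $\eps^2$ term using $(1+\eps)^2 \leq 1+3\eps$ for $\eps \leq 1$ (the paper likewise invokes $\eps < 1$ at this step, even though the lemma is stated for arbitrary $\eps > 0$). Nothing further is needed.
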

\begin{proof}
Since $A$ is $\eps$-\EQ{1} for $\I'$, we have that for every pair of agents $i,k \in [n]$ such that $A_k \neq \emptyset$, there exists a good $j \in A_k$ such that $(1+\eps) w_i(A_i) \geq w_k(A_k \setminus \{j\})$. Furthermore, since $\I'$ is an $\eps$-rounded version of $\I$, we have that $v_{i,j} \leq w_{i,j} \leq (1+\eps)v_{i,j}$ for every agent $i$ and every good $j$. Using this relation along with the additivity of valuations (in $\I'$), we get that for every pair of agents $i,k \in [n]$, there exists a good $j \in A_k$ such that $(1+\eps)^2 v_i(A_i) \geq v_k(A_k \setminus \{j\})$. Since $\eps < 1$, we get that there exists a good $j \in A_k$ such that $(1+3\eps) v_i(A_i) \geq v_k(A_k \setminus \{j\})$, implying that $A$ is $3\eps$-\EQ{1} for $\I$.
\end{proof}

\begin{restatable}{lemma}{BoundOnEpsforExactEQone}
 \label{lem:Bound_On_eps_for_Exact_EQone}
 Given any fair division instance $\I$ and any $0 \leq \eps \leq \frac{1}{6 m v_{\max}}$, an allocation $A$ is $3\eps$-\EQ{1} for $\I$ if and only if it is \EQ{1} for $\I$.
\end{restatable}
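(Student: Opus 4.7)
The forward direction ($\EQ{1} \Rightarrow 3\eps$-$\EQ{1}$) is immediate: the inequality $v_i(A_i) \geq v_k(A_k \setminus \{j\})$ trivially implies $(1+3\eps) v_i(A_i) \geq v_k(A_k \setminus \{j\})$ for any $\eps \geq 0$. So the substance is in the reverse direction, and the plan is a standard ``integrality gap'' argument exploiting the fact that valuations in $\I$ are integers.

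Suppose, for contradiction, that $A$ is $3\eps$-\EQ{1} for $\I$ but not \EQ{1}. Then there exist agents $i,k \in [n]$ with $A_k \neq \emptyset$ such that for every good $j \in A_k$, we have $v_i(A_i) < v_k(A_k \setminus \{j\})$. Since all valuations are integral, this strict inequality can be tightened to
\begin{align*}
v_k(A_k \setminus \{j\}) \geq v_i(A_i) + 1 \quad \text{for every } j \in A_k.
\end{align*}
On the other hand, the $3\eps$-\EQ{1} property guarantees the existence of some $j^* \in A_k$ with $(1+3\eps) v_i(A_i) \geq v_k(A_k \setminus \{j^*\})$. Chaining these two inequalities at $j = j^*$ gives $(1+3\eps) v_i(A_i) \geq v_i(A_i) + 1$, i.e.\ $3\eps \, v_i(A_i) \geq 1$.

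Now I invoke the assumed bound $\eps \leq \frac{1}{6 m v_{\max}}$, which yields $v_i(A_i) \geq \frac{1}{3\eps} \geq 2 m v_{\max}$. But by additivity the utility of any agent is at most $m \cdot v_{\max}$, so $v_i(A_i) \leq m v_{\max}$, a contradiction. Hence $A$ must in fact be \EQ{1} for $\I$.

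I do not anticipate any real obstacle: the only subtlety is correctly using integrality to upgrade the strict violation to a ``$+1$'' gap, which is exactly what makes the bound $\eps \leq \frac{1}{6mv_{\max}}$ work (any $\eps < \frac{1}{3mv_{\max}}$ would suffice, and the factor of $\frac{1}{6}$ presumably gives some slack for the other uses of this lemma, such as combining with \Cref{lem:epsEQ1_Rounded_3epsEQ1_Original} to deduce exact \EQ{1} for the original instance from $\eps$-\EQ{1} of the rounded instance).
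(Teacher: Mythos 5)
Your proposal is correct and is essentially the paper's argument: both derive, from the $3\eps$-\EQ{1} witness good $j^*$ and the bound $\eps \leq \frac{1}{6mv_{\max}}$ together with $v_i(A_i) \leq m v_{\max}$, that the gap $v_k(A_k \setminus \{j^*\}) - v_i(A_i)$ is strictly less than $1$, and then use integrality to conclude it is at most $0$. The only difference is presentational (you argue by contradiction where the paper argues directly), and your explicit treatment of the trivial forward direction is a minor bonus.
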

\begin{proof}
Since $A$ is $3\eps$-\EQ{1}, we have that for every pair of agents $i,k \in [n]$ such that $A_k \neq \emptyset$, there exists a good $j \in A_k$ such that $(1+3\eps) v_i(A_i) \geq v_k(A_k \setminus \{j\})$. Using the bound $\eps \leq \frac{1}{6 m v_{\max}}$, we get that there exists a good $j \in A_k$ such that $v_k(A_k \setminus \{j\}) - v_i(A_i) \leq \frac{1}{2}$. Since the valuations are integral, this implies that there exists a good $j \in A_k$ such that $v_k(A_k \setminus \{j\}) - v_i(A_i) \leq 0$, which is the \EQ{1} condition.
\end{proof}

We are now ready to prove \Cref{lem:Correctness_ALG_EQ1+PO_original_instance}.
\CorrectnessALGEQonePOOriginal*

\begin{proof} 
The allocation $A$ returned by \AlgEQonePO{} is guaranteed to be $\eps$-\EQ{1} and \fPO{} with respect to the input instance $\I'$ (\Cref{lem:Correctness_ALG_EQ1+fPO_rounded_instance}).  \Cref{lem:epsEQ1_Rounded_3epsEQ1_Original,lem:fPO_Rounded_eps_PO_Original} together imply that $A$ is $3\eps$-\EQ{1} and $\eps$-\PO{} for $\I$. Furthermore, if $\eps \leq \frac{1}{16m v_{\max}^4}$, then the bounds in \Cref{lem:Bound_On_Eps_For_Exact_PO,lem:Bound_On_eps_for_Exact_EQone} are satisfied, which implies that $A$ is \EQ{1} and \PO{} for $\I$.
\end{proof}

\subsection{\EQx{}+\fPO{} might fail to exist}
\label{subsec:NonExistence_EQx+fPO}

\begin{restatable}[\textbf{Non-existence of \EQx+\fPO{}}]{prop}{NonexistenceEQxfPO}
 \label{prop:Nonexistence_EQx+fPO}
 Given any $\eps > 0$, there exists an $\eps$-rounded instance with strictly positive valuations in which no allocation is simultaneously equitable up to any good $(\EQx{})$ and fractionally Pareto optimal $(\fPO{})$.
\end{restatable}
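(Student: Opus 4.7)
The plan is to exhibit an explicit $\eps$-rounded instance with strictly positive valuations and to show, by a case analysis, that no fPO allocation is EQx. The tension to exploit is that fPO, via the first welfare theorem (\Cref{prop:FirstWelfareTheorem}), is equivalent to the existence of a supporting price vector under which the allocation is MBB-consistent; this imposes strong structural constraints on how highly valued goods must be bundled. EQx, by contrast, requires the utility gap between any two agents to be closeable by removing a single positive-valued good. The goal of the construction is to ensure that, in every fPO allocation, some agent's bundle retains at least two ``excess'' positive-valued goods whose residual utility exceeds every other agent's attainable utility.

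First, I would construct a small instance (likely $n=3$ agents and a modest number of goods) whose valuations are of the form $(1+\eps)^t$ for integer $t \geq 0$. The valuations will be designed so that one agent values some pair of ``dominant'' goods by a large exponent $(1+\eps)^T$, while the other agents value those same goods close to $(1+\eps)^0 = 1$. A chain of bang-per-buck inequalities---analogous to the Fisher-market arguments in \Cref{subsec:EQ1+PO}---will then show that any supporting price vector for an fPO allocation must bundle those high-value goods together in the dominant agent's allocation. Consequently, in every fPO allocation the dominant agent's utility is at least $2(1+\eps)^T$, while every other agent's utility is bounded by the aggregate of the remaining low-value goods, an amount at most polynomial in $m$ and hence much smaller than $(1+\eps)^T$ for $T$ chosen sufficiently large.

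Second, for each such forced fPO bundle I would exhibit the EQx-violating pair explicitly: taking $k$ to be the dominant agent and $i$ any other agent, for every $j \in A_k$ (all of which have $v_{k,j} > 0$ under strict positivity) the residual $v_k(A_k \setminus \{j\})$ is still at least $(1+\eps)^T$, strictly exceeding $v_i(A_i)$. The main obstacle will be ruling out fPO ``escape routes'': MBB-consistency admits ties when distinct goods share a bang-per-buck ratio for an agent, and the construction must ensure that no choice of supporting prices and no tie-breaking can shift the dominant agent's high-value goods elsewhere while retaining MBB-consistency. Carefully engineering the valuation profile---in particular, calibrating $T$ relative to the low-value scale and the number of goods so that no fPO rearrangement can reduce the dominant agent's bundle to at most one excess good---is the technical heart of the argument.
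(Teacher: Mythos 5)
There is a genuine gap, and it sits exactly where you locate ``the technical heart'': the claim that fractional Pareto optimality forces the dominant agent to end up with \emph{both} dominant goods. Supportability by prices never forces a good to the agent who values it most. If the other agents value the two dominant goods proportionally to one another, then splitting those goods across agents is perfectly consistent with \MBB{}-consistency (set equal prices; every agent is indifferent), so such allocations are \fPO{}. Worse, your construction admits a direct escape route that is both \fPO{} and \EQx{}: give the dominant agent exactly \emph{one} good of value $(1+\eps)^T$ and nothing else, and split the remaining low-value goods among the other agents in a balanced way. The \EQx{} checks against the dominant agent then pass trivially, since removing her single positively-valued good leaves residual utility $0$, and the checks in the other direction pass because her utility $(1+\eps)^T$ dominates everyone else's residuals; with two agents one can verify via the ratio characterization of \fPO{} that this allocation is also \fPO{} whenever the dominant good has the (weakly) largest value ratio. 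So the construction as sketched does not merely leave a step unverified --- it appears to contain an \EQx{}$+$\fPO{} allocation. A secondary issue: you invoke the \emph{equivalence} of \fPO{} with the existence of supporting prices, whereas \Cref{prop:FirstWelfareTheorem} gives only one direction (equilibrium $\Rightarrow$ \fPO{}); the converse does hold for linear utilities but would need to be stated and justified.

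The paper's proof runs in the opposite logical direction and avoids this trap entirely. It uses two agents and three goods: a big good $g_2$ valued \emph{identically} by both agents at $(1+\eps)^{10}$, and two small goods $g_1,g_3$ valued asymmetrically ($1+\eps$ versus $1$, with the preferences crossed). One first checks that the \emph{only} \EQx{} allocations give $g_2$ to one agent and $\{g_1,g_3\}$ to the other; then each such allocation is killed by an explicit fractional Pareto improvement in which the agent holding the two small goods trades away the small good that the other agent relatively prefers in exchange for a sliver $(1+\eps)^{-9}$ of $g_2$. In other words, the construction engineers the set of \EQx{} allocations to be tiny and then dominates each member fractionally, rather than trying to characterize the (much larger and less controllable) set of \fPO{} allocations. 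If you want to salvage your direction of argument, you would need a complete characterization of the \fPO{} allocations of your instance and a proof that each fails \EQx{} --- which, per the escape route above, your current valuation profile cannot deliver.
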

\begin{proof}
Consider the following instance with two agents and three goods:
\begin{table}[ht]
\centering
\begin{tabular}{ c|ccc }
	& $g_1$ & $g_2$ & $g_3$\\ \hline
  $a_1$ & $1+\eps$ & $(1+\eps)^{10}$ & $1$\\
  $a_2$ & $1$ & $(1+\eps)^{10}$ & $1+\eps$
\end{tabular}
\end{table}

Observe that all valuations are non-negative integral powers of $(1+\eps)$, and therefore satisfy the $\eps$-rounded property. This instance has two \EQx{} allocations $A$ and $B$ given by $A_1 = \{g_2\}, A_2 = \{g_1,g_3\}$ and $B_1 = \{g_1,g_3\}, B_2 = \{g_2\}$. The utilities of the agents under $A$ are $v_1(A_1) = (1+\eps)^{10}$ and $v_2(A_2) = 2+\eps$, and under $B$ are $v_1(B_1) = 2+\eps$ and $v_2(B_2) = (1+\eps)^{10}$.

Now consider a fractional allocation $\x$ given by the following assignment matrix:
\[
\x = \begin{bmatrix}
		\begin{array}{l|ccc}
		& g_1 & g_2 & g_3\\
	    \hline
		a_1 & 1 & 1 - (1+\eps)^{-9} & 0\\
		a_2 & 0 & (1+\eps)^{-9} & 1\\
		\end{array}
\end{bmatrix}
\]

The utilities of the agents under $\x$ are $v_1(\x_1) = (1+\eps)^{10}$ and $v_2(\x_2) = 2+2\eps$, implying that $\x$ Pareto dominates $A$. For a similar reason, the fractional allocation $\y$ given by
\[
\y = \begin{bmatrix}
		\begin{array}{l|ccc}
		& g_1 & g_2 & g_3\\
	    \hline
		a_1 & 0 & (1+\eps)^{-9} & 1\\
		a_2 & 1 & 1 - (1+\eps)^{-9} & 0\\
		\end{array}
\end{bmatrix}
\]
Pareto dominates the allocation $B$. Hence, the above instance has no \EQx{} and \fPO{} allocation. Note that the above instance does admit an \EQx{} and \PO{} allocation (\Cref{prop:Leximin_EQx+PO_Positive_valuations}) as both $A$ and $B$ satisfy these two properties.
\end{proof}
\begin{remark}
The instance in \Cref{prop:Nonexistence_EQx+fPO} also rules the existence of an \EFx{} and \fPO{} allocation.
\end{remark}

\subsection{Proof of Theorem~\ref{thm:EQ1+EF1+PO_Polytime_BinaryVals}}
\label{subsec:Proof_EQ1+EF1+PO_Polytime_BinaryVals}

The proof of \Cref{thm:EQ1+EF1+PO_Polytime_BinaryVals} relies on a series of intermediate results (\Cref{lem:Edge_Decomposition_Lemma,lem:NashOptimal_Binary_Utility_Profile,lem:NashOptimality_Of_EQ1+PO_BinaryVals,lem:NashOptimal_EQ1+PO_BinaryVals}) that are presented below. We will start with a simple observation concerning directed graphs. It will be convenient to define, for any given directed graph $G = (V,E)$, the sets $\E_{+} \coloneqq \{v \in V : \outdeg_G(v) > \indeg_G(v)\}$ and $\E_{-} \coloneqq \{v \in V : \indeg_G(v) > \outdeg_G(v)\}$, where $\indeg_G(v)$ and $\outdeg_G(v)$ denote the indegree and outdegree of a vertex $v$, respectively.

\begin{restatable}[\textbf{Edge-decomposition lemma}]{lemma}{EdgeDecompositionLemma}
 \label{lem:Edge_Decomposition_Lemma}
 Let $G = (V,E)$ be a directed graph. Then, the set of edges $E$ can be partitioned into a set of directed paths $\P = \{P_1,P_2,\dots\}$ and a set of directed cycles $\C = \{C_1,C_2,\dots\}$ such that each path in $\P$ starts at a vertex in $\E_{+}$ and ends at a vertex in $\E_{-}$.
\end{restatable}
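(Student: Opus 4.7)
The plan is to prove the lemma by strong induction on $|E|$. The base case $|E|=0$ is vacuous: take $\P=\C=\emptyset$. For the inductive step I will maintain the invariant that after removing a directed simple path from some $v\in\E_+$ to some $u\in\E_-$ (respectively a directed cycle) from $G$, the sets $\E_+,\E_-$ of the residual graph are subsets of the original $\E_+,\E_-$. For a cycle this is immediate because the imbalance $f(w)\coloneqq\outdeg(w)-\indeg(w)$ is preserved at every vertex; for a simple $v$-to-$u$ path, $f(v)$ decreases by $1$, $f(u)$ increases by $1$, and every intermediate vertex loses exactly one in-edge and one out-edge, so $f$ is unchanged there. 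Thus every path extracted in a later recursive call still has its endpoints in the original $\E_+$ and $\E_-$, and it suffices to extract a single path-or-cycle at each step and recurse.

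The heart of the argument is the case in which the current graph still has $\E_+\neq\emptyset$. Pick $v\in\E_+$ and greedily build a trail (a walk with no repeated edges) $v=v_0\to v_1\to\cdots\to v_k=u$ by following any unused outgoing edge at the current vertex, terminating when none remains. Let $r$ denote the number of positions on this trail at which $u$ appears. All outgoing edges of $u$ in $G$ must be used by the trail (otherwise it could be extended), so $\outdeg_G(u)$ equals the number of departures from $u$ along the trail, namely $r-1$. The number of distinct incoming edges of $u$ used by the trail equals the number of arrivals at $u$: this is $r$ if $u\neq v$ and $r-1$ if $u=v$ (the initial occurrence of $v$ on the trail is not an arrival). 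The case $u=v$ would give $\indeg_G(v)\geq r-1=\outdeg_G(v)$, contradicting $v\in\E_+$; hence $u\neq v$ and $\indeg_G(u)\geq r=\outdeg_G(u)+1$, so $u\in\E_-$. A standard cycle-splicing argument then converts the trail into a simple $v$-to-$u$ path $P$ plus a collection of directed cycles: whenever the trail revisits a vertex $w$ at positions $p<q$, the closed sub-trail between $p$ and $q$ is extracted (and itself decomposed recursively into cycles), and the concatenation $v_0\to\cdots\to v_p=v_q\to\cdots\to v_k$ is again a valid trail with strictly fewer vertex repetitions. Add the resulting simple path $P$ to $\P$, the extracted cycles to $\C$, and recurse.

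The remaining case $\E_+=\emptyset$ in the current graph is easy: the identity $\sum_w f(w)=0$ forces $\E_-=\emptyset$ too, so every vertex is balanced. If any edges remain, pick any vertex $v$ with $\outdeg(v)\geq 1$ and extend a trail greedily; balance at every non-start vertex guarantees the trail can be extended until it returns to $v$, yielding a closed trail, from which a directed cycle is obtained by the same cycle-splicing argument and added to $\C$. I expect the main obstacle to be the endpoint analysis establishing $u\in\E_-$: it turns on the careful distinction between occurrences, arrivals, and departures of $u$ along the greedy trail, and crucially uses that the trail begins at an element of $\E_+$ rather than at a balanced vertex. The other technical step worth flagging is verifying that cycle-splicing converts the trail into a simple path plus cycles (this is essentially the same argument used for Eulerian decompositions), and that my invariant is genuinely preserved in both the path-extraction and the cycle-extraction case so that the recursion closes.
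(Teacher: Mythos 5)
Your proof is correct, but it takes a genuinely different route from the paper's. You give the classical Eulerian-style argument: induct on $|E|$, grow a maximal greedy trail from a vertex of $\E_+$, use the occurrence/arrival/departure count to show the trail must terminate at a \emph{different} vertex lying in $\E_-$, splice out repeated-vertex sub-trails as cycles to leave a simple path, and handle the all-balanced case by extracting closed trails; the invariant that $\E_+$ and $\E_-$ only shrink under these extractions closes the induction. The paper instead performs a one-shot vertex-splitting construction: each vertex $i$ is replaced by $\max\{\indeg_G(i),\outdeg_G(i)\}$ copies, the incident edges are distributed injectively among the copies, and the resulting graph $H$ has in-degree and out-degree at most one at every vertex, so its edge set decomposes into vertex-disjoint paths and cycles immediately; projecting back to $G$ and reading off which copies have out-degree $1$ and in-degree $0$ (respectively the reverse) gives the endpoint condition. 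The paper's construction avoids both the induction and the delicate endpoint counting, at the cost of introducing the auxiliary graph; your argument is more self-contained and elementary but the step establishing $u\in\E_-$ (distinguishing occurrences, arrivals, and departures, and ruling out $u=v$ via $v\in\E_+$) is exactly the place where such proofs usually go wrong, and you handle it correctly. Both proofs are valid.
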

\begin{proof}
Consider a directed graph $H = (V_H,E_H)$ derived from the graph $G$ as follows: For every vertex $i \in V$ of the graph $G$, create $\Delta \coloneqq \max\{\indeg_G(i), \outdeg_G(i)\}$ copies of $i$, say $i_1,i_2,\dots,i_\Delta$. The vertex set $V_H$ of the graph $H$ is the union of all such copies of all the vertices in $V$. The set of edges $E_H$ is defined as follows: For every vertex $i \in V$, fix an arbitrary one-to-one correspondence between the incoming (respectively, outgoing) edges of $i$ in $G$ and the set $i_1,i_2,\dots,i_{\indeg_G(i)}$ (respectively, the set $i_1,i_2,\dots,i_{\outdeg_G(i)}$). For every edge $(i,k) \in E$, we add an edge between the corresponding copies of $i$ and $k$. This completes the construction of the graph $H$.

Notice that each vertex in $H$ has at most one incoming edge and at most one outgoing edge. Therefore, the set of edges $E_H$ can be partitioned into a set of directed paths $\P = \{P_1,P_2,\dots\}$ and a set of directed cycles $\C = \{C_1,C_2,\dots\}$ that are vertex-disjoint (in $H$). Since the edges in $H$ are in a one-to-one correspondence with the edges in $G$, the sets $\P$ and $\C$ also constitute a partition of the edges in $G$ (not necessarily vertex-disjoint in $G$).

Finally, notice that each path $P \in \P$ in $H$ must start at a vertex $i \in V_H$ with $\outdeg_H(i) = 1$ and $\indeg_H(i) = 0$, and must end at a vertex $k \in V_H$ with $\outdeg_H(k) = 0$ and $\indeg_H(k) = 1$. By construction, a vertex in $H$ has outdegree $1$ and indegree $0$ if and only if it is a copy of a vertex in the set $\E_{+}$ in $G$. Similarly, a vertex in $H$ with outdegree $0$ and indegree $1$ is a copy of a vertex in the set $\E_{-}$. Therefore, any path in $\P$, when projected to the graph $G$, must start at a vertex in $\E_{+}$ and end at a vertex in $\E_{-}$, as desired.
\end{proof}

Given a \emph{starting} allocation $A$ and a \emph{target} allocation $B$, define a directed graph $G = (V,E)$ over the set of agents (i.e., $V = [n]$) as follows: For every good $g \in A_i \cap B_k$, add an edge directed from vertex $i$ to vertex $k$. We call $G$ the \emph{transformation graph} from $A$ to $B$. Notice that a directed path in $G$ denotes a chain of pairwise swaps of goods. Similarly, a cycle in $G$ denotes a cyclic exchange of goods. Also notice that $G$ has at most $m$ edges.\footnote{A similar construction was used in the analysis of an algorithm for computing Nash optimal allocations for binary valuations \citep{BKV18greedy}.}

From \Cref{lem:Edge_Decomposition_Lemma}, we know that $E$ can be partitioned into a set of directed paths $\P = \{P_1,P_2,\dots\}$ and a set of directed cycles $\C = \{C_1,C_2,\dots\}$ such that each path starts at a vertex in $\E_{+}$ and ends at a vertex in $\E_{-}$, where $\E_{+} \coloneqq \{i \in [n] : |A_i| > |B_i|\}$ and $\E_{-} \coloneqq \{i \in [n] : |B_i| > |A_i|\}$.

For binary valuations, an allocation is Pareto optimal if and only if each good is assigned to an agent that approves it. Therefore, if $A$ and $B$ are Pareto optimal and the valuations are binary, we also have that $\E_{+} = \{i \in [n] : v_i(A) > v_i(B)\}$ and $\E_{-} = \{i \in [n] : v_i(B) > v_i(A)\}$. We formalize this observation in \Cref{cor:Edge_Decomposition_Pareto_Optimal_Binary}.

\begin{restatable}{corollary}{EdgeDecompositionParetoOptimalBinary}
 \label{cor:Edge_Decomposition_Pareto_Optimal_Binary}
 Let $A$ and $B$ be two Pareto optimal allocations for a given fair division instance with binary valuations, and let $G = (V,E)$ be the transformation graph from $A$ to $B$. Then, the set of edges $E$ can be partitioned into a set of at most $m$ directed paths $\P = \{P_1,P_2,\dots\}$ and a set of directed cycles $\C = \{C_1,C_2,\dots\}$ such that each path starts at a vertex in $\E_{+}$ and ends at a vertex in $\E_{-}$, where $\E_{+} \coloneqq \{i \in [n] : v_i(A) > v_i(B)\}$ and $\E_{-} \coloneqq \{i \in [n] : v_i(B) > v_i(A)\}$. 
\end{restatable}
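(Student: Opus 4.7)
The plan is to derive the corollary as a direct specialization of the preceding Edge Decomposition Lemma, with the only real work being to match up the sets $\E_+$ and $\E_-$ under the binary valuation assumption.

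First I would invoke the Edge Decomposition Lemma on the transformation graph $G=(V,E)$ from $A$ to $B$, which immediately yields a partition of $E$ into a collection of directed paths $\P$ and directed cycles $\C$, where each path starts at a vertex with $\outdeg_G > \indeg_G$ and ends at a vertex with $\indeg_G > \outdeg_G$. The bound on the number of paths is then obtained by noting that $|E| \leq m$: the construction of $G$ places exactly one edge per good (since for each $g \in [m]$, there is a unique agent $i$ with $g \in A_i$ and a unique agent $k$ with $g \in B_k$, contributing a single edge $i \to k$). Since the paths in $\P$ are edge-disjoint and each contains at least one edge, $|\P| \leq m$.

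Next I would compute the degrees of an arbitrary vertex $i$ in $G$. By the same per-good counting argument, $\outdeg_G(i) = \sum_{k \in [n]} |A_i \cap B_k| = |A_i|$ and $\indeg_G(i) = \sum_{k \in [n]} |A_k \cap B_i| = |B_i|$ (self-loops from goods in $A_i \cap B_i$ contribute once to each, preserving the identities). Thus the sets $\{i : \outdeg_G(i) > \indeg_G(i)\}$ and $\{i : \indeg_G(i) > \outdeg_G(i)\}$ from the Lemma coincide respectively with $\{i : |A_i| > |B_i|\}$ and $\{i : |B_i| > |A_i|\}$.

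Finally I would close the loop using Pareto optimality with binary valuations. Since $A$ and $B$ are both \PO{} and $v_{i,j} \in \{0,1\}$, every good in $A_i$ (respectively $B_i$) must be approved by $i$, for otherwise reallocating it to any approver yields a Pareto improvement. Hence $v_i(A) = |A_i|$ and $v_i(B) = |B_i|$ for every $i \in [n]$, which gives $\{i : |A_i| > |B_i|\} = \{i : v_i(A) > v_i(B)\} = \E_+$ and similarly for $\E_-$. Substituting these identifications into the conclusion of the Edge Decomposition Lemma yields the corollary. I don't anticipate any real obstacle here; the subtle point, if any, is just being careful that self-loops contribute equally to in- and out-degree so they cancel in the comparison $\outdeg_G(i)$ vs.\ $\indeg_G(i)$.
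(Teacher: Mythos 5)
Your proposal is correct and follows essentially the same route as the paper: apply the Edge Decomposition Lemma to the transformation graph, identify $\outdeg_G(i)=|A_i|$ and $\indeg_G(i)=|B_i|$, and then use the fact that under binary valuations Pareto optimality forces every allocated good to be approved, so $v_i(A_i)=|A_i|$ and $v_i(B_i)=|B_i|$, which turns the degree-based sets into $\E_+$ and $\E_-$. The paper presents this identification (and the at-most-$m$-edges observation) in the discussion immediately preceding the corollary rather than as a displayed proof, but the content is identical, and your explicit handling of self-loops is a harmless extra precaution.
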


\Cref{cor:Edge_Decomposition_Pareto_Optimal_Binary} will be useful in proving \Cref{lem:NashOptimality_Of_EQ1+PO_BinaryVals,lem:NashOptimal_Binary_Utility_Profile}.

\begin{restatable}{lemma}{EQonePOIsNashOptimalBinaryVals}
 \label{lem:NashOptimality_Of_EQ1+PO_BinaryVals}
 Let $\I$ be a fair division instance with binary valuations. If there exists an allocation $A$ that is \EQ{1} and \PO{} for $\I$, then $A$ must also be Nash optimal.
\end{restatable}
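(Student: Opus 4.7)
The plan is to exploit two structural facts: (i) for any two Pareto optimal allocations under binary valuations, the total utility $\sum_i v_i(A_i)$ is the same, equal to the number of ``approvable'' goods (those approved by at least one agent); and (ii) by $\EQ{1}$, the utility profile of $A$ is as balanced as possible given this common sum. Together with a standard integer AM-GM argument, these will imply that $A$ maximizes Nash welfare and hence is Nash optimal.

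First, I would establish (i) directly: under binary $\PO{}$, every approvable good must be held by some approver (otherwise reassigning it would strictly improve an agent without harming anyone), so $\sum_i v_i(A_i) = S$, where $S$ denotes the total count of approvable goods. (Equivalently, applying \Cref{cor:Edge_Decomposition_Pareto_Optimal_Binary} to the transformation graph from $A$ to any $\PO{}$ allocation $B$ decomposes it into cycles that preserve every utility together with paths that each shift exactly one unit of utility from a vertex in $\E_{+}$ to a vertex in $\E_{-}$, so the sum is conserved.) Next, I would use $\EQ{1}$ to pin down the profile of $A$: since $A$ is $\PO{}$ and binary, $v_k(A_k \setminus \{j\}) = v_k(A_k) - 1$ for every approved $j \in A_k$, so the $\EQ{1}$ condition reduces to: for all $i,k$ with $v_k(A_k) \geq 1$, $v_i(A_i) \geq v_k(A_k) - 1$. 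Taking $k$ to be a utility maximizer immediately forces either (a) $v_i(A_i) \geq 1$ for every $i$ and the profile lies in $\{c, c+1\}$ for some $c \geq 1$, or (b) some $v_i(A_i) = 0$ and the entire profile lies in $\{0,1\}$.

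To conclude, I would do a case analysis. In case (b), letting $k$ be the number of agents with utility $1$ in $A$ (so $k < n$ and $S = k$), any allocation $B$ satisfies $\sum_i v_i(B_i) \leq S < n$, forcing at least one agent to have zero utility in $B$ and hence $\NSW(B) = 0 = \NSW(A)$; under the extended definition of \citet{CKM+16unreasonable}, the same conclusion holds since the number of positive-utility agents in any $B$ is at most $S = k$, a count that $A$ already attains. In case (a), any $\PO{}$ allocation $B$ has $\sum_i v_i(B_i) = S$, and the standard integer AM-GM argument (if $x_i \geq x_j + 2$, replacing the pair by $(x_i - 1, x_j + 1)$ strictly increases $\prod_i x_i$) shows that $\prod_i v_i(B_i)$ is maximized exactly on profiles with values in $\{\lfloor S/n \rfloor, \lceil S/n \rceil\}$, which $A$ attains; any non-$\PO{}$ $B$ can be Pareto-improved to a $\PO{}$ allocation with weakly higher Nash welfare, so restricting to $\PO{}$ $B$ is without loss.

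The main obstacle is handling the zero-utility case (b) correctly under both the standard and the extended Nash welfare conventions, but once the conservation law $\sum_i v_i(A_i) = \sum_i v_i(B_i)$ across $\PO{}$ allocations is in hand, the rest of the proof reduces to a clean integer AM-GM extremal statement that $\EQ{1}$ precisely saturates.
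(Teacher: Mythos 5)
Your proof is correct, but it takes a genuinely different route from the paper's. The paper fixes a Nash optimal allocation $B$, builds the transformation graph from $A$ to $B$, invokes the edge-decomposition machinery (\Cref{cor:Edge_Decomposition_Pareto_Optimal_Binary}) to split the difference into cycles and unit-utility-shifting paths, and uses a telescoping sum to locate one path whose swap strictly increases $\NSW$; this forces $v_s(A_s) > v_t(A_t)+1$ for its endpoints, contradicting \EQ{1}. You instead avoid the graph machinery entirely: you observe that every allocation has total utility at most $S$ (the number of approvable goods), with equality for \PO{} allocations; that \EQ{1}$+$\PO{} under binary valuations pins the utility profile of $A$ to the balanced integer profile summing to $S$ (spread at most one); and that the integer AM--GM exchange argument makes this balanced profile the maximizer of $\prod_i x_i$ over all nonnegative integer profiles with sum at most $S$ --- with the degenerate case $S<n$ (where every allocation has Nash welfare zero, and the extended definition is handled by counting positive-utility agents) treated separately. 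Your argument is more elementary and self-contained, and it buys something extra: it directly characterizes the Nash-optimal utility profile as the balanced one, which essentially subsumes \Cref{lem:NashOptimal_Binary_Utility_Profile} (equivalence of utility profiles of Nash optimal allocations) and would let \Cref{lem:NashOptimal_EQ1+PO_BinaryVals} follow almost immediately. What the paper's decomposition buys in exchange is a reusable structural tool --- the same path/cycle decomposition drives both \Cref{lem:NashOptimality_Of_EQ1+PO_BinaryVals} and \Cref{lem:NashOptimal_Binary_Utility_Profile} --- and finer information about how any two \PO{} allocations differ, beyond their utility profiles. One small point worth making explicit in your write-up: the \EQ{1} condition is existential in the removed good, so when $A_k$ contains goods that $k$ values at zero (possible under \PO{} only for goods approved by nobody) you should note that the witnessing good can always be taken to be one that $k$ approves, which is exactly what yields $v_i(A_i)\ge v_k(A_k)-1$; you implicitly do this correctly.
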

\begin{proof}
Suppose, for contradiction, that $A$ is \EQ{1} and \PO{} but not Nash optimal, and let $B$ be some Nash optimal allocation. Let $G = (V,E)$ be the transformation graph from $A$ to $B$, and let $\E_{+} \coloneqq \{i \in [n] : v_i(A) > v_i(B)\}$ and $\E_{-} \coloneqq \{i \in [n] : v_i(B) > v_i(A)\}$. Since both $A$ and $B$ are \PO{}, we can apply \Cref{cor:Edge_Decomposition_Pareto_Optimal_Binary} to obtain a partition of the set of edges $E$ into a set of at most $m$ directed paths $\P = \{P_1,P_2,\dots\}$ and a set of directed cycles $\C = \{C_1,C_2,\dots\}$ such that each path starts at an agent in $\E_{+}$ and ends at an agent in $\E_{-}$. Notice that $\P \neq \emptyset$, since each cyclic exchange in $\C$ amounts to no change in Nash social welfare. 

Let $A^{1} \coloneqq A$ and $A^{i+1} \coloneqq P_i(A^{i})$, where $P_1,P_2,\dots$ are the paths in $\P$, and $P_i(A^{i})$ is the allocation obtained by performing the pairwise swaps along the path $P_i$ starting from $A^{i}$. Then, the quantity $\NSW(B) - \NSW(A)$ can be written as the following telescoping sum:
\begin{align*}
\NSW(B) - \NSW(A) = \sum_{P_i \in \P} \NSW(A^{i+1}) - \NSW(A^{i}).
\end{align*}

Since $A$ is not Nash optimal, we have that $\NSW(B) > \NSW(A)$. Therefore, there must exist some path $P_i \in \P$ such that $\NSW(A^{i+1}) > \NSW(A^{i})$, where $A^{i+1} = P_i(A^{i})$. Say $P_i$ starts at an agent $s \in \E_{+}$ and ends at an agent $t \in \E_{-}$. Notice that the utility of every agent other than $s$ and $t$ remains unchanged between $A^{i}$ and $A^{i+1}$. Therefore, the inequality $\NSW(A^{i+1}) > \NSW(A^{i})$ implies that $v_s(A^{i+1}_s) \cdot v_t(A^{i+1}_t) > v_s(A^{i}_s) \cdot v_t(A^{i}_t)$. Since the valuations are binary, we have $v_s(A^{i+1}_s) = v_s(A^{i}_s) - 1$ and $v_t(A^{i+1}_t) = v_t(A^{i}_t) + 1$. Substituting these relations in the above inequality gives $v_s(A^{i}_s) > v_t(A^{i}_t) + 1$. Notice that while performing pairwise swaps according to the paths in $\P$, the utility of the agents in $\E_{+}$ can never increase (and that of the agents in $\E_{-}$ can never decrease). Therefore, $v_s(A_s) \geq v_s(A^{i}_s)$ and $v_t(A_t) \leq v_t(A^{i}_t)$. We therefore get $v_s(A_s) > v_t(A_t) + 1$, which contradicts that $A$ is \EQ{1}. Hence, $A$ must be Nash optimal.
\end{proof}

Given any allocation $A$, its \emph{utility profile} is defined as the ordered $n$-tuple of agents' utilities $(v_1(A_1),v_2(A_2),\dots,v_n(A_n))$. Two allocations $A$ and $B$ are said to have \emph{equivalent utility profiles} if there is a permutation of the agents $\sigma: [n] \rightarrow [n]$ such that for every $i \in [n]$, we have $v_{\sigma(i)}(A_{\sigma(i)}) = v_i(B_i)$.

\begin{restatable}{lemma}{NashOptimalBinaryValsUtilityProfile}
 \label{lem:NashOptimal_Binary_Utility_Profile}
 Let $\I$ be a fair division instance with binary valuations and let $A$ and $B$ be two Nash optimal allocations. Then, $A$ and $B$ have equivalent utility profiles.
\end{restatable}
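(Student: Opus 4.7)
The plan is to reduce the claim to an elementary number-theoretic uniqueness fact. First, observe that any Nash optimal allocation is \PO{}, and that under binary valuations every \PO{} allocation assigns each good valued by at least one agent to some agent who values it. Consequently $\sum_i v_i(A_i) = \sum_i v_i(B_i) = W$, where $W$ denotes the number of goods valued by at least one agent. Next, adopting the \citet{CKM+16unreasonable} definition of Nash optimality that this paper uses, both $A$ and $B$ first maximize the size $k$ of the support $S(X) \coloneqq \{i \in [n] : v_i(X_i) > 0\}$, and subject to that maximize $\prod_{i \in S(X)} v_i(X_i)$. Hence $|S(A)| = |S(B)| = k$, $\sum_{i\in S(A)} v_i(A_i) = \sum_{i \in S(B)} v_i(B_i) = W$, and $\prod_{i \in S(A)} v_i(A_i) = \prod_{i \in S(B)} v_i(B_i)$.

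The key remaining step is the elementary fact that, among $k$-tuples of positive integers summing to $W$, the product is maximized by a multiset that depends only on $(k, W)$. I would prove this by a smoothing argument: if any tuple contains two entries $u < v$ with $v - u \geq 2$, replacing them by $(u+1, v-1)$ strictly increases the product, since $(u+1)(v-1) - uv = v - u - 1 \geq 1$. Thus at a maximizer any two entries differ by at most one, which forces exactly $r \coloneqq W \bmod k$ entries to equal $\lceil W/k \rceil$ and the remaining $k - r$ entries to equal $\lfloor W/k \rfloor$.

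Applying this uniqueness to the restricted utility vectors over $S(A)$ and $S(B)$ forces the multisets of positive utilities under $A$ and $B$ to coincide. The remaining $n - k$ agents have utility $0$ under both allocations, so the two full utility profiles agree as multisets, which is exactly the equivalence asserted in \Cref{lem:NashOptimal_Binary_Utility_Profile}. I do not anticipate any real obstacle here; every step is either a direct unpacking of definitions or the standard smoothing argument, and the substantive work of the section was already done in \Cref{lem:NashOptimality_Of_EQ1+PO_BinaryVals}, which together with this lemma lets one decide \EQ{1}+\EF{1}+\PO{} by computing any single Nash optimal allocation (possible in polynomial time for binary valuations by \citet{DS15maximizing,BKV18greedy}) and checking \EQ{1} on it.
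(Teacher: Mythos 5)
There is a genuine gap at your ``key remaining step.'' The smoothing argument establishes uniqueness of the product maximizer over \emph{all} $k$-tuples of positive integers summing to $W$, but a Nash optimal allocation only maximizes the product over the \emph{feasible} utility profiles, and the balanced tuple need not be feasible. For instance, with two agents where $a_1$ approves $g_1,\dots,g_4$ and $a_2$ approves only $g_4$, every \PO{} allocation gives $a_1$ the first three goods, so the Nash optimal profile is $(3,1)$ even though $W=4$ and $k=2$; the balanced profile $(2,2)$ is unattainable. Hence you cannot conclude that a Nash optimal allocation has entries differing by at most one, and knowing only that $A$ and $B$ agree on the support size $k$, the sum $W$, and the product does not pin down the multiset of utilities: for example, $\{1,6,6\}$ and $\{2,2,9\}$ have the same length, the same sum ($13$), and the same product ($36$). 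So the reduction to a number-theoretic uniqueness fact does not go through.

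What is missing is an exchange argument showing that if the two profiles differed, one could move from $A$ toward $B$ by a \emph{feasible} reallocation that strictly increases \NSW{}. This is exactly what the paper supplies: it forms the transformation graph from $A$ to $B$, decomposes its edges into directed paths and cycles via \Cref{cor:Edge_Decomposition_Pareto_Optimal_Binary}, and uses a telescoping sum over the paths to show that each path transfer must preserve \NSW{}; this forces every path to swap utility values $u$ and $u+1$ between its endpoints, yielding a permutation under which the two profiles coincide. Your observations that the utilitarian welfare is fixed at $W$, that the support size is fixed at $k$, and that $(u+1)(v-1)>uv$ whenever $v-u\ge 2$ are all correct and in fact reappear inside the paper's path analysis---but they must be applied to feasible single-good transfers along paths of the transformation graph, not to abstract integer tuples.
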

\begin{proof}
Let $G = (V,E)$ be the transformation graph from $A$ to $B$, and let $\E_{+} \coloneqq \{i \in [n] : v_i(A) > v_i(B)\}$ and $\E_{-} \coloneqq \{i \in [n] : v_i(B) > v_i(A)\}$. Since both $A$ and $B$ are \PO{}, we can apply \Cref{cor:Edge_Decomposition_Pareto_Optimal_Binary} to obtain a partition of the set of edges $E$ into a set of directed paths $\P = \{P_1,P_2,\dots\}$ and a set of directed cycles $\C = \{C_1,C_2,\dots\}$ such that each path starts at an agent in $\E_{+}$ and ends at an agent in $\E_{-}$. Notice that if $\P = \emptyset$, then the claim follows because Nash social welfare remains unchanged under any cyclic exchange. Therefore, for the rest of the proof, we will assume that $\P \neq \emptyset$.

We claim that for every path $P_i \in \P$, $\NSW(P_i(A)) = \NSW(A)$, where $P_i(A)$ is the allocation obtained by performing the pairwise swaps along the path $P_i$ starting from $A$. Indeed, since $A$ is Nash optimal, we know that $\NSW(P_i(A)) \leq \NSW(A)$. Suppose, for contradiction, that $\NSW(P(A)) < \NSW(A)$ for some path $P \in \P$. By reindexing, we can write $P_1 \coloneqq P$, and therefore $\NSW(P_1(A)) < \NSW(A)$. Analogously to the proof of \Cref{lem:NashOptimality_Of_EQ1+PO_BinaryVals}, define $A^{1} \coloneqq A$ and $A^{i+1} \coloneqq P_i(A^{i})$, where $P_1,P_2,\dots$ are the paths in $\P$, and $P_i(A^{i})$ is the allocation obtained by performing the pairwise swaps along the path $P_i$ starting from $A^{i}$. We can once again express $\NSW(B) - \NSW(A)$ as the following telescoping sum:
\begin{align*}
\NSW(B) - \NSW(A) & = \sum_{P_i \in \P} \NSW(A^{i+1}) - \NSW(A^{i})\\
& = \NSW(P_1(A)) - \NSW(A) + \sum_{P_i \in \P \setminus \{P_1\}} \NSW(A^{i+1}) - \NSW(A^{i}).
\end{align*}

Since $A$ and $B$ are both Nash optimal, $\NSW(B) = \NSW(A)$. Therefore, if $\NSW(P_1(A)) < \NSW(A)$, then there must exist some path $P_i \in \P \setminus \{P_1\}$ such that $\NSW(A^{i+1}) > \NSW(A^{i})$, where $A^{i+1} = P_i(A^{i})$. Proceeding as in the proof of \Cref{lem:NashOptimality_Of_EQ1+PO_BinaryVals}, we get that if $P_i$ starts at $s \in \E_{+}$ and ends at $t \in \E_{-}$, then $v_s(A_s) > v_t(A_t) + 1$.

Consider the allocation $A' \coloneqq P_i(A)$. Notice that $v_s(A'_s) = v_s(A_s) - 1$ and $v_t(A'_t) = v_t(A_t) + 1$. Then,
\begin{align*}
\NSW(A') > \NSW(A) & \Leftrightarrow v_s(A'_s) \cdot v_t(A'_t) > v_s(A_s) \cdot v_t(A_t) \\
& \Leftrightarrow (v_s(A_s)-1) \cdot (v_t(A_t)+1) > v_s(A_s) \cdot v_t(A_t)\\
& \Leftrightarrow v_s(A_s) > v_t(A_t) + 1.
\end{align*}
We know from preceding analysis that the last inequality is true. Therefore, we get that $\NSW(A') > \NSW(A)$, which contradicts that $A$ is Nash optimal. Thus, for every path $P_i \in \P$, $\NSW(P_i(A)) = \NSW(A)$.

The above observation implies that for any path $P \in \P$ that starts at $s \in \E_{+}$ and ends at $t \in \E_{-}$, we have $v_s(A_s) = v_t(A_t) + 1$. After the reallocation along $P$, we obtain another Nash optimal allocation $A' = A(P)$ such that $v_t(A'_t) = v_s(A'_s) + 1$. Notice that $A$ and $A'$ have equivalent utility profiles with respect to the permutation $\sigma$ such that $\sigma(s) = t$, $\sigma(t) = s$, and $\sigma(i)=i$ for all $i \in [n] \setminus \{s,t\}$. The lemma now follows from induction over the paths in $\P$.
\end{proof}

\begin{restatable}{lemma}{NashIsEQonePOBinaryVals}
 \label{lem:NashOptimal_EQ1+PO_BinaryVals}
 Let $\I$ be a fair division instance with binary valuations. If $\I$ admits some \EQ{1} and \PO{} allocation, then every Nash optimal allocation must satisfy \EQ{1}.
\end{restatable}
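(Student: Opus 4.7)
\medskip

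\noindent\textbf{Proof plan for \Cref{lem:NashOptimal_EQ1+PO_BinaryVals}.} The plan is to chain together the two preceding lemmas and observe that, for binary valuations together with Pareto optimality, the \EQ{1} property depends only on the multiset of agents' utilities, not on which agent holds which bundle. Since \Cref{lem:NashOptimal_Binary_Utility_Profile} already tells us that any two Nash optimal allocations have equivalent utility profiles, this reduction will finish the argument almost immediately.

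First, I would invoke \Cref{lem:NashOptimality_Of_EQ1+PO_BinaryVals} to get that the given \EQ{1}+\PO{} allocation, call it $A$, is itself Nash optimal. Then I would take an arbitrary Nash optimal allocation $B$ and apply \Cref{lem:NashOptimal_Binary_Utility_Profile} to obtain a permutation $\sigma : [n] \to [n]$ with $v_{\sigma(i)}(A_{\sigma(i)}) = v_i(B_i)$ for every $i \in [n]$. In particular, $\{v_i(A_i)\}_{i \in [n]} = \{v_i(B_i)\}_{i \in [n]}$ as multisets.

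Next I would recast the \EQ{1} condition in the binary+\PO{} setting. For any \PO{} allocation $X$ under binary valuations, every agent approves every good in its own bundle, so $v_i(X_i) = |X_i|$ and $v_{i,j} = 1$ for all $j \in X_i$. Therefore the \EQ{1} condition on $X$ simplifies to: for every pair $i,k$ with $X_k \neq \emptyset$, $v_i(X_i) \geq v_k(X_k) - 1$, which is equivalent to the single inequality $\max_k v_k(X_k) - \min_i v_i(X_i) \leq 1$ (the empty-bundle case is handled by $v_k(X_k) = 0$ making the condition vacuous). In particular, whether a \PO{} allocation is \EQ{1} depends only on the multiset of its utility values.

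Finally I would combine these observations: $A$ is \EQ{1} and \PO{} (by hypothesis), hence its utility multiset satisfies $\max - \min \leq 1$; $B$ has the same utility multiset and is also \PO{} (being Nash optimal), so $B$ is \EQ{1} as well. There is no real obstacle here; the only thing to be slightly careful about is the vacuous case where some $B_k = \emptyset$, which is handled cleanly because \PO{}+binary forces $v_k(B_k) = |B_k|$, so empty bundles simply correspond to zero-utility entries in the multiset and do not affect the $\max - \min \leq 1$ characterization.
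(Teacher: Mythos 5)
Your proposal is correct and follows essentially the same route as the paper's proof: both invoke \Cref{lem:NashOptimality_Of_EQ1+PO_BinaryVals} to conclude the \EQ{1}+\PO{} allocation is Nash optimal, then use \Cref{lem:NashOptimal_Binary_Utility_Profile} to transfer the \EQ{1} property between equivalent utility profiles. Your explicit recasting of \EQ{1} (for binary+\PO{}) as ``$\max - \min \leq 1$ of the utility multiset'' is just a cleaner phrasing of the pairwise argument the paper runs by contradiction.
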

\begin{proof}
Suppose, for contradiction, that there exists a Nash optimal allocation $A$ that is not \EQ{1}, even though there exists another allocation $B$ that is \EQ{1} and \PO{}. Since $A$ violates \EQ{1}, there must exist a pair of agents $h,k \in [n]$ such that for every good $j \in A_k$, we have $v_h(A_h) < v_k(A_k \setminus \{j\})$. Since the valuations are binary and additive and $A$ is \PO{}, we have $v_{k,j} = 1$, and therefore $v_h(A_h) < v_k(A_k) - 1$.

From \Cref{lem:NashOptimality_Of_EQ1+PO_BinaryVals}, we know that $B$ is Nash optimal. Then, by \Cref{lem:NashOptimal_Binary_Utility_Profile}, $B$ and $A$ must have equivalent utility profiles, i.e., there exists a permutation $\sigma : [n] \rightarrow [n]$ such that for every $i \in [n]$, we have $v_{\sigma(i)}(B_{\sigma(i)}) = v_i(A_i)$. This gives $v_{\sigma(h)}(B_{\sigma(h)}) < v_{\sigma(k)}(B_{\sigma(k)}) - 1$, which contradicts that $B$ is \EQ{1}.
\end{proof}

We are now ready to prove \Cref{thm:EQ1+EF1+PO_Polytime_BinaryVals}.

\EQoneEFonePOPolytimeBinaryVals*
\begin{proof}
It is known that for binary valuations, a Nash optimal allocation can be computed in polynomial time \citep{DS15maximizing,BKV18greedy}. If this allocation satisfies \EQ{1}, then we have the desired allocation (since a Nash optimal allocation is also \EF{1}+\PO{}; \citealt{CKM+16unreasonable}). Otherwise, using the contrapositive of \Cref{lem:NashOptimal_EQ1+PO_BinaryVals}, we know that no allocation satisfies \EQ{1} and \PO{}.
\end{proof}

\subsection{Experimental Results for Envy-Freeness}
\label{subsec:Expt_Goods_EFx}

\begin{figure}
\centering
    \includegraphics[width=0.75\linewidth]{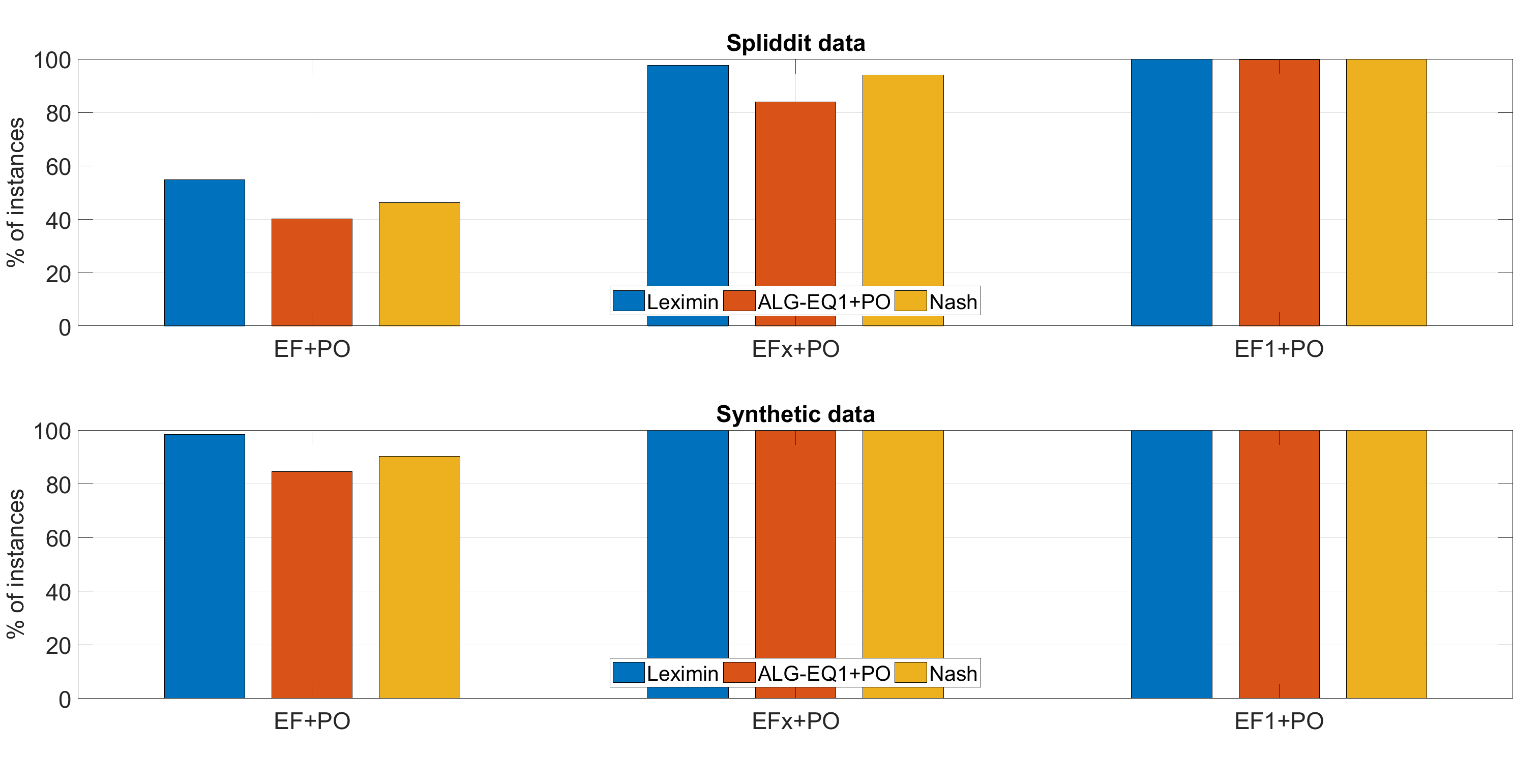}
\caption{Experimental results for envy-freeness and its relaxations.}
\label{fig:ef-experiments}
\end{figure}

In this section we discuss additional experimental results that pertain only to (relaxations of) envy-freeness, not equitability. Figure~\ref{fig:ef-experiments} presents histograms of the same form as Figure~\ref{fig:Experimental_Results_Goods_Main} for the following combinations of properties: \EF{}+\PO{}, \EF{1}+\PO{}, and \EF{x}+\PO{}.

We begin by noting that all algorithms satisfy \EF{1}+\PO{} on all synthetic instances and almost all Spliddit instances. Of the 914 Spliddit instances, \Leximin{} satisfies \EF{1}+\PO{} on 913 and \AlgEQonePO{} on 912 of them (MNW is theoretically guaranteed to always satisfy \EF{1}+\PO{}). 

Interestingly, despite being guaranteed to satisfy \EF{1}, MNW is actually outperformed by \Leximin{} with respect to both \EF{} and \EF{x}. \Leximin{} satisfies \EF{} on 55\% of Spliddit instances and 98\% of synthetic instances, whereas MNW satisfies \EF{} on only 46\% of Spliddit instances and 90\% of synthetic instances. \AlgEQonePO{} performs the worst from the perspective of envy, which is not surprising given that it is specifically designed with equitability (up to one good) in mind. All algorithms perform better with respect to the relaxed property \EF{x}, but the relative performance of the algorithms is preserved, with \Leximin{} satisfying \EF{x} on strictly more instances than MNW, which in turn performs better than \AlgEQonePO{}.

In contrast to the equitability-related properties investigated in Section~\ref{sec:Experiments}, all algorithms achieve envy-related properties more frequently on synthetic instances than on Spliddit instances. We speculate that this is due to greater correlation between agents' valuations for Spliddit data than in our Dirichlet model, making Spliddit instances `harder' from the perspective of removing envy.

\subsection{Miscellaneous Examples}
\label{subsec:Misc_Examples}

\begin{example}
\label{eg:Relax_And_Round_Fails}
This example presents an instance where every rounding of the fractional maximin allocation (i.e., a fractional allocation that maximizes the minimum utility) violates \EQ{1}. Consider the following instance with three agents $a_1,a_2,a_3$ and their (additive) valuations over three goods $g_1,g_2,g_3$ as below:

\begin{table}[ht]
\centering
\begin{tabular}{ c|ccc }
	& $g_1$ & $g_2$ & $g_3$\\ \hline
  $a_1$ & $\nicefrac{1}{4} - \nicefrac{\eps}{2}$ & $\nicefrac{1}{4} - \nicefrac{\eps}{2}$ & $\nicefrac{1}{2} + \eps$\\
  $a_2$ & $\eps$ & $\eps$ & $1 - 2\eps$ \\
  $a_3$ & $\eps$ & $\eps$ & $1 - 2\eps$ \\
\end{tabular}
\end{table}

The fractional maximin allocation assigns $g_1$ and $g_2$ to $a_1$, and splits $g_3$ equally between $a_2$ and $a_3$. In any rounding of this allocation, either $a_2$ or $a_3$ will get an empty bundle, thus creating an \EQ{1} violation with $a_1$.

We remark that the above example also shows that \emph{any} rounding of the fractional MNW allocation (i.e., a fractional allocation that maximizes the geometric mean of the utilities) violates \EF{1}. A similar implication was previously shown by \citet{CKM+16unreasonable} by means of a more complicated example involving $3$ agents and $31$ goods.
\end{example}

\begin{example}
This example shows that allocations other than a \Leximin{}-optimal can satisfy \EQ{1} and \PO{}. Suppose there are three goods $\{g_1,g_2,g_3\}$ and two agents $\{1,2\}$ with identical valuations given by $v_{g_1} = 2$ and $v_{g_2} = v_{g_3} = 1$. The allocation $A = (A_1,A_2)$ with $A_1 = \{g_1,g_2\}$ and $A_2 = \{g_3\}$ satisfies \EQ{1} and \PO{}, but is not \Leximin{}-optimal.
\end{example}

\begin{example}
This example shows that when we allow the valuations to be zero-valued, a \Leximin{}-optimal allocation need not be \EQ{1}, even when there exists another allocation that is \EQ{1} and \PO{}. Consider the following instance with three agents and eight goods:
\begin{table}[ht]
\centering
\begin{tabular}{ c|ccc }
	& $g_1$ & $g_2,g_3$ & $g_4,g_5,g_6,g_7,g_8$\\ \hline
  $a_1$ & $7$ & $(0,0)$ & $(0,0,0,0,0)$\\
  $a_2$ & $0$ & $(5,5)$ & $(2,2,2,2,2)$ \\
  $a_3$ & $0$ & $(5,5)$ & $(2,2,2,2,2)$ \\
\end{tabular}
\end{table}

The above instance has exactly two \Leximin{}-optimal allocations, $\{g_1\},\{g_2,g_3\},\{g_4,\dots,g_8\}$ and $\{g_1\},\{g_4,\dots,g_8\},\{g_2,g_3\}$, both of which violate $\EQ{1}$. Yet, there exists an \EQ{1} and \PO{} allocation $\{g_1\},\{g_2,g_4,g_5\},\{g_3,g_6,g_7,g_8\}$. Note that \PO{} follows from the fact that $g_1$ is uniquely valued by agent $1$, and that $g_2,\dots,g_8$ constitute a sub-instance with identical valuations for the agents $a_2$ and $a_3$.
\end{example}

\subsection{Approximating Max-Min Fairness}
\label{sec:maxmin}

We now ask whether our (approximate) notions of equitability, in conjunction with Pareto optimality, provide any approximation guarantee for the \SantaClaus{} problem~\citep{BD05allocating,BS06santa}. Intuitively, both (approximate) equitability and \SantaClaus{} have egalitarian goals: The former aims to minimize the disparity between the best-off and worst-off individuals, while the latter aims to maximize the utility of the worst-off agent. It is therefore pertinent to ask whether and how well do the solutions of one problem work for the other.

Formally, an instance of \SantaClaus{} consists of a fair division instance (with additive valuations) and some $r \in \mathbb{R}$, and asks whether it is possible to allocate the goods so that every agent has utility at least $r$. For a given instance, denote by \OPT{} the highest value of $r$ for which the \SantaClaus{} instance has a solution. For any $b \in [0,1]$, we say that an allocation $A$ is a $b$-approximation if $v_i(A_i) \ge b \cdot \OPT{}$ for all $i \in [n]$. It is known that computing a $b$-approximation for \SantaClaus{} for any fixed $b > \frac{1}{2}$ is strongly \NPhard{}~\citep{BD05allocating}.

We first observe that an (arbitrary) allocation satisfying \EQ{x} and \PO{} might fail to provide any non-trivial approximation to \SantaClaus{}.

\begin{example}
Consider an instance with two agents $a_1,a_2$ and two goods $g_1,g_2$, with $v_{1,1}=K$, $v_{1,2}=K-1$, $v_{2,1}=K-1$, and $v_{2,2}=1$, for some positive $K$. The allocation $A = (A_1,A_2)$ given by $A_1 = \{ g_1 \}$ and $A_2 = \{ g_2 \}$ is \EQx{} and \PO{}. Under $A$, agent $a_2$ has a utility of $1$, whereas it is possible for the agents to swap bundles and receive a utility of $\OPT{} = K-1$ each. The approximation ratio of $\frac{1}{K-1}$ can be made arbitrarily small for a suitably large $K$.
\label{ex:EQX+PO_no_mm_approximation}
\end{example}

We can, however, guarantee a non-trivial approximation by imposing additional structure on the \EQx{}+\PO{} solution. In particular, \Cref{thm:mm_approximation} achieves a better approximation factor that depends on the number of goods allocated to the agent with highest utility.

\begin{restatable}[\textbf{\SantaClaus{} Approximation for \EQ{x} and \PO{} Allocations}]{theorem}{mmapproximation}
	Any allocation that satisfies \EQx{}+\PO{} and allocates $c$ goods to an agent with highest utility provides a $(1-1/c)$-approximation to the \SantaClaus{} problem. Furthermore, this bound is tight.
	\label{thm:mm_approximation}
\end{restatable}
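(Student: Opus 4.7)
The plan is to establish the approximation bound through two inequalities and then to exhibit tightness via a natural generalization of Example~\ref{ex:EQX+PO_no_mm_approximation}. Let $A$ be an \EQx{}+\PO{} allocation, let $k$ be an agent achieving $U_{\max} \coloneqq v_k(A_k) = \max_\ell v_\ell(A_\ell)$ with $|A_k| = c$, and let $i^*$ be a minimum-utility agent with $U_{\min} \coloneqq v_{i^*}(A_{i^*})$.

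First, I would argue that we may assume every good $j \in A_k$ satisfies $v_{k,j} > 0$. Indeed, if $j \in A_k$ with $v_{k,j} = 0$, then $v_{\ell,j} = 0$ for every agent $\ell$ (otherwise, transferring $j$ to any $\ell$ with $v_{\ell,j}>0$ would Pareto-improve $A$); such goods are useless and may be discarded without affecting \EQx{}, \PO{}, $\OPT$, or the effective value of $c$. With $c$ positive-value goods in $A_k$ summing to $U_{\max}$, a minimum-value good $j^* \in A_k$ satisfies $v_{k,j^*} \leq U_{\max}/c$. Applying \EQx{} to the pair $(i^*, k)$ with the good $j^*$ yields
\begin{equation*}
 U_{\min} \;\geq\; v_k(A_k \setminus \{j^*\}) \;=\; U_{\max} - v_{k,j^*} \;\geq\; U_{\max}(1 - 1/c).
\end{equation*}

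Second, I would establish that $U_{\max} \geq \OPT$ by leveraging \PO{}. Let $B$ be any optimal max-min allocation, so that $v_\ell(B_\ell) \geq \OPT$ for every $\ell$. If $U_{\max} < \OPT$, then $v_\ell(A_\ell) \leq U_{\max} < \OPT \leq v_\ell(B_\ell)$ for all $\ell$, so $B$ Pareto-dominates $A$, contradicting \PO{}. Combining the two inequalities gives $U_{\min} \geq (1 - 1/c)\cdot\OPT$, and since $v_\ell(A_\ell) \geq U_{\min}$ for every $\ell$, the allocation $A$ is a $(1 - 1/c)$-approximation to \SantaClaus{}.

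For tightness, I plan to generalize Example~\ref{ex:EQX+PO_no_mm_approximation}. Fix $c \geq 1$ and a small $\delta > 0$, let $\eps \coloneqq \delta/c$, and consider an instance with $n=2$ agents and $c+1$ goods with valuations $v_{1,g_i} = 1/c$, $v_{2,g_i} = 1/c - \eps$ for $i \in \{1,\ldots,c\}$, together with $v_{1,g_{c+1}} = 1-\delta$ and $v_{2,g_{c+1}} = 1 - 1/c$. The allocation $A_1 = \{g_1,\ldots,g_c\}, A_2 = \{g_{c+1}\}$ has utility profile $(1,\, 1 - 1/c)$, satisfies \EQx{} with equality, and a routine case check shows it is \PO{}. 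The swap allocation yields utilities $(1-\delta,\, 1-\delta)$ and therefore $\OPT \geq 1-\delta$, giving an approximation ratio $(1 - 1/c)/(1-\delta)$ that tends to $1 - 1/c$ as $\delta \to 0$. The main obstacle will be the \PO{} verification for the tight construction: while conceptually routine, it requires checking that no repartition of the $c+1$ goods Pareto-dominates $A$, which involves some case work for general $c$.
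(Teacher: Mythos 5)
Your argument for the approximation bound is essentially the paper's: \PO{} forces the maximum utility to be at least $\OPT{}$ (else the optimal max-min allocation would strictly Pareto-dominate $A$), and applying \EQx{} with the least positively-valued good in the top agent's bundle---worth at most a $1/c$ fraction of that bundle---yields the $(1-1/c)$ factor. Two remarks. First, your tightness construction differs from (and improves on) the paper's: the paper duplicates good $g_1$ of \Cref{ex:EQX+PO_no_mm_approximation} and only exhibits tightness for $c=2$, whereas your parametric family handles every $c \geq 2$. The \PO{} verification you defer does go through with little case work: any bundle giving agent $1$ utility at least $1$ either contains all of $g_1,\dots,g_c$ (so agent $2$ gets at most $1-1/c$, with no strict gain for either agent) or contains $g_{c+1}$ plus at least one small good (so agent $2$ gets strictly less than $1-1/c$). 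Second, your preliminary step about zero-valued goods touches a genuine subtlety that the paper's proof passes over silently: if only $c' < c$ of the top agent's $c$ goods are positively valued, the least positively-valued good is bounded only by $U_{\max}/c'$, and the argument yields $(1-1/c')$, which is weaker than $(1-1/c)$. However, your resolution is not quite right as written---discarding the zero-valued goods \emph{does} shrink $|A_k|$ and hence the literal value of $c$ in the theorem statement, so your sentence ``without affecting \dots the effective value of $c$'' amounts to silently reinterpreting $c$ as the number of positively-valued goods in the bundle. Under that reading (or under strictly positive valuations, as in \Cref{ex:EQX+PO_no_mm_approximation}) everything is fine, and since the paper's own proof makes the same implicit assumption this should not be held against you; but the honest fix is to state explicitly that $c$ counts the positively-valued goods of the highest-utility agent.
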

\begin{proof}
We will start by showing the approximation guarantee. Let $A$ be any \EQx{}+\PO{} allocation that assigns $c$ goods to an agent with the highest utility, say $i$ (thus, $i \in \arg\max_{k \in [n]} v_k(A_k)$ and $|A_i|=c$). Then, we must have that $v_i(A_i) \ge \OPT{}$, otherwise there exists an allocation $B$ for which $v_k(B_k) \ge \OPT{} > v_i(A_i) \ge v_k(A_k)$ for all $k \in [n]$, which violates \PO{}. 

Let $j \in A_i$ be the least positively valued good of agent $i$ in its bundle, i.e., $j \in \arg\min_{\ell \in A_i : v_{i,\ell} > 0} v_{i,\ell}$. Then, $v_{i,j} \le (1/c) \cdot v_i(A_i)$. The desired approximation now follows by invoking the \EQx{} condition: For any agent $k \in [n]$,
\begin{align*}
	v_k(A_k) \ge v_i(A_i)-v_{i,j} \ge (1-\frac{1}{c})v_i(A_i) \ge (1-\frac{1}{c}) \OPT{}.
\end{align*}

We will now show that the above bound is tight. Consider the instance in \Cref{ex:EQX+PO_no_mm_approximation}, with two additional copies of good $g_1$; thus four goods overall. With slight abuse of notation, we will write $\{g_1,g_2,g_1,g_1\}$ to denote the set of goods. Notice that the allocation $A = (A_1,A_2)$ given by $A_1 = \{g_1,g_1\}$ and $A_2 = \{g_1,g_2\}$ is \EQx{}, \PO{}, and assigns $c=2$ goods to the highest utility agent. From the above analysis, $A$ achieves (at least) $\frac{1}{2}$-approximation to \SantaClaus{}. Furthermore, under $A$, agent $a_2$ receives a utility of $K$. However, the allocation $B = (B_1,B_2)$ with $B_1 = \{g_1,g_2\}$ and $B_2 = \{g_1,g_1\}$ shows that $\OPT \geq 2K-2$. For a suitably large $K$, the approximation ratio of $\frac{K}{2K-2}$ achieved by $A$ can be made arbitrarily close to $\frac{1}{2}$.
\end{proof}

\begin{remark}
	As an application of Theorem~\ref{thm:mm_approximation}, suppose that all valuations $v_{i,j}$ lie in the range $[1,c]$, and $m > c^2n$. The average utility is at least $c^2$, which means that the agent with the highest utility must receive at least $c$ goods (since each good is valued at no more than $c$). In this setting, any allocation that satisfies \EQ{x} and \PO{} therefore achieves a $(1-\frac{1}{c})$-approximation to \OPT{}.
\end{remark}

For identical valuations, we can give a tight bound on the approximation guarantee of an \EQ{1} allocation.

\begin{restatable}[\textbf{\SantaClaus{} Approximation for \EQ{1} Allocations and Identical Valuations}]{theorem}{EQonePOidenticalmmapproximation}
	For identical valuations, any allocation satisfying \EQ{1} provides a $\frac{1}{n}$-approximation to the \SantaClaus{} problem. Furthermore, this bound is tight.
	\label{thm:EQone+PO_identical_mm_approximation}
\end{restatable}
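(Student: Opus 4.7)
My strategy is to compare the ``capacity'' of small-valued goods in $A$ with the demand imposed by the optimal allocation. Writing $v$ for the common valuation, call a good $j$ \emph{big} if $v_j > v(A_{i^*})$ and \emph{small} otherwise. The plan is to bound the total value of small goods from above using EQ1, then argue that the optimal allocation necessarily contains enough small-only bundles—each of value at least $\OPT$—to force $v(A_{i^*}) \geq \OPT/n$.

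The first step is to extract two structural consequences of EQ1 under identical valuations. (a)~The bundle $A_{i^*}$ contains no big good, for otherwise $v(A_{i^*}) \geq v_g > v(A_{i^*})$ for the offending good $g$. (b)~Every bundle $A_k$ contains at most one big good: if $A_k$ had two big goods, then removing any single good from $A_k$ would leave at least one big good behind, so $v(A_k \setminus \{j\}) > v(A_{i^*})$ for \emph{every} $j \in A_k$, contradicting the EQ1 relation between $i^*$ and $k$. Let $N \coloneqq \{k : A_k \text{ has no big good}\}$, so that $i^* \in N$, $|N| \geq 1$, and the number of big goods is $n - |N|$. Using EQ1 in the equivalent form $v(A_k) - \max_{j \in A_k} v_j \leq v(A_{i^*})$, I decompose the total small-good value as the sum of three contributions: the bundle $A_{i^*}$ contributes $v(A_{i^*})$; each of the other $|N|-1$ bundles in $N$ contributes at most $2 v(A_{i^*})$ (its maximum good is itself small); and each of the $n-|N|$ bundles holding a big good contributes at most $v(A_{i^*})$ in small goods (removing the unique big good must leave value at most $v(A_{i^*})$). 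Summing yields a total small-good value at most $(n + |N| - 1)\, v(A_{i^*})$.

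Turning to an optimal allocation $B$, the $n-|N|$ big goods occupy at most $n-|N|$ distinct bundles of $B$, so at least $|N|$ bundles of $B$ consist entirely of small goods, each with value at least $\OPT$. Hence $|N| \cdot \OPT \leq (n + |N| - 1)\, v(A_{i^*})$, and the elementary inequality $|N|/(n + |N| - 1) \geq 1/n$ (which reduces to $|N| \geq 1$) gives $v(A_{i^*}) \geq \OPT/n$. The main obstacle I foresee is pinning down the correct ``big/small'' threshold relative to $v(A_{i^*})$ and then observing that the $|N|$-parameterized bound is minimized precisely at $|N| = 1$; this is the case that pinpoints the extremal structure and governs tightness.

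For tightness, I will construct an instance with $n$ agents sharing a common valuation, consisting of $n-1$ ``large'' goods of value $K \geq n(n-1)$ together with $n(n-1)$ ``unit'' goods of value $1$. The max-min value is $\OPT = n(n-1)$, witnessed by awarding each large good to a distinct agent and grouping all unit goods into the remaining bundle. The allocation in which each of agents $1, \dots, n-1$ receives one large good together with $n-1$ unit goods, while agent $n$ receives the leftover $n-1$ unit goods, is EQ1 (the EQ1 condition involving agent $n$ is exactly saturated by removing a large good) and attains minimum utility $n - 1 = \OPT/n$, matching the bound.
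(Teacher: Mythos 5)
Your proof is correct, and it reaches the $\frac{1}{n}$ bound by a genuinely different bookkeeping than the paper's. The paper's argument takes, for each non-reference bundle $A_k$, the \EQ{1}-removable good $g_k$, forms the explicit allocation $B$ in which each $k \neq i$ keeps only $g_k$ and the reference agent receives everything else, and shows $v(B_i) \geq \OPT{}$ by a pigeonhole: only the $n-1$ goods $g_k$ can individually exceed $v(B_i)$, and all remaining goods sum to exactly $v(B_i)$, so no allocation can give all $n$ agents value above $v(B_i)$; combining with $v(B_i) = v(A_i) + \sum_{k\neq i}(v(A_k)-v_{g_k}) \leq n\, v(A_i)$ finishes. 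You instead threshold goods at $v(A_{i^*})$, observe that \EQ{1} forces each bundle to hold at most one big good and forces the removable good to \emph{be} the big good wherever one is present, and then play the upper bound $(n+|N|-1)\,v(A_{i^*})$ on total small-good value against the lower bound $|N|\cdot\OPT{}$ supplied by the $|N|$ necessarily big-good-free bundles of an optimal allocation. Both routes are elementary and both pigeonhole the scarce large goods against the $n$ agents; the paper's is more compact and exhibits an explicit witnessing allocation, while yours yields the sharper parameterized guarantee $\frac{|N|}{n+|N|-1}$ and correctly isolates $|N|=1$ as the extremal configuration. Your tightness instance ($n-1$ goods of value $K \geq n(n-1)$ plus $n(n-1)$ unit goods) differs from the paper's ($n-1$ goods of value $n$ plus $n$ unit goods) but both realize the ratio $\frac{1}{n}$ exactly, so the claim of tightness is fully established.
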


\begin{proof}
	Denote the common valuation function by $v$.
	We will first prove that no \EQ{1} allocation can do better than $\frac{1}{n}$. Consider an instance with $n$ agents and $m=2n-1$ goods. For every good $j \in \{ 1, \ldots, n \}$, $v_{j}=1$, and for every good $j \in \{ n+1, \ldots, 2n-1 \}$, $v_{j}=n$. Define an allocation $A$ with $A_i= \{ i, n+i \}$ for all $i \in \{1, \ldots, n-1\}$, and $A_n = \{ n \}$. Observe that $A$ is \EQ{1}. Also, agent $n$ has a utility of only 1, even though it is possible to give every agent a utility of $n$. 
	
	We next prove the approximation guarantee. Suppose that allocation $A$ satisfies \EQ{1}, and let $i$ be an agent with lowest utility under $A$. Then, for any $k \neq i$ such that $A_k \neq \emptyset$, there exists a good $g_k \in A_k$ such that 
	\begin{equation}
		\label{eq:EQ1}
		v(A_i) \ge v(A_k)-v_{g_k}. 
		\end{equation}
	
	Consider the alternative allocation $B$ with $B_k = \{ g_k \}$ for all $k \neq i$, and $B_i = [m] \setminus \cup_{k \neq i} \{g_k \}$. Note that $v(B_i)$ cannot be less than \OPT{}: If it were, then there would exist another allocation $C$ with $v(C_k) > v(B_i)$ for all agents $k$. But this is impossible, because there are at most $n-1$ goods with $v_j > v(B_i)$---namely, each of the $g_k$ goods---and the value of all other goods only sums to $v(B_i)$.
	
	Combining these facts, we get that
	\begin{equation*}
		v(B_i) = v(A_i) + \sum_{k \neq i} (v(A_k)-v_{g_k}) \le n \cdot v(A_i),
	\end{equation*}
	where the inequality follows from \Cref{eq:EQ1}. Rearranging gives us $v(A_i) \ge \frac{1}{n} v(B_i) \ge \frac{1}{n} \OPT{}$, as desired.
\end{proof}

\end{document}